\newtheorem{lemma}{Lemma}[section]
\newtheorem{theorem}[lemma]{Theorem}
\newtheorem{corollary}[lemma]{Corollary}
\theoremstyle{definition}
\newtheorem{definition}[lemma]{Definition}
\theoremstyle{definition}
\newtheorem{observation}[lemma]{Observation}
\theoremstyle{definition}
\newtheorem{example}[lemma]{Example}
\theoremstyle{definition}
\newtheorem{construction}[lemma]{Construction}
\long\def\symbolfootnote[#1]#2{\begingroup%
\def\thefootnote{\fnsymbol{footnote}}\footnote[#1]{#2}\endgroup}
\newcommand{\beq}{\begin{equation}}
\newcommand{\eeq}{\end{equation}}
\newcommand{\bea}{\begin{eqnarray}}
\newcommand{\eea}{\end{eqnarray}}
\newcommand{\ena}{\end{eqnarray}}
\def\mathfrak{\bf}
\renewcommand{\[}{\left[}
\renewcommand{\]}{\right]}
\def\be{\begin{equation}}
\def\ee{\end{equation}}
\def\bea{\begin{eqnarray}}
\def\eea{\end{eqnarray}}
\def\dt#1{\on{\hbox{\bf .}}{#1}}                
\def\Dot#1{\dt{#1}}
\def\IR{\relax{\rm I\kern-.18em R}}
\def\binomial#1#2{\left(\,{\buildrel
{\raise4pt\hbox{$\displaystyle{#1}$}}\over
{\raise-6pt\hbox{$\displaystyle{#2}$}}}\,\right)}
\def\[{\lfloor{\hskip 0.35pt}\!\!\!\lceil}
\def\]{\rfloor{\hskip 0.35pt}\!\!\!\rceil}
\def\un#1{\relax\ifmmode\@@underline#1\else
        $\@@underline{\hbox{#1}}$\relax\fi}
\def\ad{{\kern0.5pt
                   \alpha \kern-5.05pt
\raise5.8pt\hbox{$\textstyle.$}\kern
0.5pt}}
\def\Dot#1{{\kern0.5pt
     {#1} \kern-5.05pt \raise5.8pt\hbox{$\textstyle.$}\kern
0.5pt}}
\def\m{\mu}
\def\bo{{\raise.15ex\hbox{\large$\Box$}}}               
\def\TH{{\raise.2ex\hbox{$\displaystyle \bigodot$}\mskip-4.7mu \llap H
\;}}
\def\face{{\raise.2ex\hbox{$\displaystyle \bigodot$}\mskip-2.2mu \llap
{$\ddot
        \smile$}}}                                      
\def\leftrightarrowfill{$\mathsurround=0pt \mathord\leftarrow \mkern-6mu
        \cleaders\hbox{$\mkern-2mu \mathord- \mkern-2mu$}\hfill
        \mkern-6mu \mathord\rightarrow$}
\def\dvec#1{\vbox{\ialign{##\crcr
        \leftrightarrowfill\crcr\noalign{\kern-1pt\nointerlineskip}
        $\hfil\displaystyle{#1}\hfil$\crcr}}}           
\def\dt#1{{\buildrel {\hbox{\LARGE .}} \over {#1}}}     
\def\sfrac#1#2{{\vphantom1\smash{\lower.5ex\hbox{\small$#1$}}\over
        \vphantom1\smash{\raise.4ex\hbox{\small$#2$}}}} 
\def\bfrac#1#2{{\vphantom1\smash{\lower.5ex\hbox{$#1$}}\over
        \vphantom1\smash{\raise.3ex\hbox{$#2$}}}}       
\def\afrac#1#2{{\vphantom1\smash{\lower.5ex\hbox{$#1$}}\over#2}}    
\def\on#1#2{\mathop{\null#2}\limits^{#1}}               
\newskip\humongous \humongous=0pt plus 1000pt minus 1000pt
\newif\ifdtup
  \def\pp{{\mathchoice
              %
          {
              \kern 1pt%
              \raise 1pt
              \vbox{\hrule width5pt height0.4pt depth0pt
                    \kern -2pt
                    \hbox{\kern 2.3pt
                          \vrule width0.4pt height6pt depth0pt
                          }
                    \kern -2pt
                    \hrule width5pt height0.4pt depth0pt}%
                    \kern 1pt
           }
            {
              \kern 1pt%
              \raise 1pt
              \vbox{\hrule width4.3pt height0.4pt depth0pt
                    \kern -1.8pt
                    \hbox{\kern 1.95pt
                          \vrule width0.4pt height5.4pt depth0pt
                          }
                    \kern -1.8pt
                    \hrule width4.3pt height0.4pt depth0pt}%
                    \kern 1pt
            }
            {
              \kern 0.5pt%
              \raise 1pt
              \vbox{\hrule width4.0pt height0.3pt depth0pt
                    \kern -1.9pt  
                    \hbox{\kern 1.85pt
                          \vrule width0.3pt height5.7pt depth0pt
                          }
                    \kern -1.9pt
                    \hrule width4.0pt height0.3pt depth0pt}%
                    \kern 0.5pt
            }
            {
              \kern 0.5pt%
              \raise 1pt
              \vbox{\hrule width3.6pt height0.3pt depth0pt
                    \kern -1.5pt
                    \hbox{\kern 1.65pt
                          \vrule width0.3pt height4.5pt depth0pt
                          }
                    \kern -1.5pt
                    \hrule width3.6pt height0.3pt depth0pt}%
                    \kern 0.5pt
            }
        }}
  \def\mm{{\mathchoice
   %
                  %
                       {
                             \kern 1pt
               \raise 1pt    \vbox{\hrule width5pt height0.4pt depth0pt
                                  \kern 2pt
                                  \hrule width5pt height0.4pt depth0pt}
                             \kern 1pt}
                       {
                            \kern 1pt
               \raise 1pt \vbox{\hrule width4.3pt height0.4pt depth0pt
                                  \kern 1.8pt
                                  \hrule width4.3pt height0.4pt depth0pt}
                             \kern 1pt}
                       {
                            \kern 0.5pt
               \raise 1pt
                            \vbox{\hrule width4.0pt height0.3pt depth0pt
                                  \kern 1.9pt
                                  \hrule width4.0pt height0.3pt depth0pt}
                            \kern 1pt}
                       {
                           \kern 0.5pt
             \raise 1pt  \vbox{\hrule width3.6pt height0.3pt depth0pt
                                  \kern 1.5pt
                                  \hrule width3.6pt height0.3pt depth0pt}
                           \kern 0.5pt}
                       }}
\def\pd{{\kern0.5pt
                   + \kern-5.05pt \raise5.8pt\hbox{$\textstyle.$}\kern
0.5pt}}
\def\pmd{{\kern0.5pt
                  \pm \kern-5.05pt \raise6.3pt\hbox{$\textstyle.$}\kern1.5pt}}
\def\md{{\mathchoice
   {
      {{\kern 1pt - \kern-6.2pt \raise5pt\hbox{$\textstyle.$}\kern 1pt}}}
    {
      {{\kern 1pt - \kern-6.2pt \raise5pt\hbox{$\textstyle.$}\kern 1pt}}}
    {
      {\kern0.5pt - \kern-5.05pt \raise3.4pt\hbox{$\textstyle.$}\kern0.5pt}}
    {
      {\kern0.5pt - \kern-5.05pt \raise3.4pt\hbox{$\textstyle.$}\kern0.5pt}}}}
\def\ad{{\dot{\alpha}}}
\def\pp{{\mathchoice
              %
          {
              \kern 1pt%
              \raise 1pt
              \vbox{\hrule width5pt height0.4pt depth0pt
                    \kern -2pt
                    \hbox{\kern 2.3pt
                          \vrule width0.4pt height6pt depth0pt
                          }
                    \kern -2pt
                    \hrule width5pt height0.4pt depth0pt}%
                    \kern 1pt
           }
            {
              \kern 1pt%
              \raise 1pt
              \vbox{\hrule width4.3pt height0.4pt depth0pt
                    \kern -1.8pt
                    \hbox{\kern 1.95pt
                          \vrule width0.4pt height5.4pt depth0pt
                          }
                    \kern -1.8pt
                    \hrule width4.3pt height0.4pt depth0pt}%
                    \kern 1pt
            }
            {
              \kern 0.5pt%
              \raise 1pt
              \vbox{\hrule width4.0pt height0.3pt depth0pt
                    \kern -1.9pt  
                    \hbox{\kern 1.85pt
                          \vrule width0.3pt height5.7pt depth0pt
                          }
                    \kern -1.9pt
                    \hrule width4.0pt height0.3pt depth0pt}%
                    \kern 0.5pt
            }
            {
              \kern 0.5pt%
              \raise 1pt
              \vbox{\hrule width3.6pt height0.3pt depth0pt
                    \kern -1.5pt
                    \hbox{\kern 1.65pt
                          \vrule width0.3pt height4.5pt depth0pt
                          }
                    \kern -1.5pt
                    \hrule width3.6pt height0.3pt depth0pt}%
                    \kern 0.5pt
            }
        }}
  \def\mm{{\mathchoice
   %
                  %
                       {
                             \kern 1pt
               \raise 1pt    \vbox{\hrule width5pt height0.4pt depth0pt
                                  \kern 2pt
                                  \hrule width5pt height0.4pt depth0pt}
                             \kern 1pt}
                       {
                            \kern 1pt
               \raise 1pt \vbox{\hrule width4.3pt height0.4pt depth0pt
                                  \kern 1.8pt
                                  \hrule width4.3pt height0.4pt depth0pt}
                             \kern 1pt}
                       {
                            \kern 0.5pt
               \raise 1pt
                            \vbox{\hrule width4.0pt height0.3pt depth0pt
                                  \kern 1.9pt
                                  \hrule width4.0pt height0.3pt depth0pt}
                            \kern 1pt}
                       {
                           \kern 0.5pt
             \raise 1pt  \vbox{\hrule width3.6pt height0.3pt depth0pt
                                  \kern 1.5pt
                                  \hrule width3.6pt height0.3pt depth0pt}
                           \kern 0.5pt}
                       }}
\def\pd{{\kern0.5pt
                   + \kern-5.05pt \raise5.8pt\hbox{$\textstyle.$}\kern
0.5pt}}
\def\pmd{{\kern0.5pt
                  \pm \kern-5.05pt \raise6.3pt\hbox{$\textstyle.$}\kern1.5pt}}
\def\md{{\mathchoice
   {
      {{\kern 1pt - \kern-6.2pt \raise5pt\hbox{$\textstyle.$}\kern 1pt}}}
    {
      {{\kern 1pt - \kern-6.2pt \raise5pt\hbox{$\textstyle.$}\kern 1pt}}}
    {
      {\kern0.5pt - \kern-5.05pt \raise3.4pt\hbox{$\textstyle.$}\kern0.5pt}}
    {
      {\kern0.5pt - \kern-5.05pt \raise3.4pt\hbox{$\textstyle.$}\kern0.5pt}}}}
\def\dslash{\not{\hbox{\kern-2pt $\partial$}}}
\def\Dslash{\not{\hbox{\kern-4pt $D$}}}
\def\pslash{\not{\hbox{\kern-2.3pt $p$}}}
 \newtoks\slashfraction
 \def\slash#1{\setbox0\hbox{$ #1 $}
 \setbox0\hbox to \the\slashfraction\wd0{\hss \box0}/\box0 }
\font\ro=cmsy10                          
\def\kcr{{\hbox{\ro \char'170}}}                
\def\ktl{{\hbox{\ro \char'170}}}        
\def\ktr{{\hbox{\ro \char'170}}}        
\def\kbl{{\hbox{\ro \char'170}}}        
\def\kbr{{\hbox{\ro \char'170}}}        
\def\plpl{\raise-2pt\hbox{$\raise3pt\hbox{$_+$}\hskip-6.67pt\raise0.0pt
\hbox{$^+$}\hskip 0.01pt$}}
\def\mimi{\raise-2pt\hbox{$\raise3pt\hbox{$_-$}\hskip-6.67pt\raise0.0pt
\hbox{$^-$}\hskip 0.01pt$}}
\def\bo{{\raise.15ex\hbox{\large$\Box$}}}               
\def\TH{{\raise.2ex\hbox{$\displaystyle \bigodot$}\mskip-4.7mu \llap H \;}}
\def\face{{\raise.2ex\hbox{$\displaystyle \bigodot$}\mskip-2.2mu \llap {$\ddot
        \smile$}}}                                      
\def\leftrightarrowfill{$\mathsurround=0pt \mathord\leftarrow \mkern-6mu
        \cleaders\hbox{$\mkern-2mu \mathord- \mkern-2mu$}\hfill
        \mkern-6mu \mathord\rightarrow$}
\def\dvec#1{\vbox{\ialign{##\crcr
        \leftrightarrowfill\crcr\noalign{\kern-1pt\nointerlineskip}
        $\hfil\displaystyle{#1}\hfil$\crcr}}}           
\def\dt#1{{\buildrel {\hbox{\LARGE .}} \over {#1}}}     
\def\sfrac#1#2{{\vphantom1\smash{\lower.5ex\hbox{\small$#1$}}\over
        \vphantom1\smash{\raise.4ex\hbox{\small$#2$}}}} 
\def\bfrac#1#2{{\vphantom1\smash{\lower.5ex\hbox{$#1$}}\over
        \vphantom1\smash{\raise.3ex\hbox{$#2$}}}}       
\def\afrac#1#2{{\vphantom1\smash{\lower.5ex\hbox{$#1$}}\over#2}}    
\def\on#1#2{\mathop{\null#2}\limits^{#1}}               
\def\oldheadpic{                                
        \setlength{\unitlength}{.4mm}
        \thinlines
        \par
        \begin{picture}(349,16)
        \put(325,16){\line(1,0){4}}
        \put(330,16){\line(1,0){4}}
        \put(340,16){\line(1,0){4}}
        \put(335,0){\line(1,0){4}}
        \put(340,0){\line(1,0){4}}
        \put(345,0){\line(1,0){4}}
        \put(329,0){\line(0,1){16}}
        \put(330,0){\line(0,1){16}}
        \put(339,0){\line(0,1){16}}
        \put(340,0){\line(0,1){16}}
        \put(344,0){\line(0,1){16}}
        \put(345,0){\line(0,1){16}}
        \put(329,16){\oval(8,32)[bl]}
        \put(330,16){\oval(8,32)[br]}
        \put(339,0){\oval(8,32)[tl]}
        \put(345,0){\oval(8,32)[tr]}
        \end{picture}
        \par
        \thicklines
        \vskip.2in}
\def\oldtitle#1#2#3#4{\oldheadpic\begin{center}\vglue.5in{\large\bf #1}\\[.6in]
        {#2}\\[.1in] {\it Department of Physics and Astronomy}\\
        {\it University of Maryland, College Park, MD 20742}\\[.6in]
        Physics Publication \#{#3}\\ {#4}\\[1.5in] {\bf ABSTRACT}\\[.1in]
        \end{center} \begin{quotation}}                 
\def\oldTitle#1#2#3#4#5#6#7{\oldheadpic\begin{center} \vglue .4in
        {\large\bf #1}\\[.4in]
        {#2}\\[.1in] {\it Department of Physics and Astronomy}\\
        {\it University of Maryland, College Park, MD 20742}\\[.1in]
        {#3}\\[.1in] {\it {#4}}\\ {\it {#5}}\\[.4in]         
        Physics Publication \#{#6}\\ {#7}\\[.5in] {\bf ABSTRACT}\\[.1in]
        \end{center} \begin{quotation}}                 
\def\border{                                            
        \setlength{\unitlength}{1mm}
        \newcount\xco
        \newcount\yco
        \xco=-21
        \yco=12
        \begin{picture}(140,0)
        \put(\xco,\yco){$\ktl$}
        \advance\yco by-1
        {\loop
        \put(\xco,\yco){$\kcr$}
        \advance\yco by-2
        \ifnum\yco>-240
        \repeat
        \put(\xco,\yco){$\kbl$}}
        \xco=158
        \yco=12
        \put(\xco,\yco){$\ktr$}
        \advance\yco by-1
        {\loop
        \put(\xco,\yco){$\kcr$}
        \advance\yco by-2
        \ifnum\yco>-240
        \repeat
        \put(\xco,\yco){$\kbr$}}
        \put(-20,13){\tiny **University of Maryland * Center for String and
         Particle  Theory* Physics Department***University of Maryland *Center
        for String and Particle  Theory** }
        \put(-20,-241.5){\tiny The University of Western Australia * School of Physics **
        The University of Western Australia * School of Physics **
       The University of Western Australia }    
        \end{picture}
        \par\vskip-8mm}
\def\bordero{                                           
        \setlength{\unitlength}{1mm}
        \newcount\xco
        \newcount\yco
        \xco=-31
        \yco=12
        \begin{picture}(140,0)
        \put(\xco,\yco){$\ktl$}
        \advance\yco by-1
        {\loop
        \put(\xco,\yco){$\kclr$}
        \advance\yco by-2
        \ifnum\yco>-240
        \repeat
        \put(\xco,\yco){$\kbl$}}
        \xco=151
        \yco=12
        \put(\xco,\yco){$\ktr$}
        \advance\yco by-1
        {\loop
        \put(\xco,\yco){$\kcr$}
        \advance\yco by-2
        \ifnum\yco>-240
        \repeat
        \put(\xco,\yco){$\kbr$}}
        \put(-20,12){\ooo bacdefghidfghghdhededbihdgdfdfhhdheidhdhebaaahjhhdahba

hgdedge
   hgfdiehhgdigicba}
        \put(-20,-241.5){\ooo ababaighefdbfghgeahgdfgafagihdidihiidhiagfedhadbfd

ecdcdfa
   gdcbhaddhbgfchbgfdacfediacbabab}
        \end{picture}
        \par\vskip-8mm}
\def\headpic{                                           
        \indent
        \setlength{\unitlength}{.4mm}
        \thinlines
        \par
        \begin{picture}(29,16)
        \put(165,16){\line(1,0){4}}
        \put(170,16){\line(1,0){4}}
        \put(180,16){\line(1,0){4}}
        \put(175,0){\line(1,0){4}}
        \put(180,0){\line(1,0){4}}
        \put(185,0){\line(1,0){4}}
        \put(169,0){\line(0,1){16}}
        \put(170,0){\line(0,1){16}}
        \put(179,0){\line(0,1){16}}
        \put(180,0){\line(0,1){16}}
        \put(184,0){\line(0,1){16}}
        \put(185,0){\line(0,1){16}}
        \put(169,16){\oval(8,32)[bl]}
        \put(170,16){\oval(8,32)[br]}
        \put(179,0){\oval(8,32)[tl]}
        \put(185,0){\oval(8,32)[tr]}
        \end{picture}
        \par\vskip-6.5mm
        \thicklines}
\def\title#1#2#3#4{\border\headpic {\hbox to\hsize{#4 \hfill UMDEPP #3}}\par
        \begin{center} \vglue .5in {\large\bf #1}\\[.6in]
        {#2}\\[.1in] {\it Department of Physics and Astronomy}\\
        {\it University of Maryland, College Park, MD 20742}\\[1.5in]
        {\bf ABSTRACT}\\[.1in] \end{center} \begin{quotation}}  
\def\Title#1#2#3#4#5#6#7{\border\headpic
        {\hbox to\hsize{#7 \hfill UMDEPP #6}}\par
        \begin{center} \vglue .4in {\large\bf #1}\\[.4in]
        {#2}\\[.1in] {\it Department of Physics and Astronomy}\\
        {\it University of Maryland, College Park, MD 20742}\\[.1in]
        {#3}\\[.1in] {\it {#4}}\\ {\it {#5}}\\[.5in] {\bf ABSTRACT}\\[.1in]
        \end{center} \begin{quotation}}                 
\def\endtitle{\end{quotation}\newpage}                  
\def\qd{{\kern0.5pt
                   q \kern-5.05pt \raise5.8pt\hbox{$\textstyle.$}\kern
0.5pt}}
\def\dt#1{\on{\hbox{\bf .}}{#1}}                
\def\Dot#1{\dt{#1}}
\def\gfrac#1#2{\frac {\scriptstyle{#1}}
        {\mbox{\raisebox{-.6ex}{$\scriptstyle{#2}$}}}}
\def\gg{{\hbox{\sc g}}}
\def\dt#1{\on{\hbox{\bf .}}{#1}}                
\def\Dot#1{\dt{#1}}
\begin{document}

\def\gfrac#1#2{\frac {\scriptstyle{#1}}
        {\mbox{\raisebox{-.6ex}{$\scriptstyle{#2}$}}}}
\def\gg{{\hbox{\sc g}}}
\border\headpic {\hbox to\hsize{September 2010 \hfill
{UMDEPP 10-014}}}
\par
{\hbox to\hsize{$~$ \hfill
{$~$}}}
\par
{$~$ \hfill
{$~$}}
\par

\setlength{\oddsidemargin}{0.3in}
\setlength{\evensidemargin}{-0.3in}
\begin{center}
\vglue .01in
{\large\bf Automorphism Properties of Adinkras
}\\[.3in]

B.\ L.\ Douglas$^\ddag$\footnote{brendan@physics.uwa.edu.au},
S.\, James Gates, Jr.$^\dag$\footnote{gatess@wam.umd.edu},
 and Jingbo B.\ Wang$^\ddag$\footnote{wang@physics.uwa.edu.au}
\\[0.3in]
${}^\dag${\it Center for String and Particle Theory\\[-1mm]
Department of Physics, University of Maryland\\[-1mm]
College Park, MD 20742-4111 USA}
\\[0.1in]
{\it and}
\\[0.1in]
${}^\ddag${\it School of Physics,University of Western Australia,\\[-1mm]
6009, Perth, Australia}\\[0.6in]

{\bf ABSTRACT}\\[.01in]
\end{center}
\begin{quotation}
{Adinkras are a graphical tool for studying off-shell representations of supersymmetry.  
In this paper we efficiently classify the automorphism groups of Adinkras relative to a set of local parameters. Using this, we classify Adinkras according to their equivalence and isomorphism classes. We extend previous results dealing with characterization of Adinkra degeneracy via matrix products, and present algorithms for calculating the automorphism groups of Adinkras and partitioning Adinkras into their isomorphism classes.}

\endtitle

\section{Introduction}
Several recent studies \cite{Faux05,Doran08,Doran082,Doran083,Gates09,Faux09,Faux092} have introduced and developed a novel approach to the off-shell problem of supersymmetry (see \cite{Doran07} for details). In particular, a graph theoretic tool has emerged to tackle this problem, by encoding representations of supersymmetry into a family of graphs termed Adinkras. This graphical encoding has the advantage of allowing convenient manipulation of these objects, with the goal of achieving a deeper understanding of the underlying representations.

The classification of Adinkras is a natural goal of this work, and various aspects of this have been dealt with extensively in previous studies \cite{Doran08,Doran083,Doran07,Naples09}, in which many of the properties of Adinkra graphs have been ascertained. The works of \cite{Doran08} and \cite{Doran082} relate topological properties of Adinkras to doubly even codes and Clifford algebras.  The 
work in \cite{Naples09} is particularly striking as apparently Betti numbers and functions similar
to those of catastrophe theory seem on the horizon.  However, we will not direct our current
effort toward these observations.

The GAAC (Garden Algebra/Adinkra/Codes) Program began \cite{GR} began with a series of observations of what appeared to be universal matrix algebra structures that seem to occur in
{\em {all}} off-shell supersymmetrical theories.  Two unexpected transformation have occurred
since this start.  First, there emerged Adinkras providing a graphical technology to represent these 
matrix algebras and second the connection of Adinkras to codes.   As the ultimate goal of this program is to provide a definitive classification of {\em {all}} off-shell supersymmetrical theories, 
the appearance of these new unexpected discoveries continue to be encouraging that the goal 
can be reached despite the pessimism surrounding this over thirty year-old unsolved problem.

In this paper, we discuss various definitions of isomorphism of Adinkras. Related to these, we classify the automorphism group of Adinkras in terms of their associated doubly even codes. We characterize this code in terms of local properties of the Adinkras, and hence classify equivalence and isomorphism classes of Adinkras, together with their automorphism group, in terms of a set of efficiently computable local parameters.

The structure of the paper is as follows: Section 2 provides a formal graph theoretic definition of Adinkras, introducing some graph theoretic terms related to their study and discussing some basic properties that are derived from the definition. Section 3 defines the notions of equivalence and isomorphism on the class of Adinkras. We relate Adinkra graphs to doubly even codes, citing a result from \cite{Doran08} that all Adinkras have an associated code. The work of \cite{Doran082} relating properties of Adinkras to Clifford algebras is also discussed, and we define a standard form for Adinkras, used in the proof of later results. Section 4 establishes the main result of this work, classifying the automorphism group of valise Adinkras in terms of the related doubly even code. In Section 5 these results are used to generalize some results of \cite{Gates09}, and provide a polynomial that partitions Adinkras into their equivalence classes. In Section 6 this is extended to isomorphism classes of non-valise Adinkras, and we provide an associated algorithm that accomplishes this partitioning. The appendices detail some of the associated numerical methods and results, and provide some additional examples.

\section{Adinkra Graphs}

\subsection{Graph-theoretic Notation}

The Adinkra graphs dealt with in this work are simple, undirected, bipartite, edge-N-partite, edge- and vertex-colored graphs. Note that in previous work \cite{Faux05,Doran08,Doran082} Adinkras are considered to be directed graphs. However, as this information is naturally encoded into the height assignment component of the vertex coloring, we remove the edge directions here to simplify the analysis. 

We define a few of the graph theoretic terms below. For a more complete treatment, see \cite{Godsil01}.

A \emph{simple, undirected graph G(V,E)} consists of a vertex set $V$ together with an edge set $E \in V \times V$ of unordered pairs of V.

A graph is \emph{bipartite} if its vertex set can be partitioned into two disjoint sets such that no edges lie wholly within either set. Equivalently, this is a graph containing no odd-length cycles.

A graph $G(V,E)$ is \emph{edge-N-partite} is its edge set $E$ can be partitioned into $N$ disjoint sets, such that every vertex $v \in V$ is incident with exactly one edge from each of these $N$ sets.

Finally, a \emph{coloring} of the edge or vertex set of a graph is a partitioning of these sets into different \emph{color classes}. Formally, this restricts the automorphism group of the graph to the subset that setwise stabilizes these color classes - i.e. the subset that does not map vertices (resp. edges) in one color class to vertices (resp. edges) in another.

\subsection{Definition and Properties}
\label{sec:def}

Adinkra graphs were introduced in \cite{Faux05} to study off-shell representations of supersymmetry. Thorough definitions are provided in \cite{Doran08,Doran082,Doran07}, although as mentioned above, they vary slightly from the definition presented here, in that for the purposes of this study we will consider them to be undirected graphs.
The most current and complete definition of Adinkras can be found in Definition 3.2 in the work of \cite{Doran08}. 
With that in mind, we use the following definition of Adinrkas for the remainder of this work. An Adinkra graph $G(V,E)$ is a simple, undirected, $N$-regular bipartite graph with the following properties:

The vertices are colored in two ways:
\begin{itemize}
 \item A coloring corresponding naturally to the bipartition (labeled as \emph{bosons} and \emph{fermions}). 
 \item Each edge is given a \emph{height assignment}, hgt: $V \rightarrow \mathbb{Z}$ such that adjacent vertices are at adjacent heights.
\end{itemize}

The edges are also colored in two ways:
\begin{itemize}
 \item A partition into $N$ color classes corresponding to an edge-$N$-partition of the graph. 
 \item An edge parity assignment $\pi : E \rightarrow \mathbb{Z}^2$. We term these two edge types to be \emph{dashed} and \emph{solid}.
\end{itemize}

Furthermore, the connections in the graph are essentially binary in the following manner. Every path of length two having edge colors $(i,j)$ defines a unique 4-cycle with edge colors $(i,j,i,j)$. All such 4-cycles have an odd number of dashed edges. 

We refer to the valence of an Adinkra as its \emph{dimension}. Hence an Adinkra with $N$ edge colors (not counting edge parity) is an $N$-dimensional Adinkra. There will generally be an assumed ordering of the edge colors from 1 to $N$, with the term $i^\textrm{th}$ \emph{edges} or $i^\textrm{th}$ \emph{edge dimension} refering to all edges of the $i^\textrm{th}$ color. The edge parity is often refered to as \emph{dashedness} (see, for example, \cite{Gates09}. In this work we will also refer to it as the \emph{switching state} of a edge or set of edges, motivated by a forthcoming analogy to switching and two-graphs. Relative to the edge color ordering, the switching state of an $i^\textrm{th}$ edge $e = (u,v)$ will be denoted alternately by $\pi_i(u)$, $\pi_i(v)$ or $\pi(e)$ (refering to the $i^\textrm{th}$ edge of the vertex $u$ or $v$, or simply the edge $e$).

The above conditions imply several additional properties. The edge-$N$-partite restriction requires equal numbers of bosons and fermions, hence the bipartition must consist of two sets of equal size. Together with the 4-cycle property, the edge-$N$-partite requirement also implies that the number of vertices, $|V|$, be equal to $2^n$ for some $n \in \mathbb{Z}^+$, $n \le N$. If $n = N$, then ignoring edge and vertex coloring, the resulting graph is simply the $N$-dimensional hypercube, or $N$-cube. Hence we term the corresponding Adinkra the\symbolfootnote[1]{in Section \ref{sec:main} we show that this is unique to $N$} N-\emph{cube Adinkra}. Where $n \ne N$, we denote the corresponding Adinkra to be an \emph{$(N,k)$ Adinkra}, where $k = N-n$.

When drawing these graphs, we will represent the bipartition by black and white vertices. As in previous work \cite{Doran082} the height assignments will be represented by arranging the vertices in rows, incrementally, according to height.

\begin{example}
\label{ex:1}
 The Adinkras drawn below satisfy all requirements listed above. Note that nodes in the same bipartition (bosons and fermions) must be at the same height (modulo 2), and that every 4-cycle containing only two edge colors has an odd number of dashed/solid edges. In case (i), this is simply every 4-cycle, however case (ii) also contains 4-cycles with four edge colors.

 \begin{center}
 $\begin{array}{ccc}
  \includegraphics[width=4.5cm]{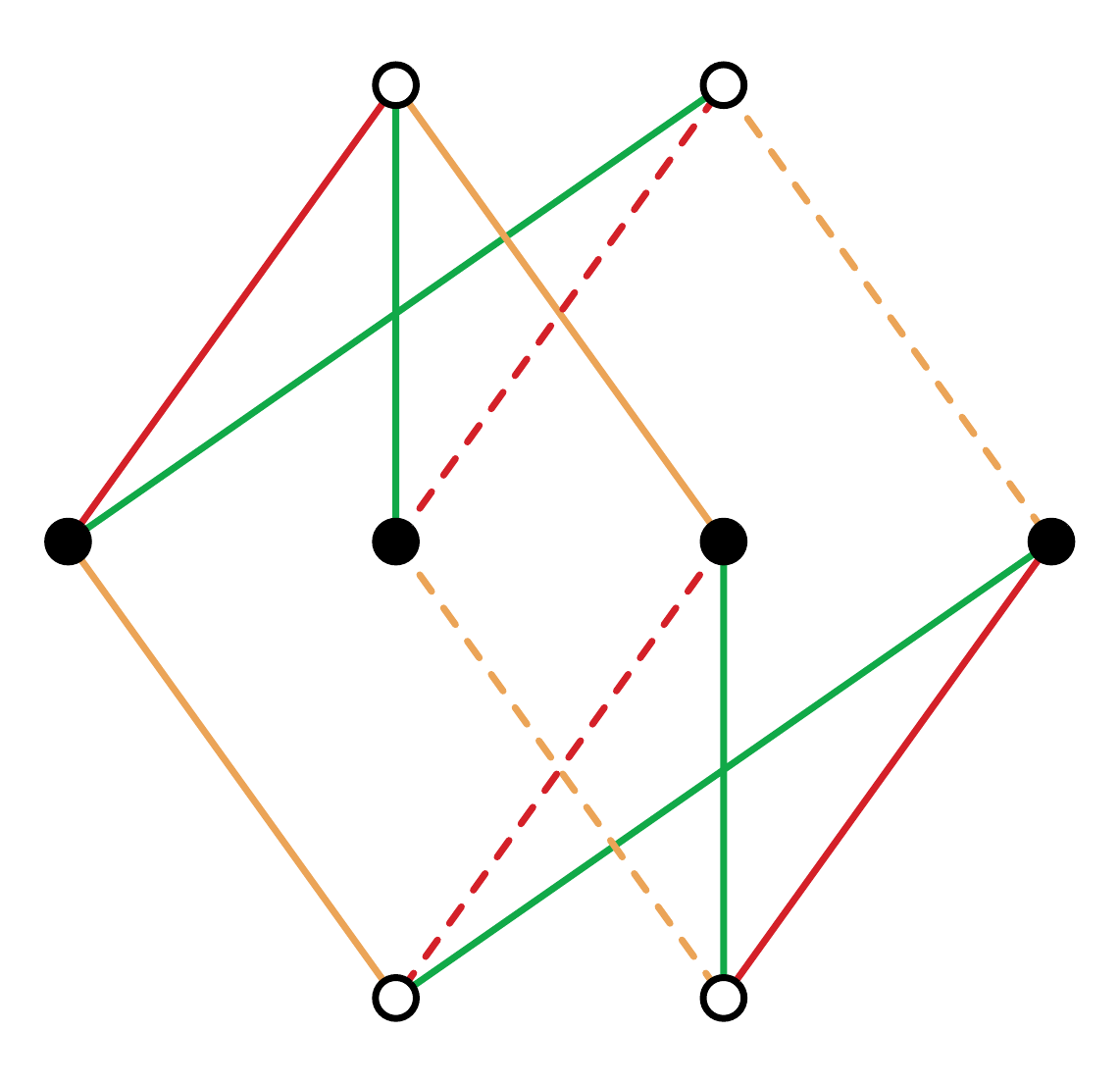} & & \includegraphics[width=4.5cm]{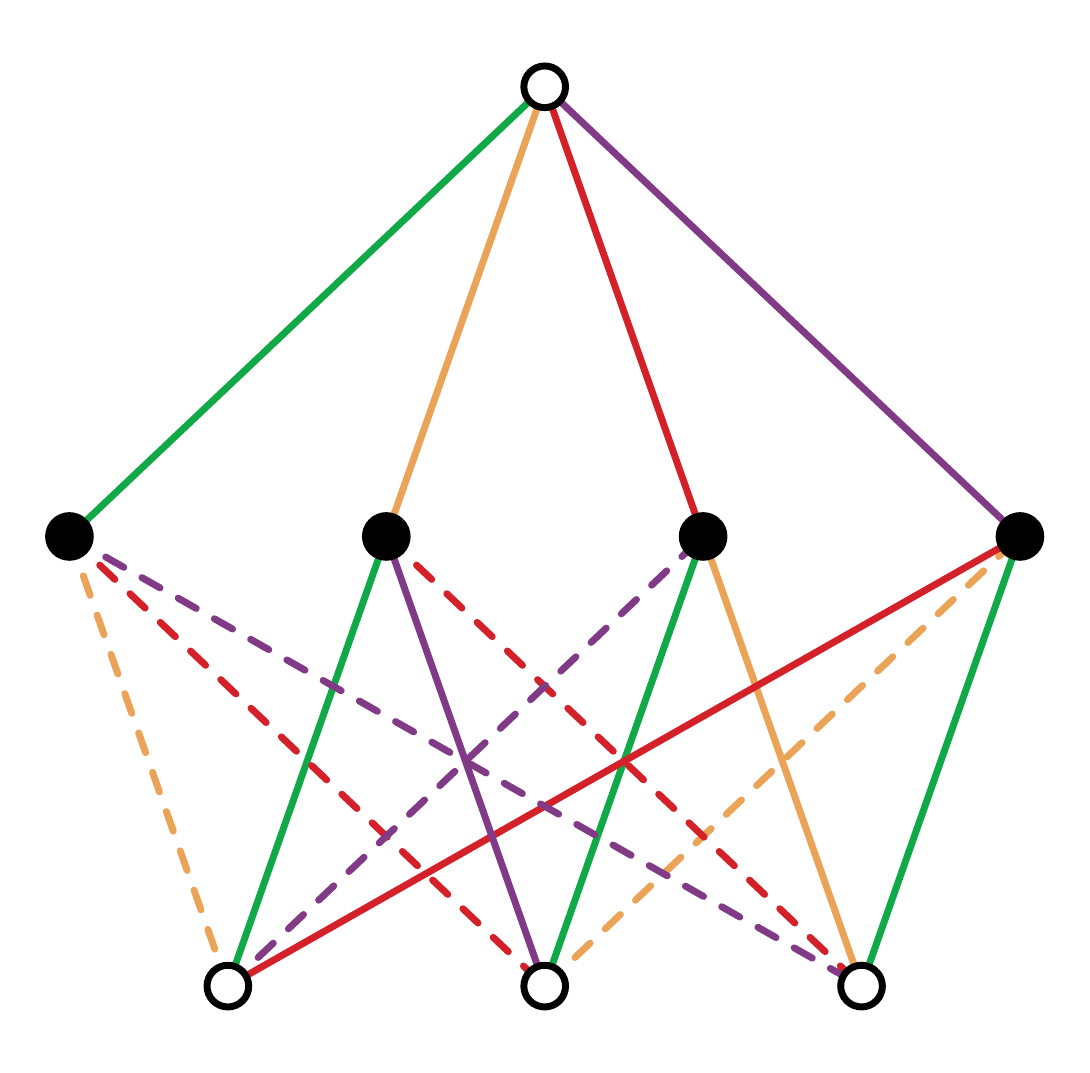} \\
  \textrm{(i) A 3-cube Adinkra.} & & \textrm{(ii) A (4,1) Adinkra.}
 \end{array}$
 \end{center}

\end{example}

It is also worth mentioning at this point that the questions of whether dashed edges correspond to even or odd parity, and which of the black or white node sets correspond to bosons or fermions have been left ambiguous, as they do not impact on the following analysis, and often represent a symmetry in the system. 
Also note that the absolute height value of height assignments are also currently ambiguous, as only relative values of height will be relevant to the following work. Finally, we note that since this work is concerned only with questions of equivalence and isomorphism of Adinkras, the assumption of connectedness will be made throughout to simplify the analysis, without loss of generality.

\section{Alternative Representations and Models of Adinkras}
\label{sec:notation2}

There are other particularly useful ways of defining and modelling Adinkras. Before describing these, it will be appropriate to introduce some more terminology, specifically relating to linear codes and Clifford algebras. We must also consider how to define notions of equivalence and isomorphism between Adinkras.

\subsection{Equivalence \& Isomorphism Definitions}

\begin{definition}
 The \emph{topology} of an Adinkra is defined as the underlying vertex and edge sets with all the colorings (including edge parity and height assignments) removed. When only the colorings associated with edge parity and height assignments are removed, the resulting graph is termed the \emph{chromotopology} of the original Adinkra.
\end{definition}

Two operations relative to a definition of equivalence have been defined on Adinkras \cite{Doran07}. The \emph{vertex lowering / raising} operation consists of changing the height of a given set of vertices while preserving the requirement that adjacent vertices are at adjacent heights.

The operation of \emph{switching} a vertex consists of reversing the parity of all edges incident to it. We note briefly that this preserves the property that 4-cycles with only two edge colors have an odd number of dashed edges.

This switching operation is analogous to Seidel switching of graphs (see \cite{Brouwer,vanLint66}) in which adjacency and non-adjacency is swapped. In this case, adjacency has been replaced by parity / dashedness for the purposes of switching. Continuing this analogy, we define the \emph{switching class} of an Adinkra $G$ to be the set of all Adinkras that can be obtained from $G$ by switching some subset of its vertices.

Any Adinkras in the same switching class will be considered isomorphic. Adinkras related via vertex raising / lowering operations will be considered equivalent, but not necessarily isomorphic.

Hence the definitions of equivalence and isomorphims considered here differ slightly. Permuting the vertex labels, switching and raising / lowering operations all preserve equivalence, whereas the only operations preserving isomorphism are those of switching and permutations of vertex labels.

\begin{example}
 All three Adinkras drawn below are in the same equivalence class, however only the first two are isomorphic.
 \begin{center}
 $\begin{array}{ccccc}
  \includegraphics[width=4.5cm]{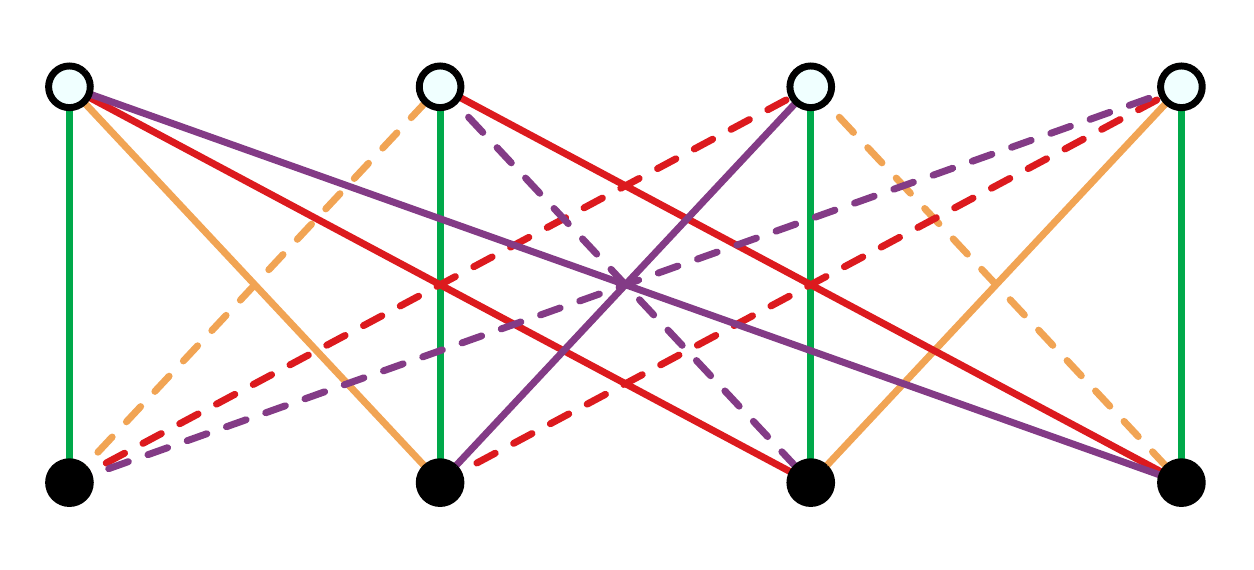} & & \includegraphics[width=4.5cm]{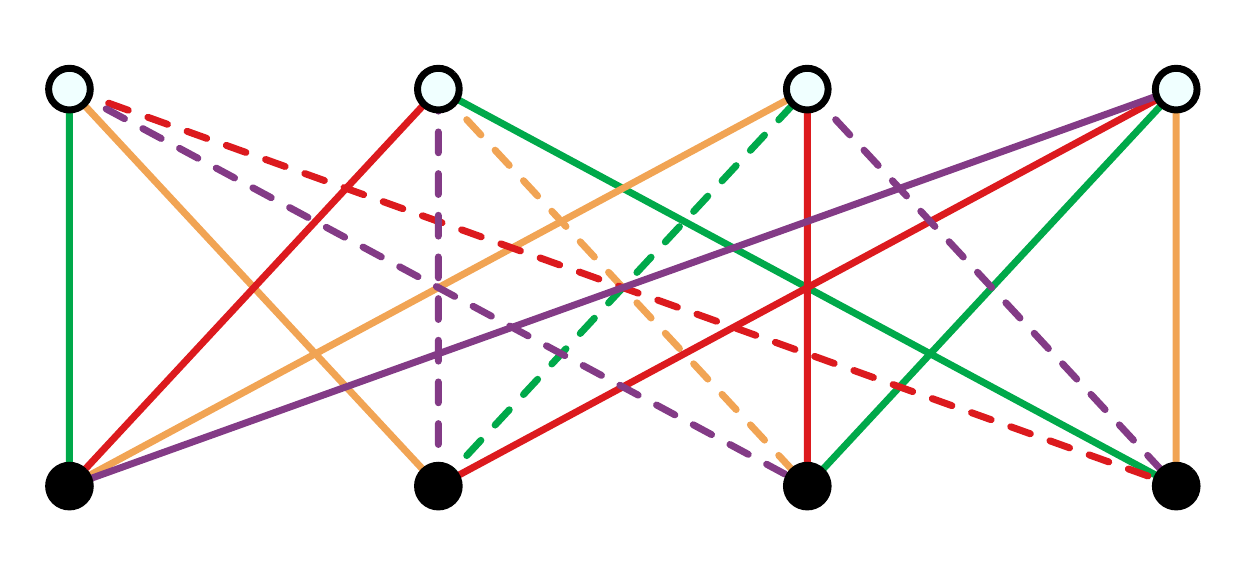} & & \includegraphics[width=4.5cm]{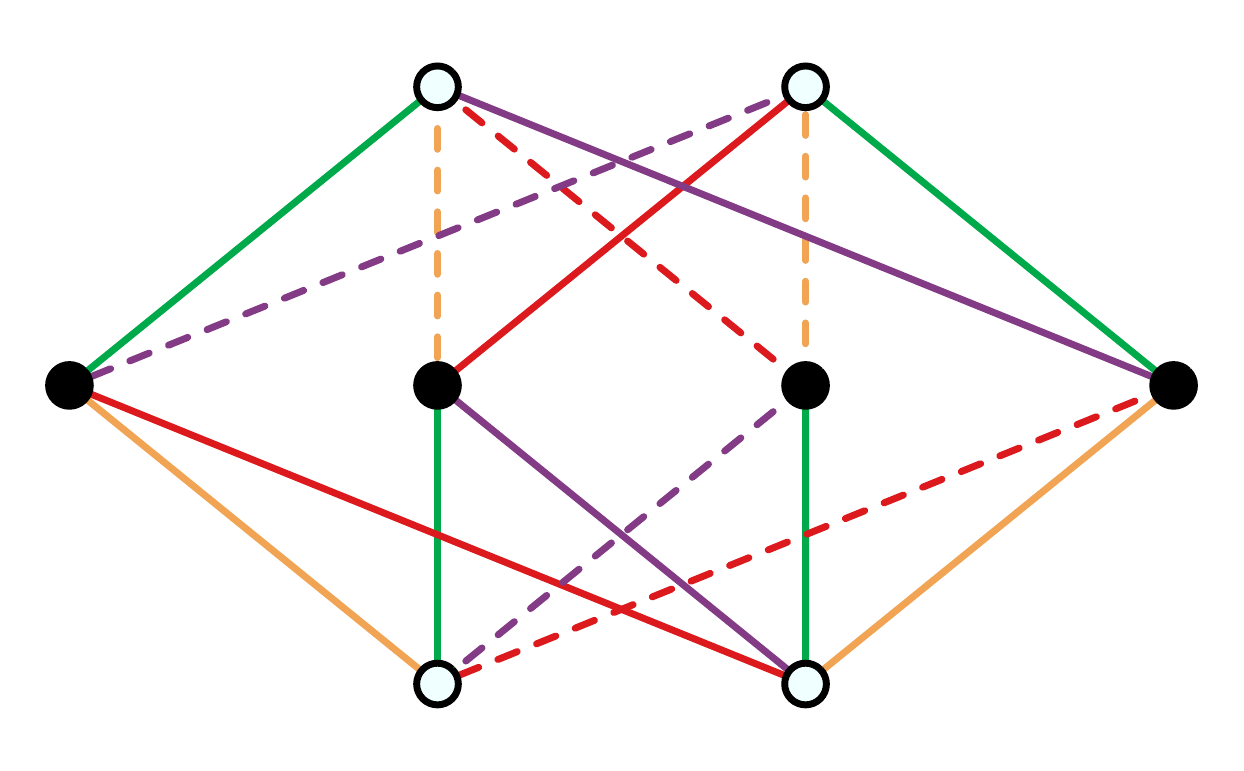} 
 \end{array}$
 \end{center}

\end{example}

An \emph{automorphism} of an $(N,k)$ Adinkra $G(V,E)$ is a permutation of the vertex set, $p : V \rightarrow V$, $p \in \textrm{Sym}(2^{N-k})$ that is also an isomorphism. To introduce the tools necessary to classify the automorphism group properties of Adinkras, it will first be necessary to give a brief description of linear codes and Clifford algebras.

\subsection{Linear Codes}

A binary linear $(N,k)$-code $C$ consists of a set of $2^k$ codewords (bit-strings) of length $N$ forming a group under their bitwise sum modulo 2 (exclusive or), represented by $\boxplus$.
Hence for all $u,v \in C$, $u \boxplus v \in C$. When dealing with a bit-string or codeword $u$ of length $N$ we will always use the notation $u = (u_1,u_2,\ldots,u_N)$, where $u_i$ is the $i^\textrm{th}$ bit of $u$. We will also consider all subsequent codes to be both binary and linear; these terms are henceforth dropped from their description.

A \emph{generating set} of $C$ is a set of $k$ codewords $D = (c_1,\ldots, c_k) \subset C$ such that all codewords in $C$ can be formed by a linear combination of elements of $D$. The value $k$ is called the \emph{dimension} of the code, and is invariant with respect to the particular choice of generating set.

The \emph{weight} of a codeword $u$, denoted by wt($u$), or simply $|u|$, is the number of 1's in $u$. If every codeword in a code has a weight of 0 (mod 4) the code is called \emph{doubly even}. The inner product of two codewords $u$ and $v$ of length $N$ is defined as 
\be
\langle u,v\rangle \equiv \sum_{i=1}^N u_i\,v_i \pmod{2}. 
\ee
Then a doubly even code has the property that any two of its codewords have an inner product of 0.

\begin{definition}
 A \emph{standard form} generating set $D$ of an $(N,k)$-code $C$ is defined to be of the form $D = ( I \; | \; A )$, where $I$ is the $k \times k$ identity matrix and $A$ is a $k \times (N-k)$ matrix being the remainder of each codeword in $D$.
\end{definition}

\begin{example}
By considering a permutation of the bit-string, and an alternative generating set, the code $d_8$, generated by
$\begin{pmatrix}
 1\,1\,1\,1\,0\,0\,0\,0\\[-1mm]
 0\,0\,1\,1\,1\,1\,0\,0\\[-1mm]
 0\,0\,0\,0\,1\,1\,1\,1\\[-1mm]
\end{pmatrix}$
can be written in standard form
$\left(\hspace{-0.2cm}\begin{array}{ccc|ccccc}
 1\,0\,0\,1\,1\,1\,0\,0\\[-1mm]
 0\,1\,0\,1\,1\,0\,1\,0\\[-1mm]
 0\,0\,1\,1\,1\,0\,0\,1\\[-1mm]
\end{array}\hspace{-0.2cm}\right)$.
Note that generating sets will be assumed to be minimal, in that the codewords comprising the set are linearly independent. In this case, a generating set with $n$ codewords corresponds to a code with $2^n$ distinct codewords.
\end{example}

\begin{observation}
  For any doubly even $(N,k)$-code, there exists a generating set $D$ and permutation of the bit-string order $p \in \textrm{Sym}(N)$ such that $D$ is a standard form generating set.
\end{observation}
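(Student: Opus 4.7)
The plan is to prove this by pure linear algebra over $\mathbb{F}_2$: starting from an arbitrary generating set, I will use column permutations (which correspond to permuting bit positions) and row operations (which correspond to choosing a different generating set of the same code) to bring the generator matrix into the form $(I \mid A)$. The doubly even property will play no role; the statement in fact holds for any binary linear $(N,k)$-code.

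First, I would pick any generating set $\{c_1,\dots,c_k\}$ of the doubly even $(N,k)$-code $C$ and assemble it into a $k \times N$ matrix $M$ over $\mathbb{F}_2$ whose $i$th row is $c_i$. By the (assumed) minimality of a generating set stated just before the observation, the rows of $M$ are linearly independent, so $\mathrm{rank}(M)=k$. Since row rank equals column rank, there exists a subset $S \subset \{1,\dots,N\}$ with $|S|=k$ such that the columns of $M$ indexed by $S$ are linearly independent over $\mathbb{F}_2$.

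Next, I would choose a permutation $p \in \textrm{Sym}(N)$ of the bit-string order that sends the indices in $S$ to $\{1,2,\dots,k\}$ (leaving the relative order of the remaining indices arbitrary). Applying $p$ to the columns of $M$ yields a matrix $M' = (B \mid A')$, where $B$ is a $k \times k$ invertible matrix over $\mathbb{F}_2$ and $A'$ is a $k \times (N-k)$ matrix. The left-multiplication $M'' := B^{-1} M' = (I \mid B^{-1}A')$ is a sequence of elementary row operations, so its rows span exactly the same subspace of $\mathbb{F}_2^N$ as the rows of $M'$, namely the permuted code $p(C)$. Hence the rows of $M''$ form a new generating set $D = (I \mid A)$ with $A := B^{-1}A'$, which is in standard form.

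There is essentially no obstacle to worry about: the argument is just Gaussian elimination with an allowed column-permutation step. The only point to be careful about is the distinction between row operations, which preserve the code generated (and thus give a legitimate alternative generating set), and column permutations, which are instead absorbed into the permutation $p$ of bit positions promised by the statement. The doubly even hypothesis is never used, so the observation generalizes verbatim to arbitrary binary linear codes, which is consistent with how it will be applied in the sequel.
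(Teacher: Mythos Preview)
Your argument is correct and is exactly the standard Gaussian-elimination-with-column-permutation proof; note, however, that the paper states this as an \emph{Observation} and gives no proof at all (only the preceding worked example with $d_8$ as illustration), so there is no paper proof to compare against. Your write-up is fine as a proof of the observation, and your remark that doubly-evenness is irrelevant here is accurate.
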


The notion of bit-strings can be effectively used to furnish a further organization of the edge and vertex sets of an Adinkra. If we order the edge colors of an $N$-cube Adinkra from 1 to $N$, then assign each of these edge colors to the corresponding position in an $N$-length bit-string, each of the $2^N$ vertices in the Adinkra can be represented by such a bit-string. These bit-strings are allocated in such a way that two vertices connected by the $i^\textrm{th}$ edge differ precisely in the $i^\textrm{th}$ element of their respective bit-strings.

\begin{example}
\label{ex:ad2}
 Applying this bit-string representation to the vertices of a 3-cube Adinkra yields the graph below. Note that for $N$-cube Adinkra, vertices are connected if and only if their corresponding bit-string representations have a Hamming distance of 1 (differ in only one position).
 
 \begin{center}
  \includegraphics[width=4.5cm]{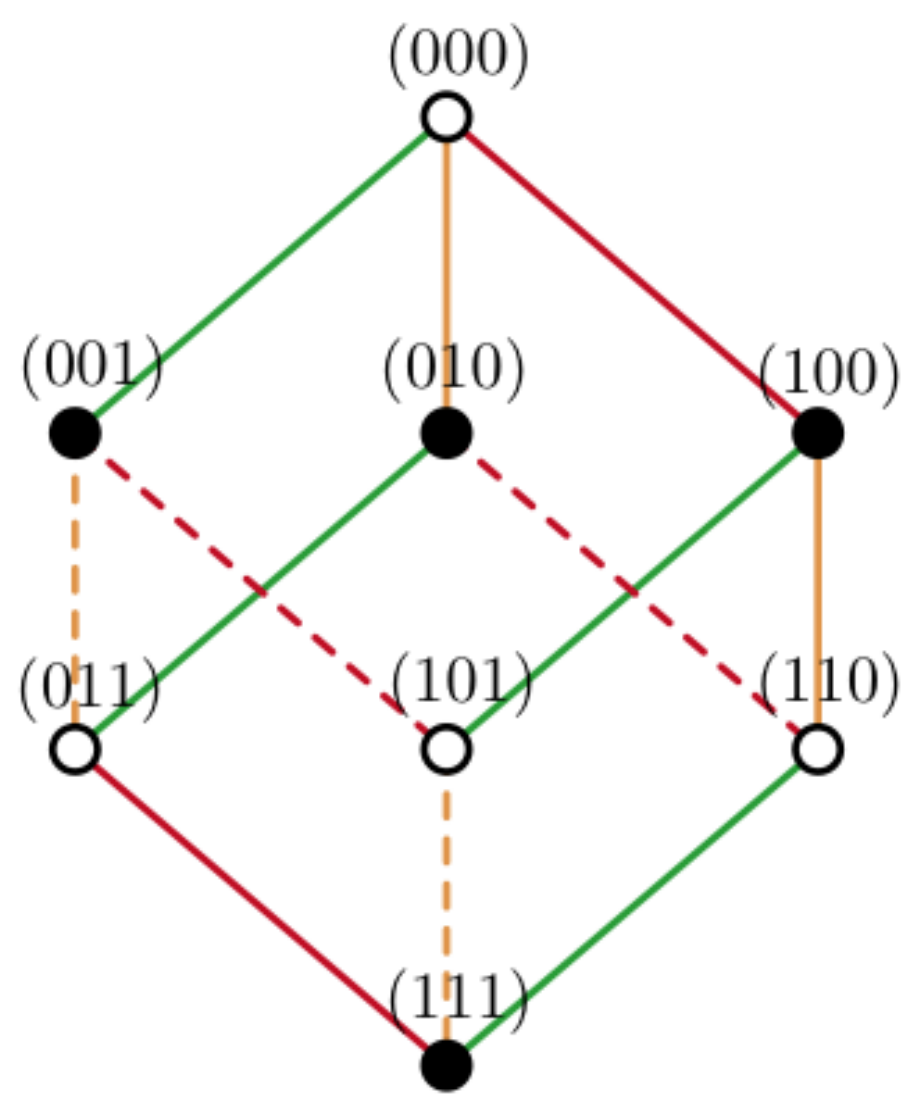}
 \end{center}

\end{example}

An important theorem of \cite{Doran08} links general $(N,k)$ Adinkras to double even codes. We paraphrase it, combined with other results from the same work, below.

\begin{theorem}[\cite{Doran08}]
\label{thm:quotient}
 Every $(N,k)$ Adinkra (up to equivalence) class can be formed from the $N$-cube Adinkra by the process of quotienting via some doubly even $(N,k)$ code.
\end{theorem}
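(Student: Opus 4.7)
The plan is to realize the $N$-cube Adinkra as a universal cover of an arbitrary $(N,k)$ Adinkra $A$, and identify the deck transformations with a doubly even code $C \subset \mathbb{F}_2^N$. First I would fix a base vertex $v_0 \in V(A)$ and attempt to define a labeling $\phi : V(A) \to \mathbb{F}_2^N$ by $\phi(v_0) = 0$ and, for any path $v_0 = u_0, u_1, \ldots, u_m = v$ whose successive edges have colors $i_1, \ldots, i_m$, setting
\[
\phi(v) \;=\; e_{i_1} + e_{i_2} + \cdots + e_{i_m} \pmod{2}.
\]
By the 4-cycle property (Section~\ref{sec:def}), any two paths from $v_0$ to $v$ differ by a sequence of length-$2$ swaps within $(i,j,i,j)$ squares; each such swap leaves the sum $\sum e_{i_k}$ unchanged. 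Hence $\phi$ is well defined on $V(A)$ modulo the subgroup
\[
C \;:=\; \{\,\phi(\gamma) : \gamma \text{ a closed walk based at } v_0\,\} \;\subset\; \mathbb{F}_2^N,
\]
which is clearly closed under $\boxplus$ (concatenation of closed walks) and thus a linear subcode.

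Second, I would verify the quotient structure. The induced map $\bar\phi : V(A) \to \mathbb{F}_2^N/C$ is injective because equal labels produce a closed walk in $A$, and surjective because every coset is realised by translating along the edge colors from $v_0$. Since $|V(A)| = 2^{N-k}$, this forces $|C| = 2^k$, so $C$ is an $(N,k)$-code. Moreover, by construction adjacent vertices of $A$ differ in exactly one coordinate of their labels, so the adjacency of $A$ agrees with the adjacency of $\mathbb{F}_2^N/C$ when the latter is given the chromotopology inherited from the $N$-cube. Up to equivalence (which absorbs the height assignment and, by choosing a consistent vertex-switching, the edge parity), this establishes $A$ as the Adinkra quotient of the $N$-cube by $C$.

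The main obstacle is showing that $C$ is doubly even, i.e. that $|c| \equiv 0 \pmod 4$ for every $c \in C$. I would use the identity $|u \boxplus v| = |u| + |v| - 2\langle u, v\rangle$, which implies that double-evenness on a generating set of $C$, together with pairwise orthogonality of generators, propagates to all of $C$. To obtain a convenient generating set I would choose a spanning tree $T$ of $A$; then each non-tree edge of $A$ yields one fundamental cycle in $A$, and its $\phi$-image generates $C$. The dashedness rule — every two-color 4-cycle carries an odd number of dashed edges — fixes the edge parities of these fundamental cycles once a switching representative is chosen, and I would use this constraint together with a pair-of-colors counting argument (counting, for a fundamental codeword $c$, the pairs $\{i,j\}$ with $c_i = c_j = 1$ and relating them to the number of 4-cycles in the projection onto those colors) to force $|c| \equiv 0 \pmod 4$ and $\langle c, c'\rangle = 0$ for distinct generators. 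This is the delicate technical step and is the heart of the theorem; the structural pieces above are essentially combinatorial bookkeeping once it is in place.
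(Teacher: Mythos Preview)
The paper does not contain a proof of this statement: Theorem~\ref{thm:quotient} is explicitly attributed to \cite{Doran08} and is only paraphrased here, with the sentence immediately following it merely \emph{describing} what the quotienting process means rather than arguing that every $(N,k)$ Adinkra arises this way. So there is nothing in the present paper to compare your argument against; the authors take this result as input and build on it (e.g.\ in the proof of Theorem~\ref{thm:construction}).

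That said, your outline is the standard covering-space argument and is essentially the one used in \cite{Doran08}. The well-definedness of $\bar\phi$ and the counting $|C|=2^k$ are routine, as you say. The genuine content is exactly where you flag it: forcing $|c|\equiv 0\pmod 4$. Your proposed ``pair-of-colors counting argument'' is vague as written; the clean way to finish is to fix a switching representative so that the induced $(N-k)$-cube sub-Adinkra is in standard form (in the sense of Definition~\ref{def:stdcube}), then compute directly the total edge parity accumulated along the closed walk $\gamma_{i_1}\gamma_{i_2}\cdots\gamma_{i_m}$ corresponding to a codeword $c$ of weight $m$. The anticommutation of Clifford generators contributes a sign $(-1)^{\binom{m}{2}}$, and consistency of the quotient identification (the two identified vertices must carry the same edge parities, hence the loop must close with trivial sign) forces $\binom{m}{2}\equiv 0\pmod 2$, i.e.\ $m\equiv 0$ or $1\pmod 4$. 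Bipartiteness already gives $m$ even, so $m\equiv 0\pmod 4$. Pairwise orthogonality of generators then follows from the same sign bookkeeping applied to $c\boxplus c'$, and your identity $|u\boxplus v|=|u|+|v|-2\langle u,v\rangle$ propagates double-evenness to all of $C$. If you make that parity computation explicit, your sketch becomes a complete proof.
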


This quotienting process works as follows. We start with a doubly even $(N,k)$ code $C$ and an $N$-cube Adinkra with a corresponding edge color ordering and bit-string representation. We then identify vertices (and their corresponding edges) related via any codeword in $C$, ensuring first that identified vertices have identical edge parities relative to the ordering of edge colors. Thus an $(N,k)$ code identifies groups of $2^k$ vertices, and reduces the order of the vertex set from $2^N$ to $2^{N-k}$, producing an $(N,k)$ Adinkra.

\subsection{Clifford Algebras}

\begin{definition}
 We will consider the \emph{Clifford algebra} $Cl(n)$ to be the algebra over $\mathbb{Z}_2$ with $n$ multiplicative generators $\gamma_1,\gamma_2,\ldots,\gamma_n$, with the property
 \be
  \{ \gamma_i, \gamma_j \} = 2 \delta_{ij} \mathbf{1}, 
  \ee
 where $\delta_{ij}$ is the Kroenecker delta. In other words, each of the generators is a root of 1, and any two generators anticommute.
\end{definition}

It will be convenient to view the 4-cycle condition of Adinkras in terms of the anticommutativity property of Clifford generators. If we consider each position in the bit-string representation of a vertex of an $N$-cube Adinkra to correspond to a given generator of $Cl(N)$, with the vertex itself being related to the product of these Clifford generators, then the $i^\textrm{th}$ edge dimension can be naturally viewed as the transformations corresponding to left Clifford multiplication by the $i^\textrm{th}$ Clifford generator.

For example, a vertex with bit-string representation $(01100)$ corresponds to the product of Clifford generators $\gamma_2 \gamma_3 \in Cl(5)$, and the edge connecting the vertex $(01100)$ to $(01101)$ is related to the mapping $\gamma_2 \gamma_3 \mapsto (\gamma_5)(\gamma_2 \gamma_3) = \gamma_2 \gamma_3 \gamma_5$. Then the condition that every 4-cycle containing exactly two edge colors has an odd number of dashed edges corresponds to the anticommutativity property of Clifford generators, $\gamma_i \gamma_j = -\gamma_j \gamma_i$.

In connection with this Clifford generator notation, we define a standard form for the switching state of an $N$-cube Adinkra.

\begin{definition}
\label{def:stdcube}
 A \emph{standard form} $N$-cube Adinkra has edge parity as follows. The $i^\textrm{th}$ edge of vertex $v = (v_1,v_2,\ldots,v_N)$ has edge parity given by
\be
 \pi_i(v) \equiv (v_1 + v_2 + \ldots + v_{i-1}) \pmod{2}. 
 \ee
 This corresponds to left Clifford multiplication of $v$ by $\gamma_i$. Note that this standard form is defined relative to some given bit-string labeling of the vertex set. Since fixing the labeling of a single vertex fixes that of all vertices, we will sometimes refer to a standard form as being relative to a \emph{source node}, being the vertex with label $(0,0,\ldots,0)$, or simply standard form relative to 0.
\end{definition}

\begin{example}
\label{ex:3}
 The 3-cube Adinkra of Example \ref{ex:ad2} is in standard form. Vertices are ordered into heights relative to their bit-strings, such that heights range from 0 (at the bottom) to 3, and vertices at height $i$ have weight $i$. The edges are ordered from left to right, such that green corresponds to $\gamma_1$ and red corresponds to $\gamma_3$.
 
\end{example}

As we have seen, many\footnote{As we have not yet investigated the case of {\em {gnomon Adinkras}} defined by the `zippering' process 
\newline $~~~\,~~$
presented in {\cite{X}}, extension of our results to these require further study.} Adinkras can be formed by quotienting the $N$-cube Adinkra with respect to some doubly even code. However in practice this method can be quite inefficient. The following alternative method for constructing an $(N,k)$ Adinkra, with associated code $C$ is due to G. Landweber (personal communication), and is used by the Adinkramat software package:

\begin{enumerate}
 \item Start with the standard form $(N-k)$-cube Adinkra induced on the first $N-k$ edge dimensions.
 \item Find a standard form generating set for $C$, of the form $D = ( I \; | \; A )$.
 \item Associate the $(N-k+i)^\textrm{th}$ edge dimension with the product of Clifford generators given by $A_i$, the $i^\textrm{th}$ row of A.
 \item The $i^\textrm{th}$ edge connects a vertex $v$ to the vertex $v A_i$, with switching state corresponding to right Clifford multiplication of $v$ by $A_i$, up to a factor of $(-1)^F$.
\end{enumerate}

The factor of $(-1)^F$, termed the fermion number operator, simply applies a $(-1)$ factor to fermions, and leaves bosons unchanged. This factor is required to ensure the parity of an edge remains the same in either direction. Again, note that the choice of whether even- or odd-weight vertices are bosons remains ambiguous. For simplicity, we will consider odd-weight vertices to be fermions for the purposes of the $(-1)^F$ factor, with the symmetry encoded by a graph-wide factor of $\pm 1$ in the switching state of the extra $k$ edge dimensions, corresponding to two potentially inequivalent $(N,k)$ Adinkras.

It remains to be seen whether this construction method yields all $(N,k)$ Adinkras. To our knowledge this has not been explicitly proven in previous work, hence we present a proof of this as follows.
\begin{theorem}
\label{thm:construction}
 All $(N,k)$ Adinkras of a given equivalence class can be formed from the $(N-k)$-cube Adinkra by the construction method detailed above.
\end{theorem}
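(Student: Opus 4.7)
The plan is to reduce the Landweber construction to the quotienting procedure of Theorem~3.4. Given any $(N,k)$ Adinkra $\mathcal{A}$, that theorem identifies its equivalence class with a quotient of the $N$-cube Adinkra by some doubly even $(N,k)$ code $C$. By Observation~3.3, after a permutation of the bit-string positions---an equivalence operation amounting to relabeling edge colors---we may assume $C$ admits a standard-form generating set $D = (I \mid A)$. The strategy is then to exhibit an explicit bijection between the vertices of the quotient and those produced by the construction, and to verify that edge incidences, height assignments, and switching states all match up to equivalence.

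For the vertex and edge structure: choosing as quotient representatives those cube-vertices on which the identity coordinates of $C$ vanish gives a bijection with $\{0,1\}^{N-k}$. Under the corresponding $N-k$ edge colors, these representatives form a standard-form $(N-k)$-cube Adinkra, realizing Step~1 of the construction. An edge of color $N-k+i$ out of a representative $v$ exits the canonical set, but adding the codeword $c_i = (e_i, A_i)$ returns us to the canonical representative $v \oplus A_i$; hence in the quotient this edge connects $v$ to $v \oplus A_i$, which is precisely what Step~3 specifies via the Clifford product indexed by the row $A_i$. Height assignments and the boson/fermion bipartition are both fixed by the Hamming weight of the representative modulo~$2$, so these agree on both sides.

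The delicate remaining task---and the main obstacle---is matching switching states. The standard form of the $N$-cube (Definition~3.8) assigns each edge the parity of left Clifford multiplication by the corresponding $\gamma_j$. For the identifications under $C$ to descend to well-defined edge parities on the quotient, the signs accumulated in traversing any loop built from a codeword $c_i$ must cancel; this reduces to the combinatorial requirement that $|c_i|(|c_i|-1)/2$ be even, i.e.\ $|c_i| \equiv 0 \pmod 4$, which is precisely the double-even hypothesis on $C$. Descending these parities, the $k$ added edges inherit a switching state that, up to the $(-1)^F$ factor accounting for the sign difference between left and right Clifford multiplication on odd-weight vertices, agrees with right Clifford multiplication by $A_i$, matching Step~4. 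The global $\pm 1$ ambiguity on the added $k$ dimensions corresponds to the two potentially inequivalent $(N,k)$ Adinkras noted in the discussion following the construction, and together these cases exhaust every equivalence class.
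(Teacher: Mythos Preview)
Your strategy of directly identifying the quotient with the Landweber construction is different from, and in some ways cleaner than, the paper's argument. The paper proceeds indirectly: it first argues that any quotient, after putting its induced $(N{-}k)$-cube into standard form (invoking a forward reference to Lemma~\ref{thm:unique}), has the parities on its remaining $k$ edge dimensions determined up to a single global sign by the 4-cycle condition---so at most two equivalence classes arise per code. It then separately verifies, by an explicit three-case check of the 4-cycle condition, that the construction's output is a valid Adinkra. Combining these with Theorem~\ref{thm:quotient} gives the result. Your route, if completed, would avoid both the forward reference and the case analysis.

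There is, however, a real gap. You assert that the descended parities on the extra $k$ edge dimensions match Step~4 (right Clifford multiplication by $A_i$ up to $(-1)^F$), but you do not compute this. The doubly-even condition guarantees that quotient parities can be made \emph{consistent}---some switching makes identified vertices agree---but it does not by itself tell you \emph{which} parities result. Concretely, on a representative $v=(v',0)$ the raw $N$-cube parity of edge $N{-}k{+}j$ is just $|v'|\pmod 2$, whereas the construction demands $|v'|$ plus the right-Clifford term in $A_j$; the discrepancy is precisely the sign incurred when identifying $\Gamma_{v+e_{N-k+j}}$ with its canonical representative, and working that out is a Clifford calculation of the same order as the paper's case~(ii) verification. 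Without it the argument is incomplete. (A smaller point: in this paper's conventions permuting edge colors is \emph{not} an equivalence operation---see the discussion in Appendix~\ref{app:iso}---so the appeal to it in your first paragraph is formally illicit, though in practice harmless since the construction is tacitly stated per code.)
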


\begin{proof}
As we are only considering equivalence classes, the height assignments will be ignored.

First consider an $N$-cube Adinkra quotiented with respect to an $(N,k)$ code with standard form generating set $D = ( I \; | \; A )$, producing an $(N,k)$ Adinkra $G(V,E)$. Assume without loss of generality that the vertices remaining after quotienting all have fixed $(N-k+1)^\textrm{th} \rightarrow N^\textrm{th}$ bit-string characters (for instance all fixed as 0). Then the rows of $D$ do indeed represent the vertices identified by the quotienting process. We want to show that, up to isomorphism, there are at most two different ways this quotienting operation can be performed.

Choosing only $N-k$ of the edge dimensions of $G$ yields an induced $(N-k)$-cube Adinkra, $H(V,E')$, with $E' \subset E$. We will see in Section \ref{sec:main} that for fixed $m$, all $m$-cube Adinkras belong to a single equivalence class. Hence without loss of generality $H$ can be considered to be in standard form. This in turn fixes the switching state of all edges belonging to the chosen $N-k$ edge dimensions. Now only a single degree of freedom remains in the switching states of the remaining $k$ edge dimensions, due to the 4-cycle condition. In other words, fixing the switching state of any one of the remaining edges fixes all remaining edges. This choice corresponds to a graph-wide factor of $\pm 1$ in the switching state of the additional $k$ edge dimensions.

Since this graph-wide factor of $\pm 1$ matches the difference between the two possibly inequivalent Adinkras produced by the above construction method, it remains to show that the construction method does indeed produce valid $(N,k)$ Adinkras. Hence me must verify that the 4-cycle condition holds.

As odd-weight vertices are considered to be fermions, the $(-1)^F$ factor can be replaced by an additive factor of $\sum_{j=1}^{N-k} x_j$, for a given vertex $x \in V$. Then for $i > N-k$, $x \in V$, $A_i = (a_1,a_2,\ldots,a_{N-k})$, 
\begin{align}
 \nonumber \pi_i(x) &\equiv \sum_{j=1}^{N-k} x_j \;+\; x \ .\  A_i \\
 &\equiv \sum_{j=1}^{N-k} x_j \;+\; x_2 (a_1) + x_3 (a_1 + a_2) + \ldots + x_{N-k} (a_1 + a_2 + \ldots + a_{(N-k-1)})\pmod{2}.
\end{align}

The 4-cycles in $G$ with two edge colors $i$ and $j$ will be split into three cases:
\begin{itemize}
 \item[(i)] $i,j \le N-k, i < j$
 \item[(ii)] $i \le N-k, j > N-k$ and
 \item[(iii)] $i,j > N-k$.
\end{itemize}
Note that a given vertex $x \in V$ defines a unique such 4-cycle. In case (i), consider two antipodal points of such a 4-cycle, $x$ and $x+i+j$. Then the sum of the edge parities of the 4-cycle equals:
\begin{align}
 &(x_1 + \ldots + x_{i-1}) \;+\; (x_1 + \ldots + x_{j-1}) \;+\;  \nonumber  \\
 &(x_1 + \ldots + x_{i-1}) \;+\; (x_1 + \ldots +\; (x_j + 1) \;+ \ldots + x_{j-1}) \equiv 1 \pmod{2}.
\end{align}

Hence the 4-cycle condition holds (note that this is the only case present in the construction of the standard form $N$-cube Adinkra).

For case (ii), the antipodal points are $x$ and $z$, where
\begin{align}
 z \equiv & ~(x_1 + a_1, \ldots, x_i + a_i + 1, \ldots, x_{N-k} + a_{N-k})   \nonumber  \\
 \equiv & ~x + A_j + i \pmod{2},
\end{align}
and by substituting the equation of (1) above, the sum becomes:
\begin{align}
 &(x_1 + \ldots + x_{i-1}) \;+\; (\sum_{l=1}^{N-k} x_l + x . A_j) \;+\; (x_1 + \ldots + x_{i-1}) \;+\; \nonumber \\
 &(a_1 + \ldots + a_{i-1}) + (\sum_{l=1}^{N-k} z_l + z . A_j) \pmod{2}.
\end{align}
However $|A_j| \equiv 3 \pmod{4}$, so $|x| \equiv |z| \pmod{2}$. Then the first and third terms cancel, as do the second and fifth, leaving:
\begin{align}
 x.A \;+\; z.A \;+\;& (a_1 + \ldots + a_{i-1}) \nonumber \\
 &\equiv 2(x.A) + 2(a_1 + \ldots + a_{i-1}) + a_2 (a_1) + a_3 (a_1 + a_2)  \nonumber \\
 & {~~~~} + \ldots + a_{N-k} (a_1 + \ldots a_{(N-k-1)} )
  \nonumber \\
 &\equiv |A_j -1| + |A_j - 2| + \ldots + 1 \nonumber \\
 &\equiv \frac{1}{2} |A-1||A| \pmod{2}.
\end{align}
Now $|A| \equiv 3 \pmod{4}$, hence
\begin{align}
 \frac{1}{2} |A-1||A| &\equiv 3 \pmod{4} \nonumber \\
 &\equiv 1 \pmod{2},
\end{align}
and case (ii) is verified.

Case (iii) proceeds similarly to case (ii).

Hence the graphs produced via this construction method are indeed $(N,k)$ Adinkras. Since \cite{Doran08} showed that all $(N,k)$ Adinkras can be produced via the quotienting method, there can be at most two equivalences classes of Adinkras with the same associated code, and hence the Adinkras produced from the two methods must coincide.

\end{proof}

\begin{definition}
\label{def:std form}
 A \emph{standard form} $(N,k)$ Adinkra has switching state given by the above construction, relative to an associated doubly even code.
\end{definition}

\begin{example}
\label{ex:N=4}
 The smallest non-trivial $(N,k)$ Adinkra has parameters $N=4, k=1$, with associated doubly even code $(1111)$. The \emph{standard form} described in definition \ref{def:std form} is shown below, for each choice of the graph-wide factor of $\pm 1$.

 \begin{center}
  \includegraphics[width=6.5cm]{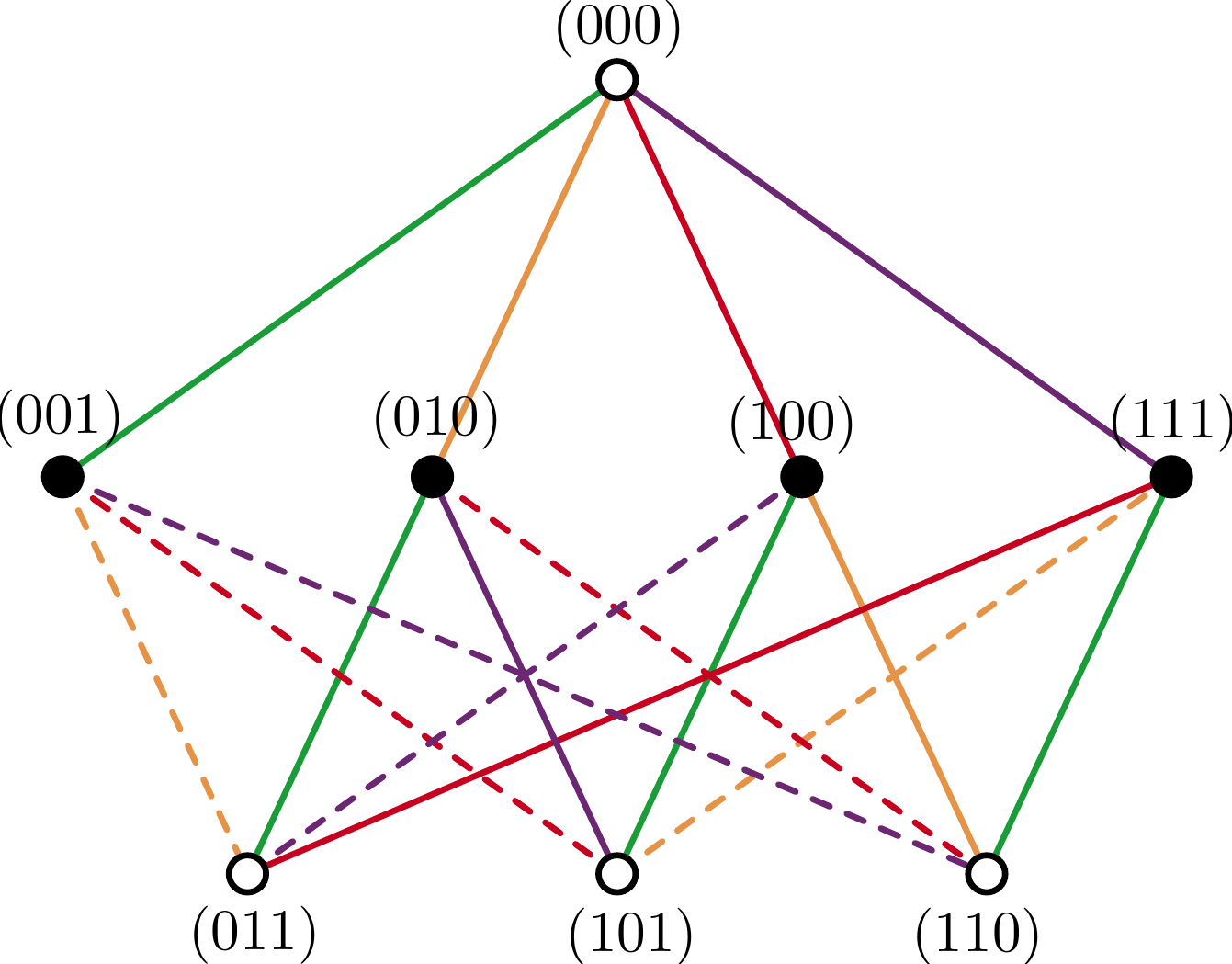} \hspace{1.5cm} \includegraphics[width=6.5cm]{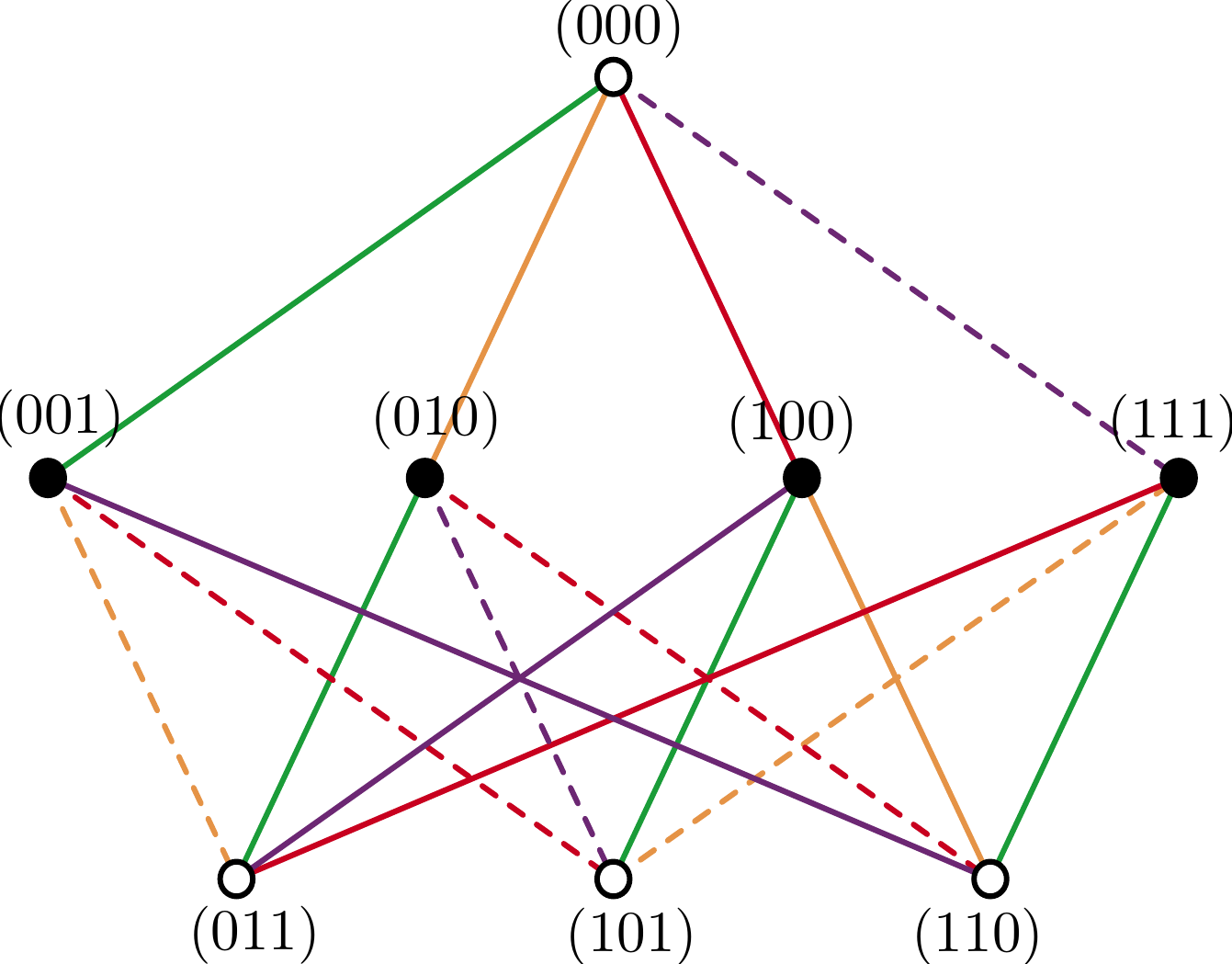}
 \end{center}
 In this case the two Adinkras formed belong to different equivalence classes, as we will demonstrate in the following section.

\end{example}

This construction provides a much simpler practical method of constructing Adinkras, and the standard form will prove convenient to prove later results. Note that all $(N,k)$ Adinkras have an associated doubly even code, with matching parameters of length and dimension. For these purposes the $N$-cube Adinkra is considered an $(N,0)$ Adinkra with a trivial associated code of dimension 0.

\setcounter{equation}{0}
\section{Automorphism Group Properties of Adinkras}
\label{sec:main}

We now have sufficient tools to derive the main result of this paper: classifying the automorphism group of an Adinkra with respect to the local parameters of the graph, namely the associated doubly even code. We start with several observations regarding the automorphism properties of Adinkras.

\begin{lemma}
\label{thm:unique}
 Ignoring height assignments\symbolfootnote[2]{Equivalently, there are only two heights, corresponding to bosons and fermions. Adinkras 
with \newline $~~~\,~~$ 
this property are called \emph{valise} Adinkras. Adinkras with more than two heights are termed
\newline $~~~\,~~$
 \emph{non-valise}.}, the $N$-cube Adinkra is unique to $N$, up to isomorphism.
\end{lemma}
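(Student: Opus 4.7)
The plan is to reduce the isomorphism problem to a switching-equivalence problem on a fixed, labeled hypercube. Any valise $N$-cube Adinkra has $2^N$ vertices and is $N$-regular bipartite, so its underlying topology must be the $N$-dimensional hypercube graph, which is unique up to graph isomorphism. Hence, after relabeling vertices, two such Adinkras $G_1$ and $G_2$ may be regarded as sharing the same vertex and edge sets. The edge-$N$-partition into color classes, together with the $4$-cycle condition, forces each class to be a perfect matching corresponding to one of the $N$ parallel ``directions'' of the cube; so, up to a permutation of the $N$ color labels, $G_1$ and $G_2$ also share their edge coloring. I would then fix a bit-string labeling of the vertex set, as in Example \ref{ex:ad2}, in which the $i$-th color class consists exactly of the edges that flip the $i$-th bit. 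The remaining data distinguishing the two Adinkras is then a pair of edge-parity functions $\pi_1, \pi_2 : E \to \mathbb{Z}_2$.

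The task now becomes showing that $\pi_1$ and $\pi_2$ lie in the same switching class. Set $\delta = \pi_1 \oplus \pi_2$. Since each of $\pi_1, \pi_2$ assigns an odd number of dashed edges to every two-colored $4$-cycle, $\delta$ sums to $0 \pmod{2}$ around every such $4$-cycle. The key step is to upgrade this to vanishing around every cycle: I would invoke the classical fact that the two-colored $4$-cycles generate the cycle space of the $N$-cube over $\mathbb{Z}_2$, which follows from viewing the $N$-cube as the Cayley graph of $\mathbb{Z}_2^N$ on the standard basis generators, so that every closed walk reduces (using only the commutator relations) to a sum of elementary $4$-cycles. Granting this, $\delta$ sums to zero around every cycle of the graph.

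Standard graph cohomology then produces $\delta$ as a coboundary: fix a source vertex $v_0$, set $g(v_0) = 0$, and for any $v$ define $g(v)$ to be the sum of $\delta$ along any $v_0$-to-$v$ path (well-defined by the cycle condition). Then $\delta(uv) = g(u) + g(v) \pmod{2}$, and switching $G_2$ at the vertex set $S = g^{-1}(1)$ transforms $\pi_2$ into $\pi_1$. The composed operation, consisting of the initial vertex relabeling, the permutation of color labels, and the final switching, is the required isomorphism $G_2 \to G_1$.

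The main obstacle I expect is justifying the cycle-space claim that $4$-cycles generate all cycles of the $N$-cube; the cocycle-to-coboundary step and the matching of topologies are routine. If that claim is awkward to cite directly, a constructive alternative is available: pick a spanning tree $T$ of the $N$-cube, switch $G_2$ vertex-by-vertex to make every tree edge solid, and then show by propagating the $4$-cycle condition outward from $T$ that the parities of all non-tree edges are uniquely determined by those along $T$. Performing the same reduction on $G_1$ yields the identical edge-parity function, namely the standard form of Definition \ref{def:stdcube}, giving the isomorphism directly and recovering the cycle-space fact as a byproduct.
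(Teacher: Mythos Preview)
Your proof is correct, and in fact more carefully argued than the paper's. The paper takes exactly your ``constructive alternative'': it picks a source vertex, switches its neighbours to fix the incident edge parities to those of the standard form, and then simply asserts that ``since the 4-cycle restriction holds, this process can be continued consistently for the set of vertices at each subsequent distance.'' That is your spanning-tree/BFS reduction, stated without the justification you supply. Your primary route---showing that the difference cochain $\delta = \pi_1 \oplus \pi_2$ vanishes on two-coloured $4$-cycles, that such $4$-cycles generate the cycle space of the $N$-cube, and hence that $\delta$ is a coboundary---is a genuinely different and more conceptual argument; it makes explicit \emph{why} the layer-by-layer procedure must close up consistently, which the paper leaves implicit. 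One small caveat: in this paper isomorphisms do \emph{not} include permutations of the edge-colour labels (this looser notion is discussed only in Appendix~\ref{app:iso}), so your phrase ``up to a permutation of the $N$ color labels'' should be absorbed into the freedom of choosing the bit-string labeling on $G_2$ so that colour $i$ flips bit $i$---which is precisely what you do in the next sentence, so no real change is needed.
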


\begin{proof}
 It suffices to show that any $N$-cube Adinkra can be mapped to standard form via some set of vertex switching operations. Such a mapping can be trivially constructed. For example, start with a single vertex. The parity of its edge set can be mapped to any desired form by a set of vertex switchings applied \emph{only} to its neighbors. In particular the edge parities can be mapped to those of the corresponding standard form Adinkra. Since the 4-cycle restriction of Section \ref{sec:def} holds, this process can be continued consistently for set of vertices at each subsequent distance from this original vertex, regardless of the initial switching state of the Adinkra.
\end{proof}

\begin{lemma}
\label{thm:vt}
 $N$-cube valise Adinkras are vertex-transitive up to the boson/fermion bipartition. 
\end{lemma}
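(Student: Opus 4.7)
The plan is to combine the uniqueness of the $N$-cube Adinkra (Lemma~\ref{thm:unique}) with the natural $\mathbb{Z}_2^N$ translation action on the vertex set, corrected by an appropriate vertex switching to preserve edge parity. Since we are in the valise setting, there are only two heights, so boson/fermion preservation is the only height constraint; vertex transitivity within each bipartition class is therefore the right notion.

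First I would invoke Lemma~\ref{thm:unique} to reduce to the standard form of Definition~\ref{def:stdcube}, in which the $i^{\textrm{th}}$ edge parity of the vertex $v = (v_1, \ldots, v_N)$ is $\pi_i(v) = v_1 + v_2 + \cdots + v_{i-1} \pmod{2}$. For any bit-string $w$ with $|w|$ even, consider the translation $\phi_w : v \mapsto v + w$. This map preserves weight mod $2$, so it stabilises the boson/fermion bipartition, and it preserves the graph's edge colouring since $(v, v + e_i)$ maps to $(v + w, v + w + e_i)$. What it need not preserve is the switching state: a direct computation gives
\begin{equation}
\pi_i(v + w) - \pi_i(v) \equiv w_1 + w_2 + \cdots + w_{i-1} \pmod{2},
\end{equation}
a constant $c_i(w) \in \mathbb{Z}_2$ depending only on the edge colour $i$.

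Next I would correct for this by composing $\phi_w$ with a switching operation $\sigma_w$. Let $T(w) = \{\, i : c_i(w) = 1 \,\}$; I need a vertex set $S \subseteq V$ whose switching flips exactly the edges of colours $i \in T(w)$ (on every such edge) and no other edges. Writing $f = \mathbf{1}_S$, the condition is $f(v + e_i) + f(v) \equiv \mathbf{1}[i \in T(w)] \pmod{2}$ for every $v$ and every $i$, which is solved by the linear function $f(v) = \sum_{i \in T(w)} v_i \pmod{2}$. This $S$ exists and is well-defined, so $\psi_w := \sigma_w \circ \phi_w$ preserves all edge parities while still preserving bipartition and edge colouring; hence $\psi_w$ is an Adinkra automorphism sending $0 \mapsto w$.

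Finally I would conclude transitivity: given two vertices $u, u'$ in the same bipartition class, set $w = u + u'$, which has even weight, and observe that $\psi_w$ composed with a further automorphism of the form $\psi_u^{-1}$ (or directly, the translation+switching built from $u + u'$) sends $u$ to $u'$. Since every pair of bosons and every pair of fermions are related in this way, the automorphism group acts transitively on each of the two bipartition classes, which is the claim. The main technical obstacle is verifying that $\sigma_w$ really does cancel the parity shift globally and that the composite map respects the $4$-cycle condition; both follow from the linearity of $c_i(w)$ in the coordinates of $v$ and the fact that the switching set $S$ was chosen as the level set of a linear function, so no further case analysis beyond the identity for $f$ above is needed.
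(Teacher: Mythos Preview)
Your proposal is correct and follows essentially the same approach as the paper: reduce to standard form, observe that a translation $v\mapsto v+w$ changes the parity of every $i$-th edge by the constant $c_i(w)=w_1+\cdots+w_{i-1}$, and then compose with a vertex switching that globally flips exactly those edge dimensions. The paper states the existence of the correcting switching more tersely (``any single edge dimension can be switched via a set of vertex switching operations''), whereas you give the explicit solution $f(v)=\sum_{i\in T(w)}v_i$; otherwise the arguments coincide.
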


\begin{proof}
To show this, we begin with a standard form $N$-cube Adinkra $G(V,E)$, with switching state defined by left multiplication by the corresponding Clifford generators, as in definition \ref{def:stdcube}. Since $N$-cube Adinkras are unique up to isomorphism, in the sense of lemma \ref{thm:unique}, this can be done without loss of generality.

Then the switching state of edge $(x,y)$, where $x = i.y$, is of the form $x_1 + x_2 + \ldots + x_{i-1} \pmod{2}$. Consider a non-trivial permutation $\rho : V \mapsto V$. Since $\rho$ is non-trivial, there are two vertices $u,v$ in V, $u \ne v$, such that $\rho : u \mapsto v$. If $\rho$ is to preserve the topology of $G$, we must have $\rho : x \mapsto (u \; v) x, \, \forall \, x \in V$. Note that $\rho$ has no fixed points, and is completely defined in terms of $u.v$. Any automorphism $\gamma$ of $G$ which permutes $V$ according to $\rho$ must also switch some set of vertices in such a way that edge parity is preserved. However $\rho$ maps the edge $(x,y)$, where $x = i.y$, with edge parity $x_1 + x_2 + \ldots + x_{i-1} \pmod{2}$ to the edge $(\rho(x),\rho(y))$, with edge parity of
\be
(x_1 + x_2 + \ldots + x_{i-1}) + ((u_1+v_1) + (u_2+v+2) + \ldots + (u_{i-1} + v_{i-1})) \pmod{2}. 
\ee
Hence the \emph{change} in edge parity of $(x,y)$ does not depend on either endpoint explicitly, only on the mapping $\rho$ and the edge dimension $i$. In other words, $\rho$ either preserves the switching state of all edges of a given edge dimension, or reverses the parity of all such edges. Note that any single edge dimension can be switched (while leaving all other edges unchanged) via a set of vertex switching operations, hence an automorphism $\gamma$ exists for all such $\rho$. Furthermore, $\gamma$ is unique to $\rho$, which is in turn unique to the choice of $(u \; v)$.

Note that if $G$ is in standard form, we term the automorphism mapping vertex $x$ to vertex $0 = (00 \ldots 0)$ to put $G$ in standard form \emph{relative to $x$}. By this terminology, $G$ was initially in standard form \emph{relative to 0}. 

\end{proof}

\begin{corollary}
 The $N$-cube Adinkra is minimally vertex-transitive (up to the bipartition), in the sense that the pointwise stabilizer of the automorphism group is the identity, and $|\textrm{Aut}(G)| = \frac{1}{2}|V| = 2^{N-1}$. In other words, no automorphisms exist that fix any points of the $N$-cube Adinkra.
\end{corollary}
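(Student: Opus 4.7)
The proof of this corollary will follow almost immediately from the structural information already gathered in Lemma 4.2, so my plan is to repackage that information via the orbit-stabilizer theorem rather than redo any combinatorial analysis.

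First, I would invoke Lemma 4.2 to note that every automorphism $\gamma$ of the $N$-cube Adinkra $G(V,E)$ is specified at the level of vertex permutations by the data of a single pair $(u,v)$ with $v = \gamma(u)$, and acts as the translation $\rho:x\mapsto x+(u\oplus v)$ on $V\cong\mathbb{F}_2^N$. In particular, $\gamma$ is uniquely determined by the element $c:=u\oplus v\in\mathbb{F}_2^N$, and the bipartition coloring (bosons vs.\ fermions, distinguished by weight parity) forces $c$ to have even weight.

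For the pointwise stabilizer, I would argue directly: if $\gamma$ fixes some vertex $u$, then $\rho(u)=u$ forces $c=0$, so $\rho$ is the identity permutation. Hence no non-trivial automorphism has any fixed vertex, and the pointwise stabilizer of any $u\in V$ is trivial.

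For the order of $\mathrm{Aut}(G)$, I would finish by the orbit-stabilizer theorem applied to any source vertex $u$. By Lemma 4.2, the orbit of $u$ under $\mathrm{Aut}(G)$ is exactly the bipartition class of $u$, which has size $\tfrac{1}{2}|V|=2^{N-1}$; combined with the trivial stabilizer established above, this yields
\[
|\mathrm{Aut}(G)| \;=\; |\mathrm{Orbit}(u)|\cdot|\mathrm{Stab}(u)| \;=\; 2^{N-1}\cdot 1 \;=\; 2^{N-1},
\]
as claimed. Alternatively, one can simply count the admissible translations $c$ of even weight, of which there are $2^{N-1}$, and observe (again from Lemma 4.2) that each such $c$ is realized by a unique automorphism.

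There is no genuine obstacle here; the only thing requiring a moment of care is making sure the definition of \emph{automorphism} (a vertex permutation that is an \emph{isomorphism}, i.e., equal to $G$ up to switching) is used consistently, so that the conclusion $c=0\Rightarrow\gamma=\mathrm{id}$ really does identify the pointwise stabilizer with the trivial subgroup rather than with some residual switching group.
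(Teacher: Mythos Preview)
Your proposal is correct and is precisely the argument the paper has in mind: the corollary is stated without proof immediately after Lemma~4.2, and your use of the uniqueness clause ``$\gamma$ is unique to $\rho$, which is in turn unique to the choice of $(u\;v)$'' together with orbit--stabilizer is exactly how one unpacks it. Your closing caveat about switching-only automorphisms is also well placed; the paper later makes the same point explicitly (see the footnote in Section~6), confirming that automorphisms with identity permutation are regarded as trivial.
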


Consider any $(N,k)$ Adinkra $G(V,E)$. The sub-Adinrka $H(V,E')$ induced on any set $S$ of $(N-k)$ edge dimensions of $G$ will be an $(N-k)$-cube Adinkra. Also, for all such $S$ there exists some set of vertex switching operations such that the induced Adinkra $H$ is in standard form. In the case where $S = (1,2,\ldots,N-k)$, the induced $H$ is in standard form if and only if $G$ is in standard form.

\begin{observation}
\label{obs1}
 Given some $(N,k)$ Adinkra, containing a standard form $(N-k)$-cube Adinkra induced on the first $(N-k)$ edge dimensions (or equivalently an induced $(N-k)$-cube Adinkra in \emph{any} given form), the set of edge parities of the $i^\textrm{th}$ edge dimension, $i > (N-k)$, are fixed up to a graph wide factor of $-1$. 
\end{observation}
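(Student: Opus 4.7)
The plan is to show that once the switching state of the first $N-k$ edge colors is fixed (realizing the induced $(N-k)$-cube in any chosen form), the 4-cycle axiom forces every edge parity in the $i$-th color class for $i>N-k$ to be determined by the parity of a single distinguished edge of that color, leaving only one global sign of freedom per such color.

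To begin, fix a vertex $v_0 \in V$ and let $e_0$ be the unique edge of color $i$ incident to $v_0$. I claim that for every vertex $v$, the parity $\pi_i(v)$ is determined by $\pi_i(v_0)$ together with the already-fixed parities on colors $1,\ldots,N-k$. Since the induced $(N-k)$-cube is connected, choose a path $v_0=u_0,u_1,\ldots,u_m=v$ whose consecutive edges carry colors $\le N-k$. At each step $u_s\to u_{s+1}$ via an edge of color $j_s\le N-k$, the edge-$N$-partite property produces a unique $i$-colored edge at each endpoint, ending at vertices I denote $\phi_i(u_s)$ and $\phi_i(u_{s+1})$; by the 4-cycle axiom these four edges form a 4-cycle of colors $(j_s,i,j_s,i)$, whose parities must sum to $1 \pmod{2}$, giving
\begin{equation}
\pi_i(u_{s+1}) \equiv \pi_i(u_s) + \pi_{j_s}(u_s) + \pi_{j_s}(\phi_i(u_s)) + 1 \pmod{2}.
\end{equation}
Every term on the right other than $\pi_i(u_s)$ is fixed by the first $N-k$ colors, so iterating along the path propagates $\pi_i(v_0)$ to a value of $\pi_i(v)$.

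The delicate step is \emph{consistency}: two different paths from $v_0$ to $v$ must yield the same propagated parity, i.e.\ the propagation rule must be trivial around any closed walk in the induced $(N-k)$-cube. Since the cycle space of the $(N-k)$-cube is generated by its two-color 4-cycles (those using a pair of colors $j_1,j_2\le N-k$), it suffices to verify triviality around each such 4-cycle; this amounts to adding four instances of the identity above and observing that all contributions cancel in pairs, using that $\pi_{j_s}$ is a well-defined edge function. Alternatively, consistency is immediate \emph{a posteriori}, since Theorem \ref{thm:construction} explicitly exhibits a valid Adinkra realizing each sign choice.

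The main obstacle is precisely this consistency verification; once it is in hand, repeating the argument independently for each $i>N-k$ shows that the global switching state on the last $k$ edge dimensions is determined by the $k$ independent choices $\pi_i(v_0)\in\{0,1\}$, which is exactly the claimed graph-wide $\pm 1$ freedom per color dimension.
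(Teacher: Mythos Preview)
Your argument is correct and is precisely the rigorous version of what the paper only asserts: the paper's justification of this observation is the single sentence ``This is a direct consequence of the anticommutativity property of 4-cycles with two edge colors,'' and your propagation-along-paths argument is the natural way to unpack that claim. The alternative appeal to Theorem~\ref{thm:construction} for consistency is also exactly in the spirit of the paper, which elsewhere uses the explicit construction to sidestep such verifications.

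One small point of exposition: in your direct consistency check around a two-color 4-cycle $u_0u_1u_2u_3$ of colors $j_1,j_2\le N-k$, the contributions do not literally ``cancel in pairs.'' Summing your four recursion identities produces eight parity terms plus four copies of~$1$; the eight terms split as the parity sum around the 4-cycle $u_0u_1u_2u_3$ (which is $1$ by the Adinkra axiom) and the parity sum around its $\phi_i$-image $\phi_i(u_0)\phi_i(u_1)\phi_i(u_2)\phi_i(u_3)$ (also $1$), giving $1+1+4\cdot 1\equiv 0\pmod 2$. So the mechanism is two applications of the 4-cycle axiom in the induced $(N-k)$-cube rather than pairwise cancellation. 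This does not affect the validity of your argument, and your fallback to Theorem~\ref{thm:construction} already covers it cleanly.
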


In particular, fixing the parity of the \emph{extra} edges (where $i > (N-k)$) of a single vertex fixes the parity of all extra edges in the graph. This is a direct consequence of the anticommutativity property of 4-cycles with two edge colors.

This reduces the problem of finding automorphisms of a general $(N,k)$ Adinkra to that of local mappings between vertices. In particular, for any two nodes $x,y \in V$, there will be a unique automorphism of the induced $(N-k)$-cube Adinkra mapping $x$ to $y$. This will extend to a full automorphism of $G$ if and only if it preserves the switching state of the additional $k$ edges of $x$. Hence we arrive at the following result.

\begin{theorem}
\label{thm:main}
 Consider an $(N,k)$ Adinkra $G(V,E)$ having an associated code $C$ with standard form generating set $D = (I \; | \; A )$. Two vertices $x,y$ in $V$ are equivalent (disregarding vertex coloring, there is an automorphism mapping between them) if and only if their relative inner products with respect to each codeword in $D$ are equal. In other words, $\exists \; \gamma \in \textrm{ Aut(G) such that } \gamma : x \mapsto y \textrm{ iff } \forall \; c \in D, \langle x,c\rangle \equiv \langle y,c\rangle \pmod{2} $.
\end{theorem}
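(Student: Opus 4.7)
The plan is to reduce the question of whether an automorphism sends $x$ to $y$ into a purely local check on the $k$ extra edges, and then to match that check against the inner product condition via the parity formula derived in the proof of Theorem 3.7.

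First, I would place $G$ in the standard form of Definition 3.8 and focus on the induced $(N-k)$-cube Adinkra $H$ on the first $N-k$ edge colors, which by Lemma 4.1 may also be taken in standard form. If $\gamma \in \textrm{Aut}(G)$ satisfies $\gamma(x) = y$, then $\gamma$ restricted to $H$ is an automorphism of $H$, and by Lemma 4.2 this restriction is uniquely determined: as a permutation it must be the involution $\rho(v) = v + (x + y)$ on the first $N-k$ bit-string coordinates, combined with the particular set of vertex switchings that fixes $H$ in standard form. By Observation 4.4, once $\gamma|_H$ is fixed the action on the remaining $k$ edge colors is determined up to a graph-wide sign per extra color, so the existence of an automorphism reduces to checking that the parities of the extra edges at $x$ and $y$ are compatible with $\rho$ up to such a sign.

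Next, I would invoke the explicit standard-form parity formula of equation (1) in the proof of Theorem 3.7: for $i > N-k$ with $A_{i-(N-k)} = (a_1, \ldots, a_{N-k})$,
\begin{equation}
\pi_i(v) \;\equiv\; \sum_{j=1}^{N-k} v_j \;+\; \sum_{j=2}^{N-k} v_j\bigl(a_1 + \cdots + a_{j-1}\bigr) \pmod{2}.
\end{equation}
Substituting $v = x$ and $v = y$ and subtracting, the terms that depend only on $A_{i-(N-k)}$ cancel, and what remains is a linear functional of $x + y$. Using the doubly-even identity $|A_j| \equiv 3 \pmod 4$ as in the handling of cases (ii) and (iii) in the proof of Theorem 3.7, this residual collapses modulo 2 to $\langle x, c_j \rangle + \langle y, c_j \rangle$, where $c_j$ is the codeword associated with the $(N-k+j)$-th edge dimension. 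Because this difference is the same for every edge of color $(N-k+j)$, it is exactly the quantity that must vanish (or be a uniform $1$ absorbed into the graph-wide sign) in order for $\gamma|_H$ to extend.

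The final step is to assemble both directions. For necessity, the existence of $\gamma$ forces the above parity difference to be uniform across each extra color class, which gives $\langle x, c_j\rangle \equiv \langle y, c_j\rangle \pmod{2}$ for every $c_j \in D$. For sufficiency, given the inner product equalities, the unique candidate $\rho$ from Lemma 4.2, augmented by the appropriate vertex switchings, matches the required pattern on the extra edges and therefore extends to an automorphism of $G$; the $4$-cycle condition is automatically preserved because the parities being matched are those produced by the construction of Theorem 3.7. The main obstacle will be the bookkeeping in the middle step, namely correctly identifying the linear-in-$(x+y)$ residual with $\langle \cdot, c_j\rangle$ after the $(-1)^F$ contribution is stripped off; the doubly-even condition on $C$ is essential there, since without $|A_j| \equiv 3 \pmod 4$ the simplification would not collapse to a clean inner product.
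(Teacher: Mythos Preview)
Your overall strategy is exactly the paper's: put $G$ in standard form, use Lemma~4.2 to pin down the unique candidate automorphism $\rho$ on the induced $(N-k)$-cube $H$, and then test whether $\rho$ extends across the $k$ extra colors. Where your argument breaks is the middle computational step.

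You propose to compute $\pi_i(x)-\pi_i(y)$ by direct substitution into the standard-form formula and claim the result collapses to $\langle x,c_j\rangle+\langle y,c_j\rangle$. But $\pi_i(x)-\pi_i(y)$ is \emph{not} the obstruction to extending $\rho$. The map $\rho$ is a permutation $v\mapsto v+(x+y)$ \emph{together with} a uniquely determined set of vertex switchings (needed to fix $H$), and those switchings act on the extra edges as well. The quantity you must control is $\pi'_i(y)-\pi_i(y)$, where $\pi'_i$ is the parity in $\rho(G)$; this equals $\bigl(\pi_i(x)+\sum_{e\in P_i(x)}\pi(e)\bigr)-\bigl(\pi_i(y)+\sum_{e\in P_i(y)}\pi(e)\bigr)$, not $\pi_i(x)-\pi_i(y)$. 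Concretely, take $N-k=3$ and $A_i=(1,1,1)$: with $z=x+y$ your formula yields $|z|+z_2a_1+z_3(a_1+a_2)\equiv z_1+z_3\pmod 2$, whereas $\langle z,A_i\rangle=z_1+z_2+z_3$. These are different linear functionals, so the claimed collapse fails. The missing $\sum_{P_i(\cdot)}\pi(e)$ terms are precisely what encode the switching contribution, and the paper's path argument is the device that supplies them; only after adding those path sums does the doubly-even identity $|A_i|\equiv 3\pmod 4$ reduce everything to the inner product.

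A secondary issue: the ``uniform $1$ absorbed into the graph-wide sign'' escape route does not exist here. A simultaneous parity flip of all edges of one extra color is not realizable by vertex switching (any switching set that fixes all $H$-edges is either $V$ or $\emptyset$), so $\rho$ has no free sign to absorb a uniform discrepancy. The extension condition really is $\pi'_i(y)=\pi_i(y)$ on the nose.
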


\begin{proof}
 Consider without loss of generality the case where $G$ is in standard form, and take $H(V,E')$ to be the sub-Adinkra induced on the first $N-k$ edge dimensions. By lemma \ref{thm:vt}, $H$ is vertex transitive if we disregard vertex coloring. So consider the automorphism $\rho$ of $H$, $\rho : x \mapsto y$, where $x,y \in V$. Now $\rho$ consists of two parts. A permutation of the vertex set, $\rho : v \mapsto x \; y \; v, \forall \;v \in V$, and a set of vertex switching operations, such that the switching state of $H$ is preserved.

 We wish to know when $\rho$ switches $x$ relative to $y$ (i.e. when $x$ is switched but $y$ is not, or vice versa, and when either both or neither are switched). Where $\pi_i(x)$ denotes the switching state of the $i^\textrm{th}$ edge of $x$ (in $G$ or $H$), we will use $\pi'_i(x)$ to denote the equivalent switching state of $x$ in $\rho(G)$ or $\rho(H)$ (we will also denote $\rho(G)$ and $\rho(H)$ by $G'$ and $H'$ respectively. Note that $x$ in $G'$ is the image of $y$ in $G$ under $\rho$.

 For edge dimension $i \le N-k$, we see that since $\rho$ is an automorphism of $H$, $\pi_i(x) = \pi'_i(y)$, and similarly $\pi_i(v \; x) = \pi'_i(v \; y) \;\forall \;v \in V$. For the case $i > N-k$, $\pi_i(v)$ is given by equation (1) in the proof of theorem \ref{thm:construction}, but what about $\pi'_i(v)$?

 The value of $\pi'_i(v)$ can be found by considering the path $P_i(x)$ in $H$ joining $x$ to $x.A_i$, where $A_i = (a_1,a_2,\ldots,a_{N-k})$. If we denote $A_i$ by the associated product of Clifford generators, $A_i = \prod_{j=1}^{n} \gamma_{s_j}$, where the $s_j$ denote the 1's of $A_i$, we have $P_i(x) = (x,x.a_{s_1},(x.a_{s_1}).a_{s_2},\ldots,x.A_i)$. Then $\rho$ will preserve the parity of edge $(x,x.A_i)$ if and only if the number of dashed edges in paths $P_i(x)$ and $P_i(y)$ are equal (mod 2). In other words,
\be
 \pi_i(x) = \pi'_i(x) \; \textrm{ iff } \sum_{e \in P_i(x)}\pi(e) \equiv \sum_{e \in P_i(y)}\pi(e).
 \ee

 Now $\pi_{s_1}(x) \equiv x_1 + \ldots + x_{s_1-1} \pmod{2}, \ldots, \pi_{s_n}(x.A_i) \equiv (x_1 a_1) + \ldots + (x_{s_n} a_{s_n}) \pmod{2}$, where $a_i = 1$ for $i \in \{s_j : 1 \le j \le n\}$, and $a_i = 0$ elsewhere. Hence this can be rearranged as:
 \begin{align}
 \sum_{e \in P_i(x)}\pi(e) &\equiv a_2 (x_1) + a_3 (x_1 + x_2) + \ldots + a_{N-k} (x_1 + \ldots + x_{N-k-1})\nonumber \\
  &\equiv x_1 (a_2 + \ldots + a_{N-k}) + \ldots + x_{N-k-1} (a_{N-k}) \pmod{2}
 \end{align}

 Hence, $\pi_i(x) \equiv |x| + x.A_i \pmod{2}$, and $\pi'_i(y) \equiv \pi_i(x) + (\sum_{e \in P_i(x)}\pi(e) - \sum_{e \in P_i(y)}\pi(e))$, and note that $\sum_{e \in P_i(x)}\pi(e) + x.A_i$ simplifies to
 \be
  |A_i||x| - \sum_{j=1}^{N-k} (x_j a_j), 
  \ee
 so we have, for $i > N-k$,
 \begin{align}
  \pi_i(x) - \pi'_i(y) \equiv |x| + |A_i||x| + |y| + |A_i||y| + \sum_{j=1}^{N-k} (x_j a_j) + \sum_{j=1}^{N-k} (y_j a_j) \pmod{2}.
 \end{align}
 Since $|A_i| \equiv 1 \pmod{2}$, we have $|x|(|A_i| + 1) \equiv 0 \pmod{2}$, and the first 4 terms cancel, leaving
 \begin{align}
  \pi_i(x) - \pi'_i(y) &\equiv \sum_{j=1}^{N-k} (x_j a_j) + \sum_{j=1}^{N-k} (y_j a_j)  \nonumber \\
  &\equiv \langle x,A_i\rangle + \langle y,A_i\rangle \pmod{2},
 \end{align}
 which completes the proof.

\end{proof}

\begin{observation}
 Two vertices having the same relative inner products with respect to each element of a set $S$ of codewords also have the same relative inner product with respect to the group generated by $S$, under bitwise addition modulo 2.
\end{observation}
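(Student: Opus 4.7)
The statement is essentially the bilinearity of the $\mathbb{F}_2$-inner product, restated in the context of the ``relative'' inner product $\langle x,c\rangle - \langle y,c\rangle \pmod 2$. The plan is to verify bilinearity on the right argument of $\langle \cdot,\cdot\rangle$ with respect to the operation $\boxplus$, and then observe that every element of the group generated by $S$ is a finite $\boxplus$-sum of elements of $S$, so the hypothesis extends by induction.

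First, I would establish the key identity: for any bit-strings $x,u,v$ of length $N$,
\begin{equation}
 \langle x, u\boxplus v\rangle \equiv \langle x,u\rangle + \langle x,v\rangle \pmod 2.
\end{equation}
This follows from the componentwise definition $\langle x,c\rangle \equiv \sum_{i=1}^N x_i c_i \pmod 2$ together with the fact that $(u\boxplus v)_i \equiv u_i + v_i \pmod 2$, so that $x_i(u\boxplus v)_i \equiv x_i u_i + x_i v_i \pmod 2$. Summing over $i$ yields the identity. I would remark that this makes $\langle x,\cdot\rangle$ an $\mathbb{F}_2$-linear functional on the ambient space $\mathbb{F}_2^N$.

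Next, let $\langle S\rangle$ denote the subgroup of $\mathbb{F}_2^N$ generated by $S$ under $\boxplus$. Every $c'\in\langle S\rangle$ can be written as $c' = c_1\boxplus c_2\boxplus \cdots \boxplus c_m$ for some $c_1,\ldots,c_m\in S$. A straightforward induction on $m$ using the bilinearity identity gives
\begin{equation}
 \langle x,c'\rangle \;\equiv\; \sum_{j=1}^m \langle x,c_j\rangle \pmod 2,
\end{equation}
and analogously for $y$.

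Finally, I would subtract these two expansions. By the hypothesis $\langle x,c_j\rangle \equiv \langle y,c_j\rangle \pmod 2$ for every $j$, each term of the difference $\langle x,c'\rangle - \langle y,c'\rangle$ vanishes modulo $2$, so $\langle x,c'\rangle \equiv \langle y,c'\rangle \pmod 2$, as required. There is no genuine obstacle here: the observation is really a one-line consequence of $\mathbb{F}_2$-linearity, and the only mild care needed is in setting up the notation so that the bilinearity identity on $\boxplus$ is explicit before invoking induction.
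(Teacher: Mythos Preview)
Your argument is correct. The paper states this as an Observation without proof, treating it as self-evident; your write-up simply makes explicit the $\mathbb{F}_2$-bilinearity of the inner product and the induction over $\boxplus$-sums that justify it, which is exactly the intended reasoning.
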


\begin{corollary}
 Conversely to theorem \ref{thm:main}, consider a single orbit $\phi$ of the $(N,k)$ Adinkra $G$ with associated doubly even code $C$. If all elements of $\phi$ have fixed inner product relative to an $N$-length codeword $c$, then $c \in C$, the code by which $G$ has been quotiented.
\end{corollary}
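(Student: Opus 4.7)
The plan is to deduce the corollary from Theorem~\ref{thm:main} via an elementary double-annihilator argument. Theorem~\ref{thm:main} already characterizes which vertices lie in the orbit $\phi$ in terms of their inner products against the generators in $D$; the corollary is essentially the dual assertion that the $\mathbb{F}_2$-span of those functionals is exactly $C$.

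First I would fix a base vertex $x_0 \in \phi$ and use Theorem~\ref{thm:main} to write
\[
\phi \;=\; \bigl\{\, y \in V \,:\, \langle y,\,d\rangle \equiv \langle x_0,\,d\rangle \pmod{2} \text{ for every } d \in D\, \bigr\}.
\]
The ``if'' direction of the theorem is crucial here: it ensures that \emph{every} such $y$ actually lies in $\phi$, so that the set of differences $W := \{\, x \boxplus y \,:\, x,y \in \phi\,\}$ is the full linear subspace of the vertex space defined as the common kernel of the functionals $\langle \cdot,\,d\rangle$, $d \in D$. In particular $\phi$ is a full coset of $W$, not merely contained in one.

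Second I would translate the hypothesis. The assumption that $\langle c,\,\cdot\rangle$ is constant on $\phi$ is precisely the statement $\langle c,\,w\rangle = 0$ for every $w \in W$, i.e.\ $c \in W^{\perp}$. By double annihilation for the non-degenerate bilinear form on the vertex space, $W^{\perp}$ coincides with the $\mathbb{F}_2$-span of the functionals arising from $D$, and the standard-form generating set $D$ was chosen so that this span is exactly $C$. Hence $c \in C$, as claimed.

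The one substantive obstacle is bookkeeping between the length-$N$ codewords in $D$ and the effective $(N-k)$-dimensional vertex space: one must consistently regard each $d \in D$ as the linear functional $\langle \cdot,\,d\rangle$ produced by the pairing of Theorem~\ref{thm:main}, and verify that these $k$ functionals are independent and that their span is identified with the correct copy of $C$. This is automatic from the standard form $D = (I \mid A)$ used in Theorem~\ref{thm:construction}, since the generators are linearly independent in $C$ and the identification preserves this independence. Once the identification is fixed, the corollary is the immediate instance $W^{\perp\perp} = W$ of double annihilation.
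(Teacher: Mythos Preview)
The paper gives no proof of this corollary, so the only question is whether your argument is sound. The double-annihilation strategy is right, but the bookkeeping you flag as the ``one substantive obstacle'' is a genuine gap rather than a routine check. If you run the $\perp$ in the $(N-k)$-dimensional vertex space $V$, as your last paragraph does, the relevant pairing $\mathbb{F}_2^N \times V \to \mathbb{F}_2$, $(c,v)\mapsto\langle c,v\rangle$, is degenerate on the left: its left kernel is the $k$-dimensional coordinate subspace supported on the ``extra'' positions (those carrying the identity block of $D=(I\mid A)$). So from $\langle c,w\rangle=0$ for all $w\in W$ you learn only that the \emph{functional} $\langle c,\cdot\rangle\in V^*$ lies in the span of the $\langle d_i,\cdot\rangle$; this pins down $c$ only modulo that $k$-dimensional kernel, not to an element of $C$. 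Concretely, take $N=4$, $k=1$, $C=\langle 1111\rangle$, and let $c$ be the unit vector on the single extra coordinate: then $\langle v,c\rangle=0$ for every vertex label $v$, so $c$ is constant on every orbit, yet $|c|=1$ and $c\notin C$. Thus ``the span is identified with the correct copy of $C$'' fails under the pairing you actually use.

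The repair is to lift to $\mathbb{F}_2^N$ rather than to work in $V$. A vertex of $G$ is a full coset of $C$ by the quotient construction (Theorem~\ref{thm:quotient}), so for the inner product with $c$ to assume a single value on a vertex at all one already needs $c\in C^\perp$. Granting that, the union in $\mathbb{F}_2^N$ of all representatives of all vertices in $\phi$ is, by Theorem~\ref{thm:main}, precisely the coset $x_0+C^\perp$; now your $W$ becomes $C^\perp$ itself, and the non-degenerate form on $\mathbb{F}_2^N$ gives $c\in(C^\perp)^\perp=C$ immediately. In this formulation the independence of the induced functionals on $V$ is never invoked.
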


Theorem \ref{thm:main} leads to several important corollaries regarding the automorphism group properties of Adinkras. Recall that there are at most two equivalence classes of Adinkras with the same associated doubly even code. Theorem \ref{thm:main} implies that disregarding the vertex colorings there is in fact only one such equivalence class. Hence including the vertex bipartition, we see that two equivalence classes exist if and only if the respective vertex sets of bosons and fermions are setwise non-isomorphic. This in turn occurs if and only if at least one orbit (and hence all orbits) of $G$ is of fixed weight modulo 2.

In \cite{Doran08} and \cite{Gates09}, one such case was investigated, for Adinkras with parameters $N = 4, k = 1$. In this case, the Klein flip operation, which exchanges bosons for fermions, was found to change the equivalence class of the resulting Adinkra\footnote{This fact was well known to the authors of
\cite{Doran08}. }. We term this property Klein flip degeneracy, and note the following corollary of theorem \ref{thm:main}.

\begin{corollary}
\label{thm:klein}
 An $(N,k)$ Adinkra with associated code $C$ has Klein flip degeneracy if and only if the all-1 codeword $(11\ldots1) \in C$. This in turn occurs only if $N\equiv 0 \pmod{4}$.
\end{corollary}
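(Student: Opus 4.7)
The plan is to leverage Theorem \ref{thm:main} together with the characterization of Klein flip degeneracy given in the paragraph immediately preceding the corollary: Klein flip degeneracy holds precisely when the vertex sets of bosons and fermions are setwise non-isomorphic, which in turn is equivalent to every orbit of $\mathrm{Aut}(G)$ (acting on $V$, disregarding the bipartition) consisting entirely of vertices of fixed weight modulo $2$. Thus the whole corollary reduces to translating the phrase ``every orbit has fixed weight mod $2$'' into a condition on the code $C$.

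The key observation I would use is that for any vertex $x \in V$, we have $|x| \equiv \langle x, \mathbf{1}\rangle \pmod{2}$, where $\mathbf{1} = (1,1,\ldots,1)$ is the all-ones bit-string of length $N$. Hence the statement ``each orbit has fixed weight mod $2$'' is literally the statement ``on each orbit, the inner product with $\mathbf{1}$ is constant.'' This is the bridge that turns Theorem \ref{thm:main} into the desired equivalence.

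For the forward direction, I would assume $\mathbf{1} \in C$ and extend a generating set to one containing $\mathbf{1}$ (or simply apply Theorem \ref{thm:main} using any generating set of $C$, noting that constancy of inner product against a generating set implies constancy against every element of $C$, as recorded in the observation after Theorem \ref{thm:main}). Theorem \ref{thm:main} then gives that $\langle x,\mathbf{1}\rangle \equiv \langle y,\mathbf{1}\rangle \pmod{2}$ for any two vertices $x,y$ in the same orbit, so orbits are of fixed weight mod $2$ and Klein flip degeneracy follows. For the converse, if every orbit is of fixed weight mod $2$, then every orbit is of fixed inner product with $\mathbf{1}$; the corollary immediately after Theorem \ref{thm:main} (the ``converse'' statement) then forces $\mathbf{1} \in C$.

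Finally, for the parity claim, I would just invoke that $C$ is doubly even: every codeword of $C$ has weight divisible by $4$, so $\mathbf{1} \in C$ forces $|\mathbf{1}| = N \equiv 0 \pmod{4}$. There is no real obstacle here; the only subtle step is matching the phrasing ``each orbit has fixed weight mod $2$'' to the inner-product characterization of orbits in Theorem \ref{thm:main}, which is immediate via the identity $|x| \equiv \langle x, \mathbf{1}\rangle \pmod{2}$.
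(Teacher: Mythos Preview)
Your proposal is correct and follows exactly the implicit argument the paper sets up in the paragraph preceding the corollary: Klein flip degeneracy is equivalent to every orbit having fixed weight mod $2$, the identity $|x| \equiv \langle x,\mathbf{1}\rangle \pmod{2}$ rephrases this as constancy of inner product with $\mathbf{1}$, and then Theorem~\ref{thm:main} together with its observation and converse corollary give the equivalence with $\mathbf{1}\in C$; the doubly even property of $C$ handles the $N\equiv 0\pmod{4}$ claim. The paper leaves the corollary unproved because this chain is immediate from what precedes it, and your write-up spells out precisely those steps.
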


\begin{corollary}
 The automorphism group of a valise $(N,k)$ Adinkra $G(V,E)$ has size $2^{N-2k-a}$, with $2^{k+a}$ orbits of equal size, where $a = 0$ if $C$ contains the all-1 codeword, and $a = 1$ otherwise.
\end{corollary}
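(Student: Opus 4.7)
The plan is to combine Theorem~\ref{thm:main} with the bipartition-preservation constraint, reduce the orbit count to a linear-algebra statement over $\mathbb{F}_2$, and then invoke the orbit-stabilizer theorem. In a valise Adinkra every automorphism must preserve the boson/fermion bipartition, which is determined by the parity $|x|\equiv\langle x,\mathbf{1}\rangle\pmod 2$, where $\mathbf{1}=(1,\ldots,1)$. Combined with Theorem~\ref{thm:main}, this forces $x,y\in V$ to lie in the same $\textrm{Aut}(G)$-orbit precisely when $\langle x,c\rangle\equiv\langle y,c\rangle\pmod 2$ for every $c\in D$ and also for $c=\mathbf{1}$.

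I would then package this into the $\mathbb{F}_2$-linear map
\[
\phi:V\longrightarrow\mathbb{F}_2^{k+a},\qquad
\phi(x)=\bigl(\langle x,c_1\rangle,\ldots,\langle x,c_k\rangle,\langle x,\mathbf{1}\rangle\bigr),
\]
where the last coordinate is included only when $a=1$. The first task is to check that $\phi$ descends to $V=\mathbb{F}_2^N/C$: for any $c'\in C$ one has $\langle c',c_i\rangle\equiv 0\pmod 2$ by self-orthogonality of doubly even codes (which follows from $|c+c'|\equiv|c|+|c'|-2\langle c,c'\rangle\pmod 8$ together with $|c|,|c'|,|c+c'|\equiv 0\pmod 4$), while $\langle c',\mathbf{1}\rangle=|c'|\equiv 0\pmod 2$. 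The second task is to show $\phi$ is surjective: its matrix has rows $c_1,\ldots,c_k$ (with $\mathbf{1}$ adjoined when $a=1$), and these are $\mathbb{F}_2$-linearly independent because $D$ spans a $k$-dimensional code and, by definition of $a$, the row $\mathbf{1}$ is independent of $C$ exactly when $a=1$. Hence the fibers of $\phi$ have common size $|V|/2^{k+a}=2^{N-2k-a}$, and by the previous paragraph these fibers are the $\textrm{Aut}(G)$-orbits, giving $2^{k+a}$ orbits of equal size.

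Finally, $|\textrm{Aut}(G)|$ follows from orbit-stabilizer once the pointwise stabilizer of a single vertex is shown to be trivial. For any $\gamma\in\textrm{Aut}(G)$ fixing $x\in V$, the restriction of $\gamma$ to the induced $(N-k)$-cube sub-Adinkra on the first $N-k$ edge-colors is an automorphism of that cube fixing $x$; by the corollary to Lemma~\ref{thm:vt} this restriction is trivial, whence $\gamma$ acts trivially on all of $V$. Therefore each orbit has size $|\textrm{Aut}(G)|$, yielding $|\textrm{Aut}(G)|=2^{N-2k-a}$. The only substantive step is the verification that $\phi$ both descends to the quotient and surjects onto $\mathbb{F}_2^{k+a}$; both points rest entirely on the double-evenness of $C$.
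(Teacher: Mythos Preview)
Your proof is correct and follows essentially the same approach as the paper's: orbits are determined by the inner products with the code generators together with the parity $\langle x,\mathbf{1}\rangle$, and $|\textrm{Aut}(G)|$ then equals the common orbit size because point-stabilizers are trivial. Your $\mathbb{F}_2$-linear packaging via $\phi$ makes explicit the well-definedness on the quotient and the surjectivity (hence the equal orbit sizes), and your explicit appeal to orbit--stabilizer with the trivial-stabilizer corollary fills in the passage from orbit size to $|\textrm{Aut}(G)|$ that the paper's terse proof leaves implicit.
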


\begin{proof}
 $G$ has $2^{N-k}$ nodes. By theorem \ref{thm:main}, disregarding the vertex bipartition there are $2^k$ orbits of equal size, partitioning the vertex set. Including the bipartition, these orbits are split in 2 once more whenever they contain nodes of variable weight modulo 2 (i.e. whenever $C$ does not contain the all-1 codeword). 
\end{proof}

\setcounter{equation}{0}
\section{Characterizing Adinkra Degeneracy}
\label{sec:gamma_matrix}

In the work of \cite{Gates09} the Klein flip degeneracy in the $N=4$ case of Example \ref{ex:N=4} was investigated. It was shown that Adinkras belonging to these two equivalence classes can be distinguished via the trace of a particular matrix derived from each Adinkra. We will briefly introduce these results, and generalize them to general $(N,k)$ Adinkras.

Throughout the following section we will consider an $(N,k)$ Adinkra $G(V,E)$ with corresponding code $C$. Note that $G$ has $2^{N-k}$ vertices, each of degree $N$.

\subsection{Notation}

\begin{definition}
 The \emph{adjacency matrix} $A(G)$ (or simply $A$ where the relevant Adinkra is clear from the context) is a $2^{N-k} \times 2^{N-k}$ symmetric matrix containing all the information of the graphical representation, except the vertex coloring associated with height assignments. Each element of the main diagonal represents either a boson or a fermion, denoted by $+i$ for bosons and $-i$ for fermions. The off-diagonal elements represent edges of $G$, numbered according to edge dimension, with sign denoting dashedness (e.g. if position $A_{x,y} = -4$, there is a dashed edge of the fourth edge color between boson/fermion $x$ and fermion/boson $y$). Hence the sign of the main diagonal represents the vertex bipartition, the sign of off-diagonal elements represents the edge parity, the absolute value of off-diagonal elements represents edge color, and the position of the elements encodes the topology of the Adinkra.
\end{definition}

\begin{example}
\label{ex:adjmat}
 Consider the 3-cube Adinkra of Example \ref{ex:ad2}.

 If we order the vertex set into bosons and fermions, this Adinkra can be represented by the matrix:
\begin{align}
 \textrm{A} = \begin{bmatrix}
 i&0&0&0&1&2&3&0\\
 0&i&0&0&-2&1&0&3\\
 0&0&i&0&-3&0&1&-2\\
 0&0&0&i&0&-3&2&1\\
 1&-2&-3&0&-i&0&0&0\\
 2&1&0&-3&0&-i&0&0\\
 3&0&1&2&0&0&-i&0\\
 0&3&-2&1&0&0&0&-i
 \end{bmatrix}
\end{align}
 Note that the property of symmetry possessed by the adjacency matrix arises naturally out of the definition, and requiring that this property be upheld imposes no further restrictions in and of itself.
\end{example}

We define L and R matrices similarly, as in \cite{Gates09}, to represent a single edge dimension of the Adinkra. L and R matrices encode this edge dimension, with rows corresponding to bosons and fermions respectively. As each edge dimension is assigned a separate matrix, the elements corresponding to edges are all set to $\pm1$. Then the Adinkra of example \ref{ex:adjmat} has L matrices:

\be
\textrm{L}_1 = \left[\begin{array}{cccc}
 1 & 0 & 0 & 0 \\
 0 & 1 & 0 & 0 \\
 0 & 0 & 1 & 0 \\
 0 & 0 & 0 & 1
\end{array}\right] , \quad
\textrm{L}_2 = \left[\begin{array}{cccc}
 0 & 1 & 0 & 0 \\
 -1 & 0 & 0 & 0 \\
 0 & 0 & 0 & -1 \\
 0 & 0 & 1 & 0
\end{array}\right] , \quad
\textrm{L}_3 = \left[\begin{array}{cccc}
 0 & 0 & 1 & 0 \\
 0 & 0 & 0 & 1 \\
 -1 & 0 & 0 & 0 \\
 0 & -1 & 0 & 0
\end{array}\right].
\ee

These matrices can be read directly off the top right quadrant of the adjacency matrix. Similarly, the R matrices can be read off the lower left quadrant. In this case, we have $\textrm{R}_\textrm{i} = \textrm{L}_\textrm{i}^\textrm{T}$, where T denotes the matrix transpose. Note that L and R matrices taken directly from the adjacency matrix will always be related via matrix transpose. Furthermore, since we are considering only off-shell supermultiplets, these matrices must also be square. We will assume that all subsequent L and R matrices are related in this way. We define a further object, $\gamma_\textrm{i}$, to be a composition of L and R matrices of the form 
\begin{align}
\label{eqn:gamma}
 \gamma_\textrm{i} = \left[\begin{array}{cc}
 0 & \textrm{L}_\textrm{i} \\
 \textrm{R}_\textrm{i} & 0                    
 \end{array} \right]
\end{align}
Further details regarding these objects can be found in \cite{Gates09}. In particular, it was shown in \cite{Gates09} that the Klein flip degeneracy of several $(4,1)$ Adinkras can be characterized by the trace of quartic products of these matrices, according to the formula
\begin{align}
\label{eqn:trace}
 &\textrm{Tr}(\textrm{L}_i (\textrm{L}_j)^\textrm{T} \textrm{L}_k (\textrm{L}_l)^\textrm{T}) = 4 \left( \delta_{i j} \delta_{k l} - \delta_{i k} \delta_{j l} + \delta_{i l} \delta_{j k} +  \chi{}_{{}_0} \epsilon_{i j k l} \right),
\end{align}
where $\delta$ represents the Kroenecker delta and $\epsilon$ the Levi-Civita symbol, and where $
\chi{}_{{}_0}  = \pm 1$ distinguishes between $(4,1)$ Adinkras belonging to the two equivalence classes of Example \ref{ex:N=4}. It is also conjectured that this same method for distinguishing equivalence classes can be generalized to all values of the parameters $N$ and $k$.

\subsection{New Results}

In order to generalize equation \ref{eqn:trace} to higher $N$, we wish to determine the form taken by the trace of products of these $\gamma$ matrices, and establish exactly when this can be used to partition Adinkras into their equivalence classes. Firstly we note the following properties of Adinkras:

\begin{lemma}
 The cycles of an $N$-cube Adinkra consist entirely of paths containing each edge dimension $0$ times (modulo 2).
\end{lemma}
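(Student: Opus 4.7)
The plan is to exploit the bit-string labeling of vertices in the $N$-cube Adinkra introduced in Example \ref{ex:ad2}. Under this labeling, each vertex of the $N$-cube Adinkra corresponds to an element of $\{0,1\}^N$, and, by construction of the chromotopology, an edge of the $i^\textrm{th}$ edge dimension joins two vertices whose labels differ in precisely the $i^\textrm{th}$ bit.

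First, I would fix any cycle $\mathcal{C} = (v_0, v_1, \ldots, v_m = v_0)$ in the $N$-cube Adinkra and, for each $i \in \{1,\ldots,N\}$, let $n_i$ denote the number of edges of the $i^\textrm{th}$ color appearing in $\mathcal{C}$. Traversing the edge from $v_{t-1}$ to $v_t$ flips exactly one bit of the label, namely the bit indexed by the color of that edge, leaving all other bits unchanged. Consequently, after traversing the entire cycle, the $i^\textrm{th}$ bit of $v_m$ equals the $i^\textrm{th}$ bit of $v_0$ flipped $n_i$ times (mod $2$).

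Since $\mathcal{C}$ is closed, $v_m = v_0$, so each bit of the label is unchanged overall. This forces $n_i \equiv 0 \pmod{2}$ for every $i \in \{1,\ldots,N\}$, which is precisely the conclusion of the lemma. There is no real obstacle here; the argument is essentially a reformulation of the fact that $(\mathbb{Z}/2\mathbb{Z})^N$ is the cycle space of the $N$-cube chromotopology, packaged via the vertex labeling. The one point worth being explicit about is that the bit-string labeling is well defined on the \emph{whole} $N$-cube Adinkra (not merely on a spanning tree), so that the edge-color-to-bit-flip correspondence is unambiguous and the counting of $n_i$ is sensible for arbitrary closed walks.
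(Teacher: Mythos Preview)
Your argument is correct: using the bit-string labeling of Example~\ref{ex:ad2}, an edge of color $i$ flips exactly the $i^{\textrm{th}}$ coordinate, so returning to the starting vertex forces each color count $n_i$ to be even. The paper itself states this lemma without proof, evidently regarding it as immediate from the bit-string representation; your write-up is precisely the natural justification one would supply, so there is nothing to compare.
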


\begin{lemma}
\label{thm:cycle}
 The cycles of an $(N,k)$ Adinkra with associated code $C$ consist of paths in which:
 \begin{itemize}
  \item[(i)] Each edge dimension is traversed $0 \pmod{2}$ times.
  \item[(ii)] At least one edge dimension is traversed $1 \pmod{2}$ times.
 \end{itemize}
 In case (ii), the set of such edge dimensions traversed $1 \pmod{2}$ times corresponds to a codeword in $C$.
\end{lemma}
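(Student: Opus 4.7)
The plan is to prove this by lifting cycles in the $(N,k)$ Adinkra back to paths in the parent $N$-cube Adinkra via the quotienting construction of Theorem \ref{thm:quotient}. A cycle in the quotient is essentially a path in the cover whose endpoints are identified, and the identification is precisely by translation by a codeword of $C$, so reading off which edge dimensions are used an odd number of times will directly produce a codeword.

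First I would fix some vertex $v_0$ of the $(N,k)$ Adinkra $G$ and a closed walk $W = (v_0, v_1, \ldots, v_m = v_0)$ around a cycle. Lift this to a walk $\tilde{W} = (\tilde v_0, \tilde v_1, \ldots, \tilde v_m)$ in the $N$-cube Adinkra by choosing a preimage $\tilde v_0$ of $v_0$ and, at each step, using the fact that the quotient map preserves the edge coloring: each edge of color $i$ in $G$ lifts to an edge of color $i$ in the $N$-cube, so $\tilde v_{j+1}$ is obtained from $\tilde v_j$ by flipping the $i_j$-th bit, where $i_j$ is the color of the edge $(v_j, v_{j+1})$. Let $t_i$ denote the number of times color $i$ appears among $i_0, i_1, \ldots, i_{m-1}$.

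The endpoint displacement in the $N$-cube is then exactly $\tilde v_m - \tilde v_0 = (t_1 \bmod 2, \ldots, t_N \bmod 2)$ as a bit-string. Since $W$ closes up in $G$, the two endpoints $\tilde v_0$ and $\tilde v_m$ project to the same vertex $v_0$, hence they are related by the quotienting, meaning $\tilde v_m \boxplus \tilde v_0 \in C$. Setting $c := (t_1 \bmod 2, \ldots, t_N \bmod 2) \in C$ gives the conclusion: either $c = 0$, in which case every edge dimension is traversed $0 \pmod 2$ times (case (i)), or $c \neq 0$, in which case the set of edge dimensions traversed an odd number of times is exactly the support of the nonzero codeword $c$, giving case (ii).

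The main obstacle, such as it is, lies in being clean about the lifting: one should verify that the quotient map is a covering of edge-colored graphs so that lifts of walks exist and are uniquely determined by the starting lift, and that the identification of fibers is exactly translation by codewords of $C$. Both follow from the construction in Theorem \ref{thm:quotient} together with the bit-string labeling used in Theorem \ref{thm:construction}, so once the lift is set up the rest is a straightforward bookkeeping of parities.
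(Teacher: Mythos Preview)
The paper states this lemma without proof, so there is no argument to compare against. Your approach via lifting to the $N$-cube is correct and is the natural one: the quotient construction of Theorem \ref{thm:quotient} makes the projection from the $N$-cube Adinkra to $G$ a covering of edge-colored graphs, so a closed walk in $G$ lifts uniquely (once a basepoint is chosen) to a walk in the $N$-cube whose endpoint displacement is exactly the parity vector $(t_1 \bmod 2,\ldots,t_N \bmod 2)$ of edge-color usage; the fibers of the covering are precisely the cosets of $C$, so this vector lies in $C$. One small remark: your argument actually establishes the statement for arbitrary closed walks, not just cycles, and this slightly stronger version is what is implicitly invoked in Lemma \ref{thm:cycle2} when the path $P_M$ is allowed to repeat edge colors.
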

In fact, the codeword $(00\ldots0)$ is in any such code $C$, so all cycles have this property, however case (i) is considered a trivial instance.

Consider the product of $t$ $\gamma$ matrices of $G$, denoted by $M = \gamma_{i_1} \gamma_{i_2} ... \gamma_{i_t}$. As we are considering only L and R matrices such that $\textrm{L}_\textrm{i} = \textrm{R}_\textrm{i}^\textrm{T}$, all $\gamma$ matrices defined as in equation \ref{eqn:gamma} will be real and symmetric. This leads to the following property for $M$.

\begin{lemma}
\label{thm:cycle2}
 The elements of the main diagonal of $M$ are non-zero if and only if the path $P_M = (i_1, i_2, \ldots, i_t)$ represents a closed loop within the Adinkra. This can occur one of two ways:
 \begin{itemize}
  \item[(i)] Trivially, if each edge dimension contained in $p$ is present $0 \pmod{2}$ times,
  \item[(ii)] If $p$ corresponds to a codeword (or set of codewords) in $C$.
 \end{itemize}
\end{lemma}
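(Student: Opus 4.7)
The plan is to reinterpret the matrix product $M = \gamma_{i_1}\gamma_{i_2}\cdots\gamma_{i_t}$ as the action of successively traversing edges of the Adinkra, and then apply Lemma \ref{thm:cycle} to characterize which walks close up and so produce a non-zero diagonal entry.

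First I would observe that each $\gamma_i$ is a signed permutation matrix on the vertex set: the $(u,v)$ entry of $\gamma_i$ is $\pm 1$ precisely when $u$ and $v$ are joined by an $i$-th edge (with the sign determined by the edge parity $\pi(e)$), and is $0$ otherwise. This follows directly from the block definition of $\gamma_i$ in (\ref{eqn:gamma}) together with the edge-$N$-partite property, which guarantees that every vertex has exactly one $i$-th neighbor. In particular, acting on the standard basis vector $e_v$ associated with vertex $v$, we have $\gamma_i e_v = \pm e_{w}$, where $w$ is the vertex reached from $v$ along its $i$-th edge.

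Iterating this observation, $M e_v = \pm e_{w}$, where $w$ is the endpoint of the walk in $G$ that starts at $v$ and traverses edges of colors $i_1, i_2, \ldots, i_t$ in succession. Consequently the diagonal entry $M_{v v} = \langle e_v, M e_v\rangle$ is non-zero if and only if $w = v$, i.e.\ if and only if the path $P_M$ closes up into a loop based at $v$. Since the net displacement depends only on the multiset of edge colors in $P_M$ and not on the basepoint $v$, closure at one vertex is equivalent to closure at every vertex, so the diagonal entries of $M$ are either all non-zero or all zero.

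At this point I would simply invoke Lemma \ref{thm:cycle}, which asserts that a closed walk in an $(N,k)$ Adinkra is precisely one whose set of edge colors traversed an odd number of times corresponds to a codeword of the associated code $C$. Case (i) of the present lemma is the trivial sub-case in which every color is used an even number of times (corresponding to the zero codeword), and case (ii) is the non-trivial sub-case. The main technical point, which I expect to be the only subtlety, is confirming that each individual step contributes a non-zero sign so that no cancellation can occur along the closed walk; this is automatic from the fact that each factor $\gamma_{i_j}$ sends $e_{v}$ to $\pm e_{w}$ rather than to a genuine linear combination, so the signs along the walk always multiply to $\pm 1$ and never to $0$.
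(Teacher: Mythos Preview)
Your proposal is correct and follows essentially the same approach as the paper: interpret the product $M$ as encoding a walk along edges of colors $i_1,\ldots,i_t$, note that $M_{x,x}\neq 0$ precisely when this walk closes at $x$, and then invoke Lemma~\ref{thm:cycle}. Your version is in fact more explicit than the paper's, which simply treats the case $P_M=(i,j)$ and then says ``extending to general $M$''; your observation that each $\gamma_i$ is a signed permutation matrix (so that $Me_v=\pm e_w$ with no possibility of cancellation) makes the argument cleaner and also immediately yields the basepoint-independence noted in the paper's subsequent corollary.
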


\begin{proof}
 Consider the case $P_M = (i,j)$. If $i=j$, $M$ is trivially the identity, hence (i) holds. If $i \ne j$, then $M_{x,y} \ne 0$ if and only if the vertices $x$ and $y$ are connected via a path $(i,j)$ (i.e. if $x = i.j.y$). Extending to general $M$, if $M_{x,x} \ne 0$ for some $x \in V$, this implies that $x$ is connected to itself via the path $P_M$, a closed loop / cycle in $G$. Hence lemma \ref{thm:cycle} completes the proof.
\end{proof}

\begin{corollary}
 $M_{x,x} \ne 0$ for some $x \in V$ if and only if this is true for all $x \in V$.
\end{corollary}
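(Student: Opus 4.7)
The plan is to derive this essentially as a direct restatement of Lemma \ref{thm:cycle2}, observing that the conditions characterizing when a diagonal entry is nonzero depend only on the edge-color sequence $P_M = (i_1, i_2, \ldots, i_t)$ and not on any particular starting vertex.

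First I would invoke Lemma \ref{thm:cycle2}: for any vertex $x$, $M_{x,x} \ne 0$ if and only if following the path $P_M$ from $x$ returns to $x$, i.e.\ traversing edges of colors $i_1, \ldots, i_t$ in order starting at $x$ yields a closed loop. Next I would translate this into the quotient structure given by Theorem \ref{thm:construction}: using the bit-string representation where vertices are equivalence classes of elements of $\mathbb{Z}_2^N$ modulo the associated doubly even code $C$, traversing an edge of color $j$ from vertex $v$ takes $v$ to $v + e_j \pmod{C}$, where $e_j$ denotes the standard basis vector. Following the entire path $P_M$ from $x$ therefore yields the vertex $x + s \pmod{C}$, where $s = \sum_{\ell=1}^{t} e_{i_\ell} \pmod{2}$.

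The key observation is then that the path closes, i.e.\ $x + s \equiv x \pmod{C}$, if and only if $s \in C$ — a condition on the multiset of colors $\{i_1, \ldots, i_t\}$ alone, with no reference to $x$. This recovers the two cases of Lemma \ref{thm:cycle2}: either $s$ is the zero codeword (each color appears an even number of times, case (i)), or $s$ is a nontrivial codeword of $C$ (case (ii)). Since membership of $s$ in $C$ is a global property of the path $P_M$, it holds for the starting vertex $x$ if and only if it holds for every vertex.

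There is really no obstacle here — the content of the corollary is already encoded in the vertex-independence of the hypotheses of Lemma \ref{thm:cycle2}; the only thing to be careful about is making the translation to the quotient representation explicit so that the role of $x$ drops out cleanly. Thus $M_{x,x} \ne 0$ for some $x \in V$ if and only if $M_{x,x} \ne 0$ for all $x \in V$.
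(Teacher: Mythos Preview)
Your proof is correct and is essentially the argument the paper has in mind: the corollary is stated in the paper without proof, as an immediate consequence of Lemma~\ref{thm:cycle2}, precisely because the two conditions listed there (each color appears an even number of times, or the colors reduce to a codeword of $C$) depend only on the path $P_M$ and not on the basepoint $x$. Your translation into the quotient bit-string representation, reducing the question to whether $s \in C$, makes this vertex-independence explicit and is exactly the right way to spell it out.
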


\begin{lemma}
\label{thm:cycle3}
 In case (ii) of lemma \ref{thm:cycle2}, with $M_{x,x} = \pm 1$ for all $x \in V$, $M_{x,x} = M_{y,y}$ if and only if $x$ and $y$ have the same inner product with the codeword corresponding to $P_M$.
\end{lemma}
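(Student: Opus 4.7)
The plan is to read $M_{x,x}$ off as a walk parity and reduce the lemma to an algebraic identity. Since each $\gamma$-matrix has entries in $\{0, \pm 1\}$, along the closed walk $P_M = (i_1, \ldots, i_t)$ we have $M_{x,x} = (-1)^{\sigma(x)}$, where $\sigma(x) \equiv \sum_{s=1}^{t} \pi_{i_s}(v^{(s-1)}) \pmod{2}$ and $v^{(s)}$ is the vertex reached after $s$ steps from $v^{(0)} = x$. The lemma is therefore equivalent to
\begin{equation*}
\sigma(y) - \sigma(x) \equiv \langle x, c\rangle + \langle y, c\rangle \pmod{2},
\end{equation*}
where $c$ is the codeword of Lemma~\ref{thm:cycle} associated with $P_M$.

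By Theorem~\ref{thm:construction} and Lemma~\ref{thm:unique}, I may assume $G$ is in standard form: any closed walk uses an even number of edges at each vertex, so vertex switchings leave every $M_{x,x}$ unchanged and the statement is switching-invariant. In standard form the edge parities are explicit (Definition~\ref{def:stdcube} and equation~(1) of Theorem~\ref{thm:construction}): $\pi_i(v) = v_1 + \cdots + v_{i-1}$ for $i \le N-k$, and $\pi_i(v) \equiv |v| + C(v, A_i) \pmod{2}$ for $i > N-k$, where $C(a,b) := \sum_{j < j'} b_j a_{j'}$ denotes the ordering sign from right Clifford multiplication. Both expressions are $\mathbb{F}_2$-affine in the vertex label, so for $u = x \oplus y$ the per-step differences $\pi_{i_s}(v \oplus u) - \pi_{i_s}(v)$ depend only on $u$.

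Summing these differences and splitting the walk into $T = \{s : i_s \le N-k\}$ and $S = \{s : i_s > N-k\}$, let $\chi_j \in \mathbb{F}_2$ be the parity of the number of appearances of edge dimension $N-k+j$ in $P_M$. Lemma~\ref{thm:cycle} together with the standard-form description of $C$ forces $\sum_{s \in T} e_{i_s} \equiv \sum_j \chi_j A_j \equiv c'$, where $c'$ is the restriction of $c$ to its first $N-k$ coordinates. A short exchange of summation then reorganizes the total into
\begin{equation*}
\sigma(y) - \sigma(x) \equiv C(c', u) + C(u, c') + |\chi|\,|u| \pmod{2}.
\end{equation*}

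The final step, where the doubly-even hypothesis enters essentially, is the cancellation. I would apply the symmetric identity $C(a,b) + C(b,a) \equiv |a||b| + \langle a, b\rangle \pmod{2}$ to rewrite the cross-terms as $|c'||u| + \langle u, c'\rangle$, and then invoke $|c| = |c'| + |\chi| \equiv 0 \pmod{2}$ (which follows from $|c| \equiv 0 \pmod{4}$) to kill the remaining $|u|$-coefficient, leaving $\sigma(y) - \sigma(x) \equiv \langle u, c'\rangle = \langle x, c\rangle + \langle y, c\rangle \pmod{2}$. This is the desired biconditional. The main obstacle is the combinatorial bookkeeping of the ordering-sign contributions along the walk; double-evenness is used only at the very end, precisely to cancel the residual $|u|$ term.
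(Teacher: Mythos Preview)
Your proof is correct and shares the paper's overall strategy --- pass to standard form and use the explicit edge-parity formulas --- but the organization differs in ways worth noting. The paper first reduces the product $\gamma_{i_t}\cdots\gamma_{i_1}$ to $\pm p$ for a codeword $p$ via Clifford cancellation, then computes $M_{x,x}$ \emph{absolutely} by walking along a canonical path for $p$ (the low-edge prefix followed by a single high edge when $p$ lies in the standard generating set $D$, with a separate case for sums of generators), arriving at $M_{x,x}=a(-1)^{\langle x,p\rangle}$. You instead stay with the original walk $P_M$, compute only the \emph{difference} $\sigma(y)-\sigma(x)$ using the mod-$2$ affine dependence of each $\pi_{i_s}$ on the vertex label, and organize the cross-terms through the ordering sign $C(\cdot,\cdot)$ and its symmetric identity $C(a,b)+C(b,a)\equiv |a||b|+\langle a,b\rangle$. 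Your route sidesteps the Clifford-reduction step and the generator/non-generator case split, and the closed-walk argument you give for switching-invariance (even edge-incidence at every vertex) is a clean justification that the paper leaves implicit; on the other hand, the paper's absolute computation of each $M_{x,x}$ is exactly what feeds the trace formula in Lemma~\ref{thm:chinull}, which your difference-only calculation does not immediately supply.
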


\begin{proof}
 Consider the product of Clifford generators corresponding to the path $P_M$. Since (ii) holds, after cancelling repeated elements we are left with some codeword $p \in C$, such that $p = a P_M$, where $a = \pm 1$, depending on whether $P_M$ corresponds to an even or odd permutation of $p$, relative to shifting and cancelling of Clifford generators. Since $C$ is a group, either
 \begin{itemize}
  \item[(i)] $p \in D$, a standard form generating set of $C$, or
  \item[(ii)] $p = (g_1 \boxplus g_2 \boxplus \ldots \boxplus g_r)$, where $g_i \in D$, for any such $D$.
 \end{itemize}
 Assume (i) holds. Then $p = (p1,p2,\ldots,p_t)$ such that $p_i \le (N-k)$ for $i \ne t$, and $p_t > (N-k)$, with the first $(N-k)$ edge dimensions defined relative to the particular standard form generating set $D$ being considered. In other words, $p_t = p_1 \cdot p_2 \cdot \ldots \cdot p_{t-1}$. Then $\forall x \in G, x = (x_1 x_2 \ldots x_{N-k})$, the sign of $M_{x,x}$, omitting a factor of $\frac{a-1}{2}$, is given by (substituting the formulas for $\pi_i(x)$ of Section \ref{sec:notation2})
 \begin{align}
  (p_1 \cdot x &+ p_2 \cdot x + \ldots + p_{t-1} \cdot x) + x \cdot p_t & \nonumber \\
  &\equiv p_2(x_1) + p_3(x_1+x_2) + \ldots + p_{t-1}(x_1+x_2+\ldots+x_{t-2})  \nonumber \\
  & {~~~~} 
  + p_1(x_2+\ldots+x_{t-1}) + \ldots + p_{t-2}(x_{t-1}) + |x|  \nonumber \\
  &\equiv (|p|-1)|x| - \sum_{i=1}^{t-1}(p_i x_i) + |x|  \nonumber \\
  &\equiv |x||p| + \langle x,p\rangle \pmod{2}
 \end{align}
 Note that $|p| \equiv 0 \pmod{4}$, so the first term disappears. The same arguments can be followed to show that this holds for case (ii) also. Then since the factor of $\frac{a-1}{2}$ is constant for all $x \in G$, we have for all $x,y \in G$, $M_{x,x} = M_{y,y}$ if and only if $\langle x,p\rangle = \langle y,p\rangle$.
\end{proof}

One immediate corollary of the preceeding lemma is that the trace of $M$ will vanish whenever $P_M$ corresponds to a codeword in $C$. In fact, Tr($M$) will only be non-zero in the trivial case where the path $P_M$ consists of a set of pairs of edges of the same color. These are the paths corresponding to the Kroencker delta terms of equation \ref{eqn:trace}. In particular, this implies that Tr($M$) cannot distinguish between equivalence classes of Adinkras directly. However if we instead consider powers of L and R matrices, the preceeding lemma suggests a direct generalization of equation \ref{eqn:trace} to all $(N,k)$ Adinkras.

Given an $(N,k)$ Adinkra, consider the product of $N$ L matrices, $\textrm{L}_{i_1} \textrm{L}_{i_2}^\textrm{T} \ldots \textrm{L}_{i_t}^\textrm{T}$. Denoting the path $p = (i_1,i_2,\ldots,i_N)$, we define the value $\sigma_p$ such that $\sigma_p = 0$ if there exists an edge dimension in $p$ that is present $1 \pmod{2}$ times, and $\sigma_p = a$ otherwise. Here $a = \pm 1$, corresponding to the \emph{sign} of the related product of Clifford generators (since $p$ consists of pairs of edges of the same color, this product of Clifford generators equals $\pm 1$). Then we have the following result.

\begin{lemma}
\label{thm:chinull}
 The trace of $\textrm{L}_{i_1} \textrm{L}_{i_2}^\textrm{T} \ldots \textrm{L}_{i_{t-1}} \textrm{L}_{i_t}^\textrm{T}$, where $p = (i_1,i_2,\ldots,i_N)$, equals
 \begin{align}
  2^{N-k-1} \,  (\sigma_p +  \chi{}_{{}_0}  \epsilon_p),
 \end{align}
 where $ \chi{}_{{}_0} = \pm 1$ depending on the equivalence class of the $(N,k)$ valise Adinkras.
\end{lemma}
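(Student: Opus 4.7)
The strategy I would follow is to reduce the computation to the $\gamma$-matrix setup already developed in Lemmas \ref{thm:cycle}--\ref{thm:cycle3}. Since $\gamma_{i}\gamma_{j}$ is block-diagonal with upper-left block $L_{i}L_{j}^{\mathrm{T}}$, the trace in question equals the trace of $M:=\gamma_{i_1}\gamma_{i_2}\cdots\gamma_{i_N}$ restricted to the $2^{N-k-1}$-dimensional boson sub-block. Let $S\subseteq\{1,\ldots,N\}$ collect the edge colors appearing an odd number of times in the path $p=(i_1,\ldots,i_N)$. By Lemma \ref{thm:cycle2}, the diagonal of $M$ vanishes unless $S=\emptyset$ or the indicator of $S$ is a codeword of $C$, so I would split into exactly these cases.

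The case analysis then proceeds as follows. If $S=\emptyset$, the Clifford relations collapse $M$ to $a\cdot I$ on bosons with $a=\sigma_p\in\{\pm 1\}$, yielding trace $\sigma_p\cdot 2^{N-k-1}$; moreover $\epsilon_p=0$ because a length-$N$ word in which every letter has even multiplicity cannot be a permutation of $(1,\ldots,N)$. If $S$ is not a codeword of $C$, then Lemma \ref{thm:cycle2} forces the diagonal of $M$ to vanish identically and the trace is $0$; simultaneously $\sigma_p=0$, and $\epsilon_p$ can be non-zero only if $S=(1,\ldots,1)$, in which case $(1,\ldots,1)\notin C$ means (via Corollary \ref{thm:klein}) that the Adinkra has no Klein degeneracy and we consistently set $\chi_0=0$. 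If $S$ is a non-trivial codeword of $C$ with $S\neq(1,\ldots,1)$, Lemma \ref{thm:cycle3} says $M_{x,x}$ depends on $\langle x,S\rangle\pmod 2$; this linear form is non-constant on the boson subgroup (the kernel of $|\cdot|\pmod 2$) precisely because $S$ is not in the span of the all-ones vector, so positive and negative contributions cancel in the trace, and $\epsilon_p=\sigma_p=0$ as required.

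The substantive case is $S=(1,\ldots,1)\in C$ (Klein degeneracy). Here $p$ must be a permutation of $(1,\ldots,N)$, so $\epsilon_p=\mathrm{sgn}(p)\in\{\pm 1\}$; and $\langle x,(1,\ldots,1)\rangle=|x|\equiv 0\pmod 2$ on bosons, so every $M_{x,x}$ in the boson block shares a common value $\pm 1$ and the trace equals $\pm 2^{N-k-1}$. The task reduces to showing this common sign factors as $\chi_0\cdot\epsilon_p$ with $\chi_0$ depending only on the equivalence class of $G$.

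The main obstacle lies precisely in this last decomposition. My plan is to use the anticommutativity $\gamma_i\gamma_j=-\gamma_j\gamma_i$ (equivalent to the 4-cycle dashedness condition of Section \ref{sec:notation2}) to argue that swapping two adjacent entries of $p$ simultaneously flips $\epsilon_p$ and reverses the common diagonal value of $M$, so that $\chi_0:=M_{x,x}/\epsilon_p$ is invariant under permutations of $p$ and is therefore a genuine function of the Adinkra rather than of the chosen path. Finally, comparing the two Adinkras distinguished in Theorem \ref{thm:construction} by the graph-wide $\pm 1$ factor on the additional $k$ edges should reveal that flipping this factor reverses every one of the $k$ extra-edge Clifford contributions to $M_{x,x}$ in the all-ones case, so the two equivalence classes realize opposite values of $\chi_0$, completing the identification.
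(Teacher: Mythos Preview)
Your case analysis is considerably more explicit than the paper's brief proof, and the first three cases ($S=\emptyset$; $S$ not a codeword; $S$ a non-trivial codeword other than the all-ones word) are handled correctly and in essentially the same spirit as the paper, which simply defers the $\sigma_p$ contribution to Lemmas~\ref{thm:cycle2} and~\ref{thm:cycle3}. Your observation that one must set $\chi_0=0$ when $(1,\ldots,1)\notin C$ is a reasonable reading of the statement and is consistent with Corollary~\ref{thm:klein}.

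The genuine gap is in your final paragraph. You propose to compare the two Adinkras produced by the construction of Theorem~\ref{thm:construction}, which differ by the graph-wide $\pm 1$ on the extra $k$ edge dimensions, and to conclude that this flip reverses $\chi_0$. But flipping the sign of each of $\gamma_{N-k+1},\ldots,\gamma_N$ multiplies $M=\gamma_{i_1}\cdots\gamma_{i_N}$ by $(-1)^k$, not by $-1$; so for even $k$ (which does occur with $(1,\ldots,1)\in C$, e.g.\ $N=8$, $k=2$ with generators $(10\,111000)$ and $(01\,000111)$) your argument yields no sign change at all, and you cannot conclude that the two equivalence classes carry opposite $\chi_0$.

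The paper sidesteps this by not invoking the $\pm 1$ construction factor. Instead it uses that the two equivalence classes are related by the Klein flip (Section~\ref{sec:main}), which exchanges bosons and fermions and therefore exchanges which diagonal block of $M$ the $L$-product traces over. Since $\langle x,(1,\ldots,1)\rangle\equiv |x|\pmod 2$, Lemma~\ref{thm:cycle3} gives $M_{x,x}$ opposite constant signs on the boson and fermion blocks, so the trace changes sign under the Klein flip regardless of the parity of $k$. Replacing your last step with this Klein-flip argument repairs the proof.
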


\begin{proof}
 The $\sigma_p$ term follows directly from lemmas \ref{thm:cycle2} and \ref{thm:cycle3}. In Section \ref{sec:main}, we show that $(N,k)$ Adinkras with the same associated code $C$ are all in a single equivalence class, except where $(11\ldots1) \in C$. In this case, the two equivalence classes are related via the Klein flip operation, exchanging bosons and fermions. Moreover, since $c=(11\ldots1) \in C$, the Klein flip operation switches the sign of $\langle x,c\rangle$, for each $x \in V$. Then by lemma \ref{thm:cycle3}, the Adinkras from different equivalence classes correspond to $ \chi{}_{{}_0}$ values of opposite sign.
\end{proof}

Note that if $N=4$ and $p=(i,j,k,l)$, then $\sigma_p = \delta_{i j} \delta_{k l} - \delta_{i k} \delta_{j l} + \delta_{i l} \delta_{j k}$, and the result of equation \ref{eqn:trace} follows. These results can be simplified in some respects if we consider the products of $\gamma$ matrices instead. Recall the definition $M = \gamma_{i_1} \gamma_{i_2} ... \gamma_{i_t}$, where $P_M = (i_1,i_2,\ldots,i_t)$, and the fermion number operator $(-1)^F$. Instead of taking the trace of $M$, consider the trace of $M . (-1)^\textrm{F}$. We have
\be \gamma_\textrm{i} = \left[\begin{array}{cc}
 0 & \textrm{L}_\textrm{i} \\
 \textrm{R}_\textrm{i} & 0                    
 \end{array} \right] \quad \textrm{and} \quad
(-1)^\textrm{F} = \left[\begin{array}{cc}
 \textrm{I} & 0 \\
 0 & -\textrm{I}
\end{array}\right],\ee
and we are considering only $\textrm{L} = \textrm{R}^\textrm{T}$, so for even $t$,
\begin{align}
 M = \left[\begin{array}{cc}
 \textrm{L}_{i_1} \textrm{L}_{i_2}^\textrm{T} \ldots \textrm{L}_{i_t}^\textrm{T} & 0 \\
 0 & \textrm{L}_{i_1}^\textrm{T} \textrm{L}_{i_2} \ldots \textrm{L}_{i_t}
 \end{array} \right].
\end{align}
Replacing L matrices by R matrices in lemma \ref{thm:chinull} simply changes the sign of
$ \chi{}_{{}_0}$, so for an $(N,k)$ Adinkra,
\begin{align}
 \textrm{Tr}(M \cdot (-1)^\textrm{F}) = 2^{N-k} \,  \chi{}_{{}_0} \epsilon_p.
\end{align}

\setcounter{equation}{0}
\section{Identifying Isomorphism Classes of Adinkras}
\label{sec:iso}

The results up to this point deal with equivalence classes of valise Adinkras valise case, where we consider only 2-level Adinkras. For the non-valise case, we require a method of partitioning general $(N,k)$ Adinkras into their isomorphism classes. The essential problem in establishing such a method is in classifying the topology of an Adinkra relative to the automorphism group of its underlying 1-level Adinkra (in which vertex labels are ignored). We might consider simply partitioning the vertex set into the orbits of this underlying automorphism group, and then classifying each height by the number of vertices of each orbit that it possesses. However this method will clearly be insufficient, as it ignores the relative connectivity between vertices at different heights. For example, the two Adinkras of Figure \ref{fig:nonvalise} below are in the same equivalence class by use of this set of defintions. They also have the same number of vertices from each orbit at each height, and yet they are clearly non-isomorphic; no relabeling of the vertices or vertex switching operations can map between them.

\begin{figure}[h]
 \begin{center}
 \includegraphics[width=4.5cm]{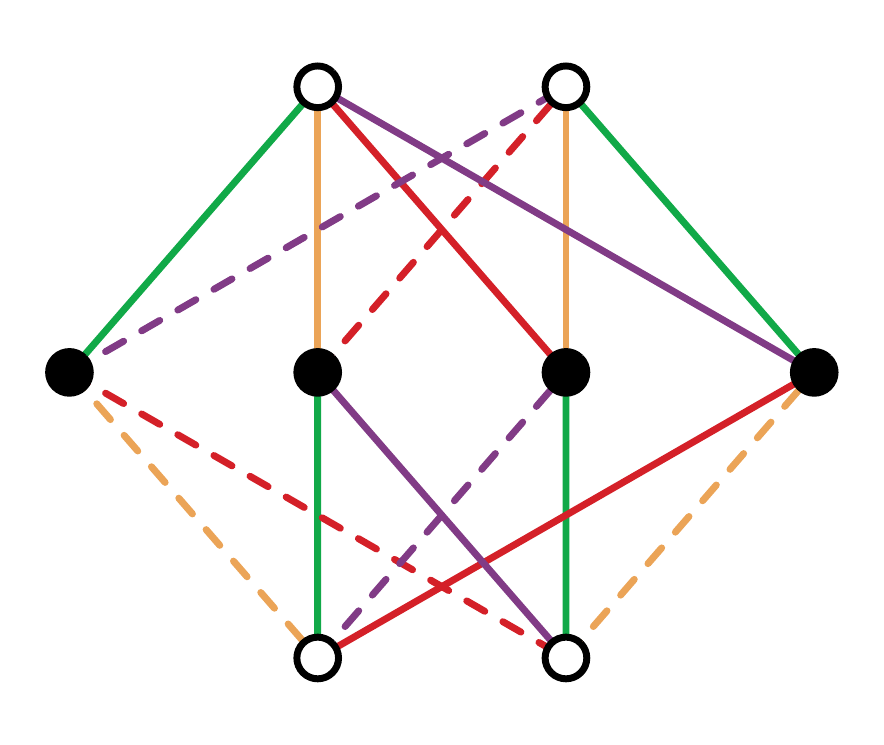} \hspace{1cm} \includegraphics[width=4.5cm]{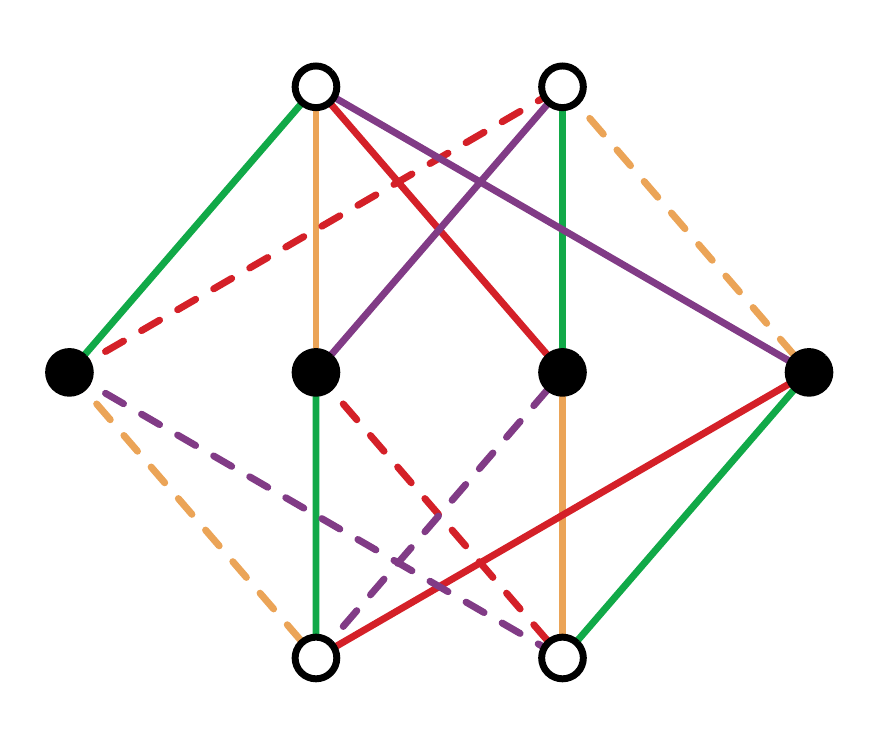}
 \caption{\label{fig:nonvalise} Two equivalent, non-isomorphic Adinkras with the same values of $\mu(h,a)$, for all possible values of $h$ and $a$.}
 \end{center}
\end{figure}

However even this simple method does come close to partitioning non-valise Adinkras into their isomorphism classes. Recall that a pointwise stabilizer of the automorphism group of any Adinkra is the identity - no non-trivial automorphisms exist that fix any vertex\symbolfootnote[2]{Here the term \emph{non-trivial} refers to the corresponding permutation of the vertex set. An automorphism 
\newline $~~~\,~~$
which switches vertices but leaves the ordering of the vertex set unchanged is considered trivial.}. In fact fixing any vertex of an Adinkra yields a natural, canonical ordering of the vertex set, relative to some ordering of the edge dimensions, according to the following construction.
\begin{construction}
 Suppose we are given an $(N,k)$ Adinkra $G(V,E)$, together with an ordering of the edge dimensions from $i_1,i_2,\ldots,i_N$. Then fix (choose) any vertex $v \in V$. We define an ordering $\lambda_v$ of the vertex set relative to $v$, where $\lambda_v : V \mapsto \left[ 2^{N-k} \right]$, in the following way.
 \begin{itemize} 
  \item $\lambda_v(v) = 1$.
  \item Order the neighbours of $v$ from 2 to $N+1$ according to the ordering of the corresponing edge dimensions ($\lambda_v(i_n v) = 1+n$, where $1 \le n \le N$).
  \item Repeat this for vertices at distance 2, beginning with the neighbours of $i_1 v$, and ending with the neighbours of $i_N v$.
  \item Repeat this similarly for vertices at each distance, until all vertices have been assigned an ordering in $\left[ 2^{N-k} \right]$.
 \end{itemize}
\end{construction}
In other words, $\lambda_v$ firstly orders the vertex set according to distance from $v$, then for each of these sets, each vertex $x$ is assigned an ordering based on the lexicographically smallest path from $v$ to $x$.

To formalize the partitioning of heights discussed above, consider an $(N,k)$ Adinkra $G(V,E)$ with $h$ different height assignments. Let $\Gamma_G$ be the automorphism group of the corresponding 1-level Adinkra (ignoring the vertex coloring of $G$). Relative to a particular ordering of the edge dimensions, and a particular generating set of the associated doubly even code, order the $2^k$ orbits of $\Gamma_G$ according to $\tau : V \mapsto \left[2^k\right]$. We then define $\mu_G(h,a)$ to be the number of vertices at height $h$ belonging to the $a^\textrm{th}$ orbit of $\Gamma_G$, such that
\be
\mu_G(h,a) = |\{ v \in V : \textrm{hgt}(v) = h, \tau(v) = a \}|.
\ee
Consider a pair of Adinkras $G(V,E)$ and $H(V,E')$ belonging to the same equivalence class, both having $t$ distinct heights. Then $G$ and $H$ have the same associated code $C$. If $\mu_G(h,a) = \mu_H(h,a)$, for all $1 \le h \le t$, relative to a given edge-color ordering and generating set of $C$, then choose any vertex $v$ from each Adinkra belonging to the same orbit and height. Relative to this vertex $v$, consider the unordered set
\be
\textrm{cert}_G(v) = \{(\lambda_v(x),\textrm{hgt}(x),\tau(x) ) : x \in V \}.
\ee

\begin{theorem}
 \label{thm:iso}
 Given two Adinkras $G$ and $H$ and a vertex $v$ from each with the properties described above. Then $\textrm{cert}_G(v) = \textrm{cert}_H(v)$ if and only if $G$ and $H$ are isomorphic.
\end{theorem}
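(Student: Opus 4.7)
The plan is to split the argument into its two implications and in both directions exploit the fact that fixing any vertex together with an edge-color ordering uniquely determines the BFS labeling $\lambda_v$ of $V$; this is the canonical invariant that underlies the definition of the cert.

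For the backward direction ($\textrm{cert}_G(v_G)=\textrm{cert}_H(v_H)\Rightarrow G\cong H$), I would define a candidate map $\phi:V_G\to V_H$ by $\phi=\lambda_{v_H}^{-1}\circ\lambda_{v_G}$. Equality of the certs immediately forces $\phi$ to be a bijection that preserves heights and preserves the orbit label $\tau$. The substantive point is that $\phi$ also preserves the chromotopology, i.e.\ adjacency together with the $N$-coloring of edges. This follows because $\lambda_v$ is intrinsic to the chromotopology: a vertex's label records its BFS distance from $v$ and the lexicographically smallest colored path reaching it, and $G$ and $H$ share a chromotopology because they share the code $C$. Thus $\phi$ is a chromotopology isomorphism that also preserves heights. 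Observation~\ref{obs1} and the corollaries to Theorem~\ref{thm:main} then show that the only remaining freedom is a set of vertex switchings together with, in the Klein-flip case, a global sign on the extra $k$ edge dimensions that is pinned down by equality of $\tau$. Hence $\phi$ extends via switching operations to an Adinkra isomorphism.

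For the forward direction ($G\cong H\Rightarrow\textrm{cert}_G(v_G)=\textrm{cert}_H(v_H)$), I would start with any Adinkra isomorphism $\psi:G\to H$. The image $\psi(v_G)$ lies at the same height as $v_G$ (hence as $v_H$) and in the same $\Gamma_H$-orbit as $v_H$, because $\psi$ commutes with the constructions producing heights and orbits. I would then establish a preliminary lemma asserting that the full Adinkra automorphism group of $H$ acts transitively on each (height, $\tau$)-pair, and use this to compose $\psi$ with an Adinkra automorphism $\alpha$ of $H$ so that $\alpha\circ\psi$ sends $v_G$ to $v_H$. An Adinkra isomorphism carrying $v_G$ to $v_H$ must identify the BFS labelings, heights, and orbit labels vertex-by-vertex, yielding $\textrm{cert}_G(v_G)=\textrm{cert}_H(v_H)$ directly.

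The main obstacle is exactly this preliminary lemma. In the valise case it is immediate from Theorem~\ref{thm:main}, since every orbit lies at a single height (bosonic or fermionic) and all 1-level automorphisms are already Adinkra automorphisms. In the non-valise case one needs the stricter Adinkra automorphism group $\textrm{Aut}(H)$ --- those 1-level automorphisms that additionally respect the full height function --- to still be transitive on each (height, $\tau$)-pair. I plan to verify this by showing that within a fixed $\tau$-class the subset of vertices at a given height is an orbit of the subgroup of $\Gamma_H$ that stabilizes the height function setwise, which is precisely $\textrm{Aut}(H)$. All remaining steps reduce to routine bookkeeping with BFS labels, switching states, and the standard-form results of Section~\ref{sec:notation2}.
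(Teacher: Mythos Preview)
Your backward direction is essentially the paper's argument written out in full: the paper also produces a map ignoring heights (using that $\tau_G(v)=\tau_H(v)$ gives a 1-level isomorphism sending $v_G$ to $v_H$) and then observes that equality of certificates forces this map to preserve heights, hence to be an Adinkra isomorphism. Your use of $\phi=\lambda_{v_H}^{-1}\circ\lambda_{v_G}$ and the appeal to Observation~\ref{obs1} just makes explicit what the paper compresses into two sentences.

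The forward direction is where you diverge from the paper, and where there is a genuine issue. The paper declares this direction ``trivially true'' without further comment; you correctly notice that it is \emph{not} trivial, because an isomorphism $\psi:G\to H$ sends $v_G$ to some vertex $\psi(v_G)$ which need not equal $v_H$, and one must then argue $\textrm{cert}_H(\psi(v_G))=\textrm{cert}_H(v_H)$. You reduce this to your ``preliminary lemma'' that the height-preserving automorphism group $\textrm{Aut}(H)$ acts transitively on each $(\textrm{height},\tau)$-fibre. However, your proposed verification of that lemma is circular (it simply restates the claim), and in fact the lemma is \emph{false}. Take the $3$-cube Adinkra with the standard height assignment of Example~\ref{ex:3}: here $k=0$, so $\tau$ is constant, and the unique vertex $(000)$ at height~$0$ must be fixed by any height-preserving automorphism; since pointwise stabilisers are trivial, $\textrm{Aut}(G)=\{\textrm{id}\}$. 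Yet there are three vertices at height~$2$, and a direct computation gives, for instance,
\[
(3,1),(4,3),(5,0),(6,2)\in\textrm{cert}_G(011)\quad\text{but}\quad(3,3),(4,1),(5,2),(6,0)\in\textrm{cert}_G(101),
\]
so $\textrm{cert}_G(011)\neq\textrm{cert}_G(101)$ even though $G\cong G$. Thus the forward implication, read literally with ``choose any vertex $v$'', fails; the statement that is actually trivially true (and is presumably what the paper intends) is the existential version: if $G\cong H$ then for each admissible $v_G$ there \emph{exists} an admissible $v_H$ with $\textrm{cert}_G(v_G)=\textrm{cert}_H(v_H)$, obtained by setting $v_H=\psi(v_G)$. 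Your write-up should either adopt this reading or note the discrepancy; the lemma you planned to prove cannot be salvaged.
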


\begin{proof}
 If $G$ and $H$ are isomorphic then this is trivially true. Conversely, assume $\textrm{cert}_G(v) = \textrm{cert}_H(v)$. Then since $G$ and $H$ belong to the same equivalence class, and $\tau_G(v) = \tau_H(v)$, there exists an isomorphism $\gamma$ mapping $G$ to $H$ (ignoring the vertex colorings). Then $\gamma$ extends to a full isomorphism (including the vertex colorings) if it preserved height assignments. This follows directly from $\textrm{cert}_G(v) = \textrm{cert}_H(v)$, hence $G$ and $H$ are isomorphic.
\end{proof}

The preceeding theorem provides an efficient method of classifying Adinkras according to their isomorphism class. 
Note that the classification is relative to a given ordering of the edge colors, and requires a knowledge of the associated doubly even code. In cases where the associated code is unknown, lemma \ref{thm:cycle} suggests an efficient method for finding the code, and hence relating a given $(N,k)$ Adinkra to its `parent' $N$-cube Adinkra, in the sense of theorem \ref{thm:quotient}. In particular, lemma \ref{thm:cycle} implies the following result.

\begin{corollary}
 Given an $(N,k)$ Adinkra with related code $C$, the codewords of $C$ correspond exactly to the cycles of $G$ in which at least 1 edge dimension appears once (modulo 2).
\end{corollary}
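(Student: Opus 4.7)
The plan is to prove this as a direct consequence of Lemma \ref{thm:cycle} together with Theorem \ref{thm:quotient}. The statement has two directions: every cycle with at least one edge dimension of odd multiplicity yields a codeword, and conversely every codeword arises from such a cycle.

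The first direction is essentially immediate: Lemma \ref{thm:cycle} already states that if a cycle in $G$ has at least one edge dimension appearing an odd number of times, then the set of such edge dimensions forms a codeword of $C$. So I would simply cite that case (ii) of Lemma \ref{thm:cycle} and observe that this gives a well-defined map from non-trivial cycles (in the sense of the corollary) to codewords of $C$.

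For the converse direction, I would use the quotienting construction of Theorem \ref{thm:quotient}. Given a codeword $c = (c_1, c_2, \ldots, c_N) \in C$, lift $G$ back to the $N$-cube Adinkra and pick any vertex $v$. Traversing, in order, each edge dimension $i$ for which $c_i = 1$ produces a path in the $N$-cube from $v$ to a vertex $v'$ that differs from $v$ precisely in the positions marked by $c$. Under the quotienting by $C$ these two vertices are identified, so the image of this path in $G$ is a closed walk through $v$. The edge dimensions appearing in this walk are exactly the support of $c$, each appearing once, so in particular each appears an odd number of times. This walk contains a cycle realizing the codeword $c$ in the required sense.

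The only mildly delicate step is to make precise the phrase ``correspond exactly to'': the bijection is between a codeword $c \in C$ and the \emph{support} of the set of edge dimensions appearing with odd multiplicity in the cycle. Lemma \ref{thm:cycle} guarantees well-definedness in one direction, and the lift-to-$N$-cube argument guarantees surjectivity in the other. There is no real obstacle here, since both Theorem \ref{thm:quotient} and Lemma \ref{thm:cycle} do the heavy lifting; the corollary is essentially a reformulation that packages the quotienting map into a statement intrinsic to $G$. The main point worth stressing in the write-up is that this gives an intrinsic recipe for recovering $C$ from $G$ alone, without reference to the $N$-cube parent.
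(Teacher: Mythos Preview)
Your approach is correct and matches the paper's, which gives no explicit proof and simply states that the corollary follows from Lemma~\ref{thm:cycle}. Your explicit treatment of the converse direction via Theorem~\ref{thm:quotient} is exactly the sort of argument one would fill in.

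One small point: drop the hedge ``this walk contains a cycle realizing the codeword $c$.'' The closed walk you construct already has the required property (each dimension in the support of $c$ appears exactly once), and throughout Section~\ref{sec:gamma_matrix} the paper uses ``cycle'' loosely to mean closed walk (see the phrasing ``closed loop'' in Lemma~\ref{thm:cycle2}). If instead you insist on extracting a \emph{simple} cycle from the walk, you can run into trouble: when $c$ is a sum of shorter codewords, the walk may pass through a repeated vertex in $G$ and decompose into shorter simple cycles, each realizing a proper sub-codeword rather than $c$ itself. So just present the closed walk as the object witnessing $c$ and avoid the extraction step.
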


We provide an example of the above certificates below. Consider the two Adinkras of Figure \ref{fig:cert1}. These are both height-3, $(5,1)$ Adinkras. They are in the same equivalence class, however by calculating the above certificate for each we will show that they are non-isomorphic.

\begin{figure}[h]
 \begin{center}
 \includegraphics[width=9.5cm]{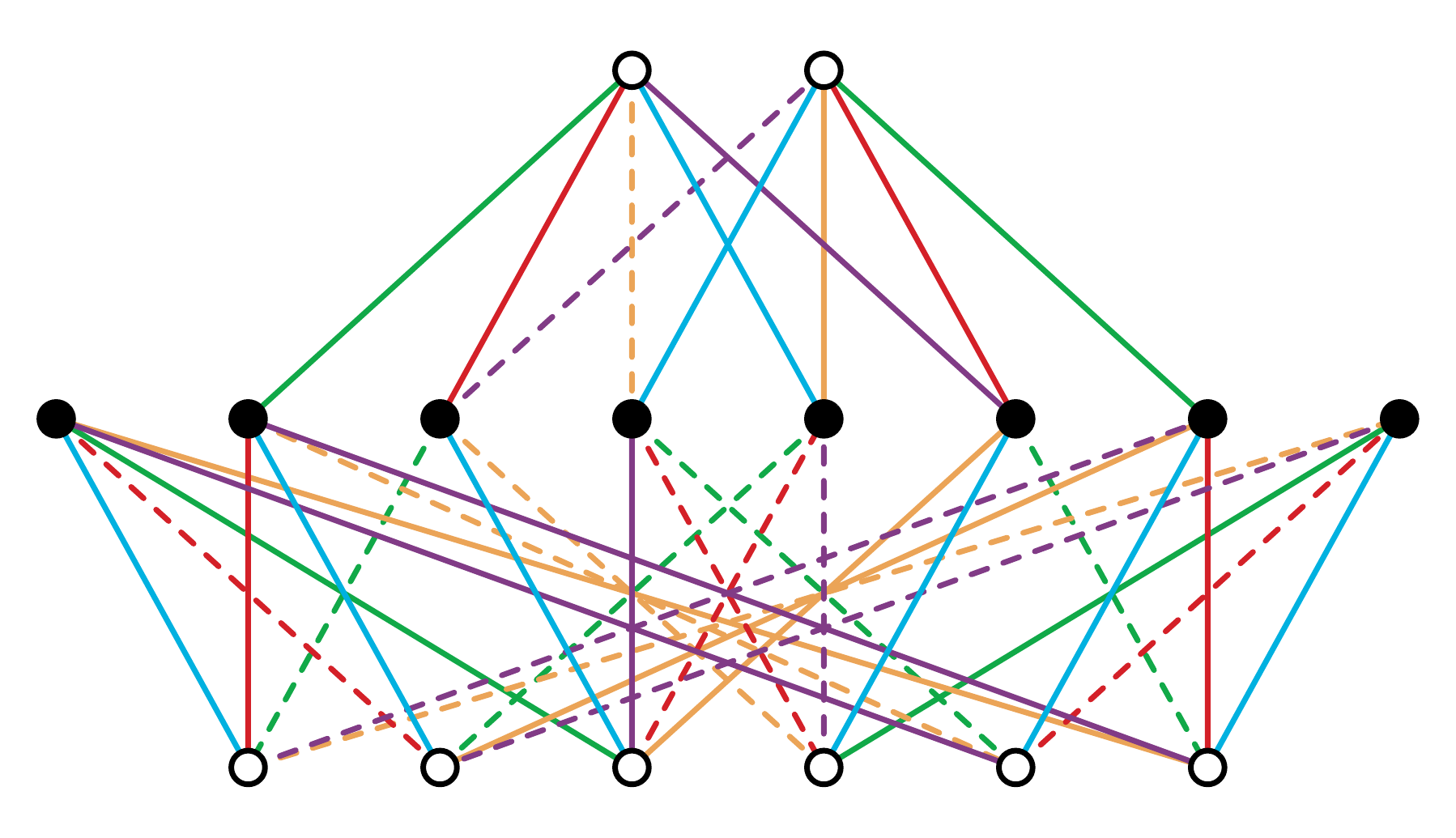} \hspace{1cm} \includegraphics[width=9.5cm]{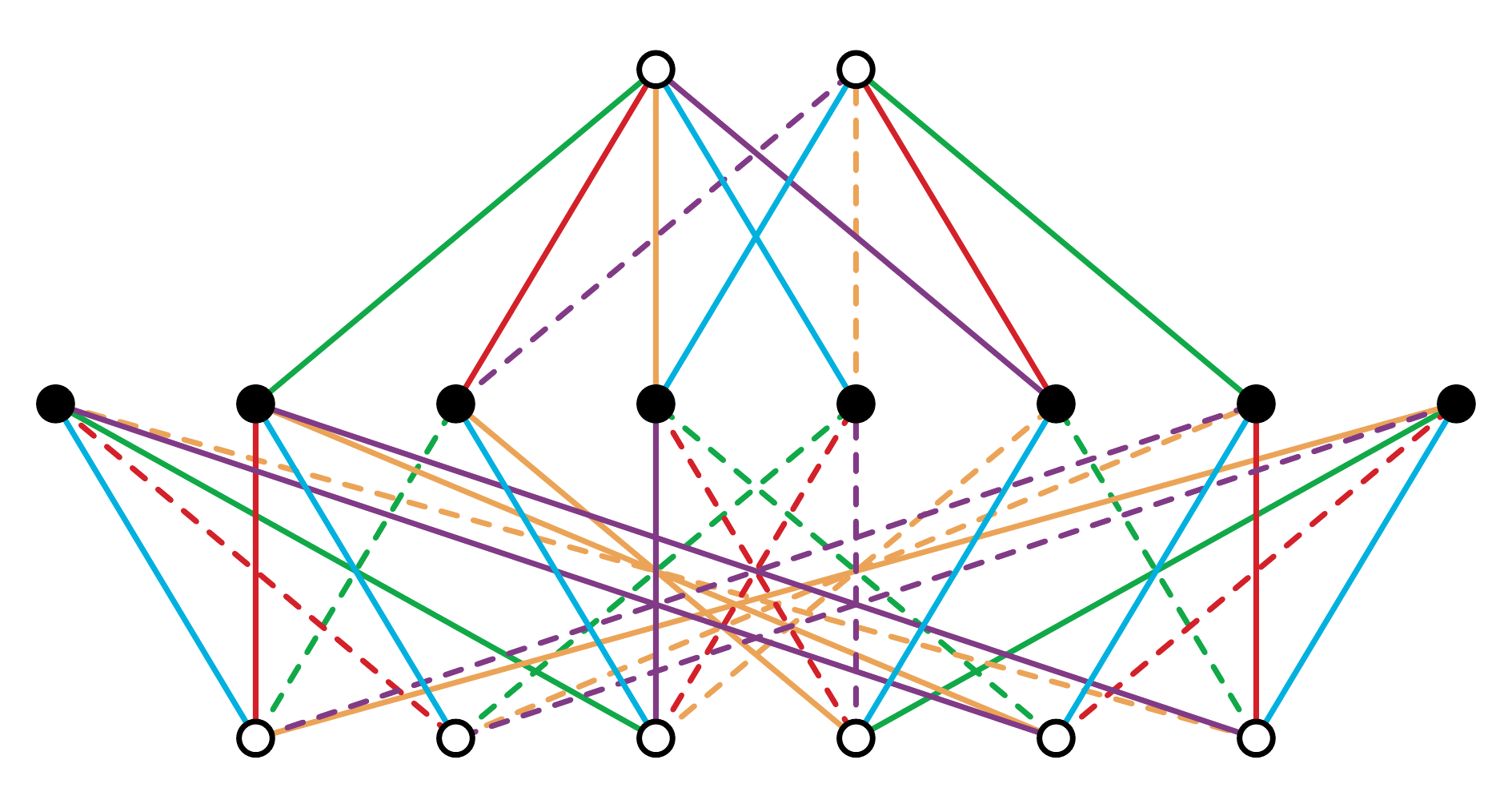}
 \caption{\label{fig:cert1} Two equivalent, non-isomorphic $(5,1)$ Adinkras. They differ precisely in the switching state of the yellow edges.}
 \end{center}
\end{figure}

To analyse the above Adinkras, we first order the edge colors from green to purple, such that in the bit-string representation, green corresponds to the first bit, and purple to the fifth. Alternatively, the green edges are associated with the first Clifford generator, and the purple edges to the fifth generator. As these are $(5,1)$ Adinkras, they have an associated $(5,1)$ double even code. By inspection, or by explicitly finding the non-trivial cycles in the Adinkras, we note that the associated code is $(11110)$\symbolfootnote[2]{Note that `first' bit of the bit-string is the right-most bit.}, corresponding to a 4-cycle comprising edge colors $\{\textrm{red, yellow, blue, purple}\}$. By theorem \ref{thm:main}, the induced 1-level Adinkra has two orbits, corresponding to the sets of nodes with the same parity (inner product modulo 2) relative to this code. Denote the two Adinkras by $G$ and $H$ respectively. Then by applying the results of Section \ref{sec:main}, we obtain $\mu$ values of:

\begin{align}
\label{ex:orbits}
 &\mu_G(3,1) = 2 & & \mu_H(3,1) = 0 \notag\\
 &\mu_G(3,2) = 0 & & \mu_H(3,2) = 2 \notag\\
 &\mu_G(2,1) = \mu_G(2,2) = 4 & & \mu_H(2,1) = \mu_H(2,2) = 4 \\
 &\mu_G(1,1) = 2 & & \mu_H(1,1) = 4 \notag\\
 &\mu_G(1,2) = 4 & & \mu_H(1,2) = 2 \notag
\end{align}

In other words, the two height-3 vertices of $G$ are in a different orbit to those of $H$. Hence $\textrm{cert}_G \ne \textrm{cert}_H$, and the two Adinkras are in different isomorphism classes. Additional examples illustrating this method of partitioning vertices into isomorphism classes are presented in Appendix \ref{app:iso}.

\setcounter{equation}{0}
\section{Conclusion}

This work provides a graph theoretic characterization of Adinkras, in particular classifying their automorphism groups according to an efficiently computable set of local parameters.
In the current work, we have been able to make a number of comparisons to previous work.  The connection between Adinkras
and codes \cite{Doran083} has been reexamined and found to be robust.  However, we have also exploited this connection to utilize the {\em {standard form}} leading to more computational efficient
algorithms for the study of Adinkras.  As well, the observations based on matrix methods
used within the context of d $=$ ${ N}$ $=$ 4
\cite{Gates09} have now been extended by a formal proof to all values of d and $N$.  Also as
emphasized in appendix A, numerical studies up to values of $N$ = 16 provide additional concurrence.
These results support the proposal that $ \chi{}_{{}_0}$ `chi-null' is a class valued
function defined on valise Adinkras.  

All non-valise Adinkras, through a series of node raising
and lowering can be brought to the form of a valise Adinkra.  In this sense $ \chi{}_{{}_0}$ 
is defined for all Adinkras.  However, for non-valise Adinkra, $ \chi{}_{{}_0}$ is not sufficient
to define classes.  For this purpose, the new certificate $\m_{\cal A}(h, \, a)$, where $\cal A$ is
an arbitrary Adinkra, seems to fill in a missing gap.

It is the work of future investigations to explore whether these tools ($ \chi{}_{{}_0}$ and
$\m_{\cal A}(h, \, a)$) are sufficient to attack the problem of the complete classification of
one-dimensional off-shell supersymmetrical systems.  One obvious future avenue of study is to investigate the
role codes play in Gnomon Adinkras.  This as part of continuing to attack the general
problem presents continuing challenges.

 \vspace{.55in}
 \begin{center}
 \parbox{4.3in}{{\it ``Mathematics: The science of skillful operations with concepts $\,~~$
 $~$~ and rules invented for this purpose.''\\ $\,~~$
 }\,--  E.\ Wigner}
 \end{center}

 \vspace{.55in}
 \noindent
{\bf Acknowledgements}\\[.1in] \indent
This work was partially supported by the National Science Foundation grants 
PHY-0652983 and PHY-0354401. This research was also supported in part by the 
endowment of the John S.~Toll Professorship and the University of Maryland Center for 
String \& Particle Theory.   Adinkras 
were drawn with the aid of the {\em  Adinkramat\/}~\copyright\,2008 by G.~Landweber.
We also extend an additional  note of appreciation to him for a critical reading of
our work and for sharing unpublished results with us. 
S.J.G. very gratefully wishes to acknowledge the University of Western Australia and 
especially the Institute for Advanced Study for warm hospitality (especially I.\ McArthur 
and S. Kuzenko) and a most stimulating location and atmosphere that marked the initiation 
of this study. B.L.D and J.B.W. would also like to acknowledge support from The University
of Western Australia, in particular the Hackett Scholarship.

\setcounter{equation}{0}
\newpage
\appendix
\section{Numerical Results and Methods}
\label{app:numerical}

The automorphism group results of the preceeding sections were also verified numerically, independently to the analytical results. All $(N,k)$ Adinkras up for $N \le 16$ were produced, and the related automorphism group and equivalence classes were calculated for each such Adinkra. The results found were consistent with the analytical results described in the earlier sections. In particular:

\begin{itemize}
 \item All $(N,k)$ Adinkras with the same associated code $C$ were found to be in the same equivalence class, except in the cases where $(11\ldots1) \in C$. In these cases, the Adinrkas were split into two equivalence classes, related via the Klein flip operation, as described in corollary \ref{thm:klein}.
 \item All 1-level $(N,k)$ Adinkra had $2^k$ orbits, each consisting of sets of vertices with the same set of inner products with the codewords in $C$, according to theorem \ref{thm:main}.
\end{itemize}

Of more interest as supplementary material may be some details regarding the methods used to obtain these numerical results. All doubly even $(N,k)$ codes up to $N=28$ (and many larger parameter sets) can be found online (Miller, \cite{rlmiller}). For each parameter set $(N,k)$, the two possibly inequivalent standard form Adinkras corresponding to the construction method of theorem \ref{thm:construction} were produced relative to each doubly even code on these parameters. As we were considering only equivalence classes and automorphism groups of \emph{valise} Adinkras, it sufficed to use the adjacency matrix form defined in Section \ref{sec:gamma_matrix} for this analysis, as height assignments need not be encoded in the representation. The orbits were then calculated from the adjacency matrices by forming a canonical form relative to each node via a set of switching operations and a permutation of the vertex set. A canonical form is defined to be a mapping $\pi$, consisting of a permutation of the vertex labels and set of vertex switching operations, such that for any two Adinkras $G$ and $H$, $\pi(G) = \pi(H)$ if and only if $G$ and $H$ are isomorphic. 

Then given a vertex $v$ and adjacency matrix $A$ of an $(N,k)$ Adinkra (together with an ordering of the edge colors), we define the following canonical form of $A$ relative to $v$.

\begin{itemize}
 \item[(i)] Permute the ordering of the vertices relative to their connections to $v$ and the ordering of edge colors, such that the vertices are ordered: 

$(v,i_1 v,\ldots, i_N v, i_2 i_1 v, \ldots, i_N i_1 v, i_3 i_2 v, \ldots, i_N i_{N-1} \ldots i_1 v)$.
 \item[(ii)] Switch vertices $(i_1 v, \ldots, i_N v)$ such that all edges of $v$ have even parity.
 \item[(iii)] Repeat for neighbors of vertex $i_1 v$.
 \item[(iv)] Repeat for each vertex, in the ordering above, such that edges appearing lexicographically earlier in the adjacency matrix are of even parity where possible.
\end{itemize}

This defines a unique switching state of the Adinkra, up to isomorphism. Furthermore it is a canonical form, in that any two vertices belonging to the same orbit result in identical matrix forms. To illustrate the above process, we consider an example $(6,2)$ Adinkra with associated code generated by
$\begin{pmatrix}
 1\,0\,1\,1\,1\,0\\[-1mm]
 0\,1\,1\,1\,0\,1\\[-1mm]
\end{pmatrix}$.
The standard form valise Adinkra is shown below, where the top leftmost vertex has bit-string $(0000)$, and its edges are ordered lexicographically from left to right.

\begin{center}
 \includegraphics[width=5.5cm]{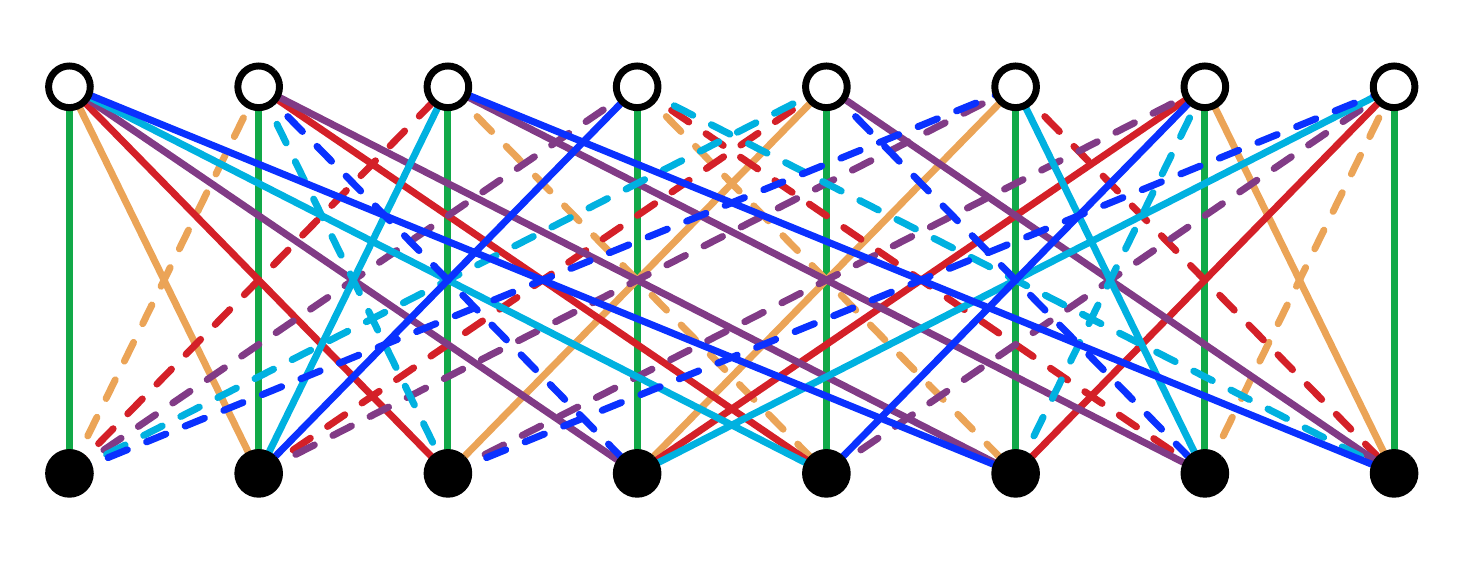}
\end{center}

Consider the process of converting this Adinkra to canonical form, relative to vertex $(0011)$ (the fifth white vertex from the left in the above figure). If we order the vertices by height first, then left to right, step (i) corresponds to the set of vertex lowering operations leading to the Adinkra:

\begin{center}
 \includegraphics[width=5.5cm]{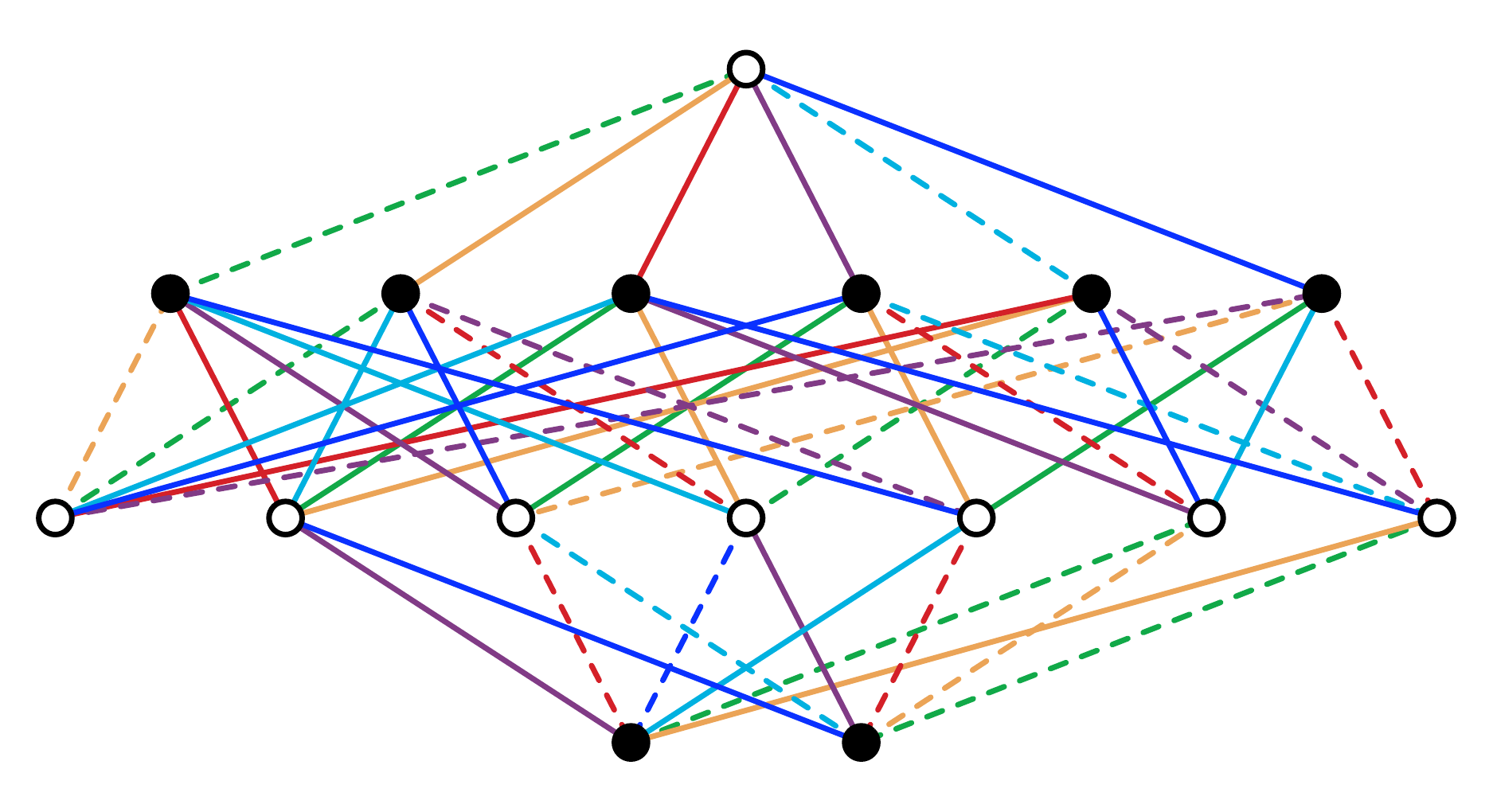}
\end{center}

Steps (ii) and (iii) correspond to a set of switching operations, permuting the switching state in the following stages:

 \begin{center}
 \begin{tabular}{ccc}
  \includegraphics[width=5.5cm]{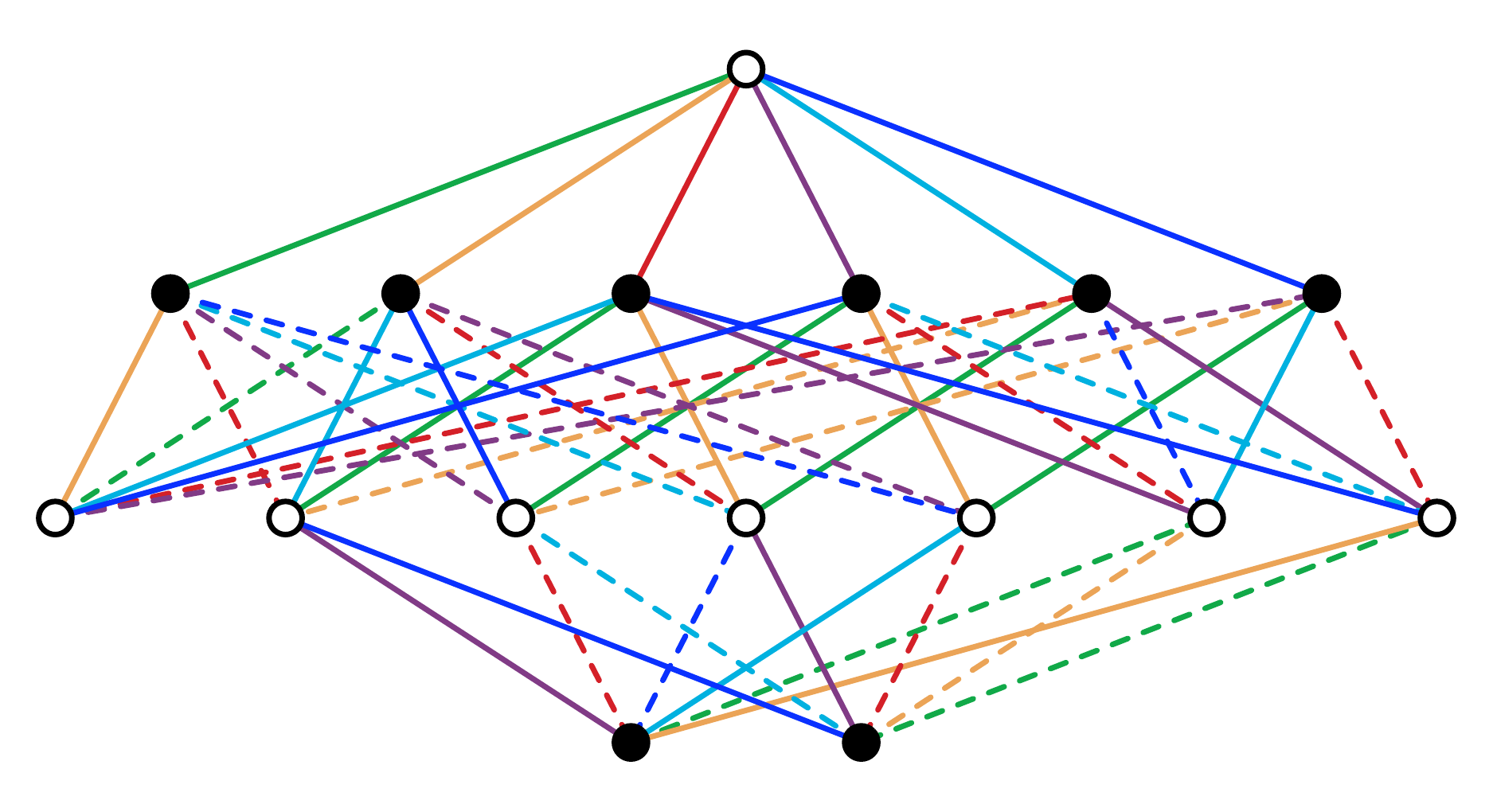} & \hspace{0.5cm} \raisebox{1.0cm}{\parbox{1.0cm}{$\Longrightarrow$}} \hspace{0.5cm} & \includegraphics[width=5.5cm]{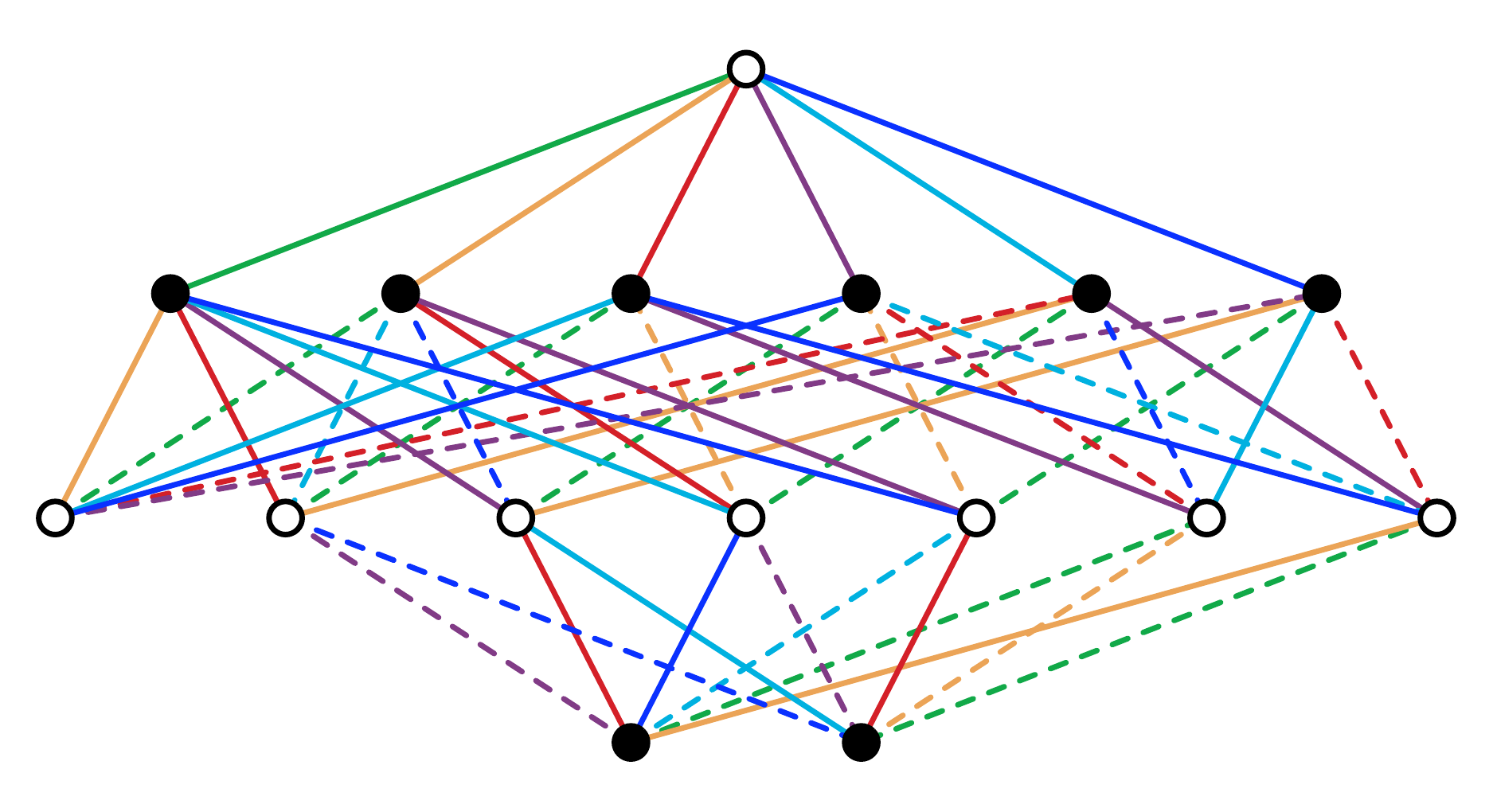}
 \end{tabular}
 \end{center}

Step (iv) then involves the remaining set of transformations:

 \begin{center}
 \begin{tabular}{ccc}
  \includegraphics[width=5.5cm]{source_12_2.pdf} & \hspace{0.5cm}  \raisebox{1.0cm}{\parbox{1.0cm}{$\Longrightarrow$}} \hspace{0.5cm} & \includegraphics[width=5.5cm]{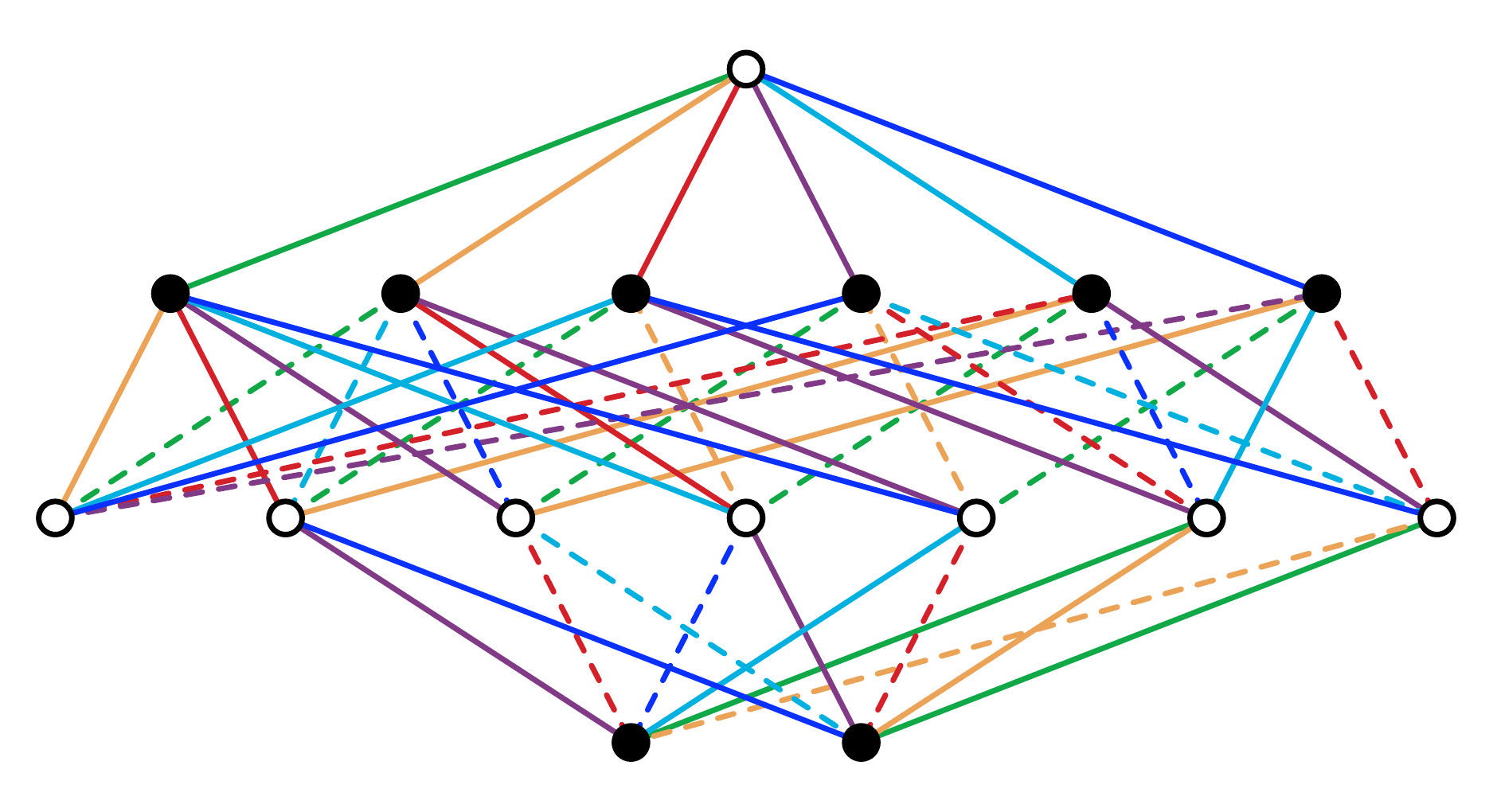}
 \end{tabular}
 \end{center}

At which point the Adinkra is now in canonical form. The particular switching state of this resulting Adinkra is unique to choices of source vertices in the same orbit, in this case the set of vertices ${(0000),(0011),(1110),(1101)}$.

\section{Partitioning into Isomorphism Classes: Further Examples}
\label{app:iso}

In \cite{counter} it was shown that simply recording the number of vertices at each height is not sufficient to characterize Adinkras. In particular, they provide examples of pairs of Adinkras which are in the same equivalence class, but not isomorphic, despite having the same number of vertices at each height. In this section, we will analyze the examples of \cite{counter} using the techniques of Section \ref{sec:iso}.

Firstly however, we note that a different definition of isomorphism is considered in \cite{counter}. Specifically, they consider the situation where permutations of the edge colors preserve isomorphism. Allowing this more general definition of isomorphism results in several changes to the results of the preceeding sections. In particular, note that the automorphism group of the $N$-cube Adinkra would simply become that of the $N$-cube: the hyperoctahedral group of order $2^{N}.N!$. A permutation of the edge colors corresponds to a reordering of the bit-string associated with each vertex (or equivalently a permutation of the Clifford generators associated each edge dimension). Hence the automorphism group of $(N,k)$ Adinkras will be extended according to the following lemma.

\begin{lemma}
\label{lem:app}
 Allowing permutations of the edge colors to preserve isomorphism, an $(N,k)$ Adinkra $G$ with associated doubly even code $C$ has an automorphism group (ignoring height assignments) of order $$\frac{2^N}{2^{2k+a}} |\textrm{Aut}(C)|,$$ where Aut($C$) is the automorphism group of the code $C$, and where $a = 1$ if $C$ contains the all-1 codeword, and $a = 0$ otherwise.
\end{lemma}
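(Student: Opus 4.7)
The plan is to parameterize new-sense automorphisms of $G$ by pairs $(\sigma,\phi)$, where $\sigma\in S_N$ records the induced permutation of edge colors and $\phi$ is an old-sense isomorphism from $\sigma(G)$ to $G$ (writing $\sigma(G)$ for the Adinkra obtained from $G$ by relabelling edge colors via $\sigma$), and then to invoke the size formula for $|\textrm{Aut}_{\textrm{old}}(G)|$ proved at the end of Section \ref{sec:main}.

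I would first show that every new-sense automorphism factors uniquely in this form. Since the associated doubly even code of $\sigma(G)$ is $\sigma(C)$, an old-sense isomorphism from $\sigma(G)$ to $G$ forces $\sigma(C)=C$, i.e. $\sigma\in\textrm{Aut}(C)$. Conversely, for each $\sigma\in\textrm{Aut}(C)$ the Adinkra $\sigma(G)$ lies in the same equivalence class as $G$ by theorem \ref{thm:construction}, up to the two-fold $\chi_0$ ambiguity of the construction; when the $\chi_0$ signs disagree (which, by corollary \ref{thm:klein}, is only possible when $(1\cdots1)\in C$), the lift $\phi$ can be built by composing with the Klein-flip--type adjustment that flips the sign. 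Thus a lift always exists, and for each admissible $\sigma$ the set of such lifts forms a coset of $\textrm{Aut}_{\textrm{old}}(G)$ inside the vertex permutation group, hence has cardinality $|\textrm{Aut}_{\textrm{old}}(G)|$.

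Multiplying gives $|\textrm{Aut}_{\textrm{new}}(G)|=|\textrm{Aut}(C)|\cdot|\textrm{Aut}_{\textrm{old}}(G)|$, and substituting the value $|\textrm{Aut}_{\textrm{old}}(G)|=2^{N-2k-a'}$ from the corollary following theorem \ref{thm:main} (with $a'=0$ if $(1\cdots1)\in C$ and $a'=1$ otherwise), together with the convention translation $a=1-a'$ used in the appendix statement, yields the claimed expression. The main obstacle is making the lifting step precise: by the trace characterization of $\chi_0$ in lemma \ref{thm:chinull}, an odd $\sigma\in\textrm{Aut}(C)$ acts on $\chi_0$ by a factor of $\textrm{sgn}(\sigma)$, so when $C$ contains the all-1 codeword only even elements of $\textrm{Aut}(C)$ lift directly and odd ones must be composed with a Klein-flip compensation, whereas when $(1\cdots1)\notin C$ the $\chi_0$ sign is not a class invariant and all of $\textrm{Aut}(C)$ lifts unconditionally. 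Carefully bookkeeping this dichotomy — i.e. verifying that in either case the compensations combine to deliver exactly one $\phi$-coset per $\sigma$ — is where the technical work lies and is what ultimately produces the specific power of $2$ in the denominator of the stated formula.
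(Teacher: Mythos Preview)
Your overall strategy---projecting a color-permuting automorphism to its induced permutation $\sigma\in\textrm{Aut}(C)$ and identifying each non-empty fiber with a coset of the color-fixing automorphism group---is precisely the argument the paper sketches in the sentence following the lemma, so the approaches coincide.

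The gap is in your final bookkeeping. You claim that $|\textrm{Aut}_{\textrm{new}}(G)|=|\textrm{Aut}(C)|\cdot 2^{N-2k-a'}$ together with the ``convention translation'' $a=1-a'$ recovers the stated formula; but substituting gives $2^{N-2k-(1-a)}|\textrm{Aut}(C)|=2^{N-2k-1+a}|\textrm{Aut}(C)|$, which differs from the target $2^{N-2k-a}|\textrm{Aut}(C)|$ by a factor of $2^{2a-1}$. This is not a convention issue: the discrepancy is $1/2$ when $a=0$ and $2$ when $a=1$, so no relabeling repairs it. Concretely, for the $N$-cube ($k=0$, $a=0$) your product is $2^{N-1}N!$, while the lemma and the paper's remark immediately preceding it give $2^{N}N!$.

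The underlying issue is twofold. First, you are quoting the bipartition-respecting count $2^{N-2k-a'}$ from the corollary after Theorem~\ref{thm:main}, whereas the lemma's phrase ``ignoring height assignments'' evidently drops the boson/fermion coloring as well (else the $N$-cube value would be $2^{N-1}N!$); at the $1$-level the color-fixing group has order $2^{N-2k}$ uniformly. Second, your Klein-flip ``compensation'' does not do what you need: if the bipartition is respected, the Klein flip exchanges bosons and fermions and is therefore not an admissible map, so odd $\sigma$ simply fail to lift rather than lifting via compensation; if the bipartition is ignored, there is a single equivalence class by the discussion after Theorem~\ref{thm:main} and no compensation is needed at all. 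In neither reading does ``exactly one $\phi$-coset per $\sigma\in\textrm{Aut}(C)$'' combine with the corollary's value to produce the extra factor $2^{-a}$ that the lemma asserts when $(1\cdots1)\in C$. You would need a separate argument that, even at the $1$-level, the projection to $\textrm{Aut}(C)$ has index-$2$ image in that case---something your proposal does not supply.
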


In other words, any permutations that correspond to symmetries of the code will extend naturally to automorphisms of the Adinkra. Conversely, if a permutation of the bit-string is not in the automorphism group of the code, then trivially it cannot be an automorphism of the Adinkra. Note that lemma \ref{lem:app} applies trivially to $N$-cube Adinkras, for which $k = 0$ and $C = (00\ldots0)$, hence $|\textrm{Aut}(C)| = N!$. 

In \cite{counter}, two pairs of equivalent but non-isomorphic Adinkras are presented, each pair having having the same number of vertices at each height. The first pair comprises two height 3, $(5,1)$ Adinkras, isomorphic (up to a permutation of edge colors) to those below. Note that the second Adinkra is identical to the first Adinkra of Figure \ref{fig:cert1}.

\begin{center}
 \includegraphics[width=9.5cm]{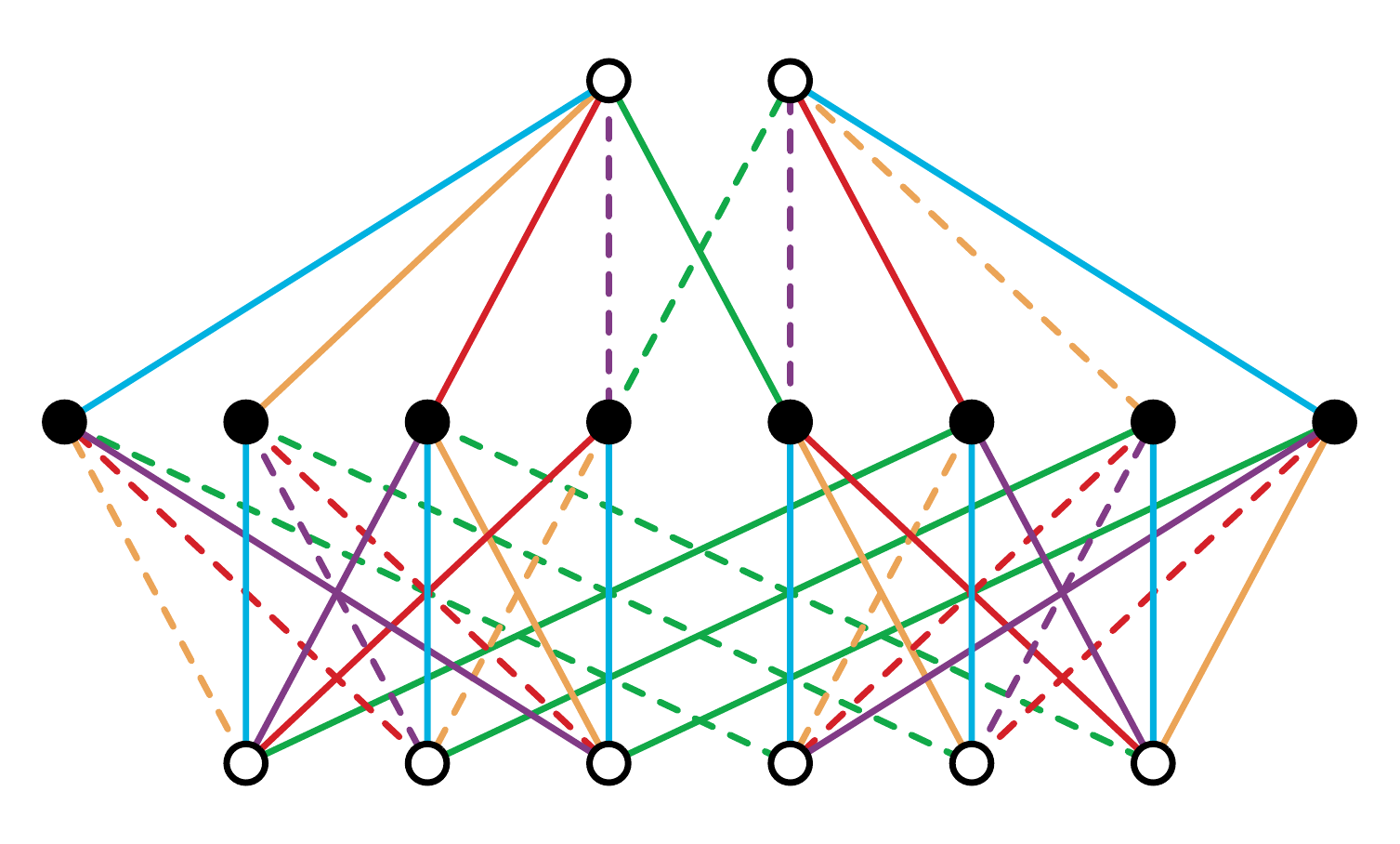} \hspace{1cm} \includegraphics[width=9.5cm]{ex1_2.pdf}
\end{center}

Applying the methods of Section \ref{sec:iso} to this pair proceeds as in the analysis of the Adinkras of Figure \ref{fig:cert1}. As in that example, after ordering the edge colors from blue to green, each Adinkra has associated code $(11110)$. Following the methods of Section \ref{sec:main}, we see that the top two nodes of the first Adinkra belong to different orbits of the corresponding 1-level Adinkra, whereas the top two nodes of the second Adinkra belong to the same orbit. Hence the two Adinkras are trivially distinguished. In particular, following the conventions in the equations of \ref{ex:orbits}, the $\mu$ values of these two Adinkras, labeled by $G$ and $H$ respectively, are:

\begin{align}
 &\mu_G(3,1) = 1 & & \mu_H(3,1) = 2 \\
 &\mu_G(3,2) = 1 & & \mu_H(3,2) = 0 \\
 &\mu_G(2,1) = \mu_G(2,2) = 4 & & \mu_H(2,1) = \mu_H(2,2) = 4 \\
 &\mu_G(1,1) = 3 & & \mu_H(1,1) = 2 \\
 &\mu_G(1,2) = 3 & & \mu_H(1,2) = 4
\end{align}

\noindent Hence $\textrm{cert}_G \ne \textrm{cert}_H$, and the two Adinkras are in different isomorphism classes. Also, since the `orbit spread', the number of vertices in each orbit at each height, has a different character in each Adinkra regardless of the ordering of the orbits, these two Adinkras will still remain in different isomorphism classes if edge-color permutations are allowed.

The second pair of Adinkras in \cite{counter} comprises two height 3, $(6,2)$ Adinkras, isomorphic to the pair displayed below.

\begin{center}
 \includegraphics[width=9.5cm]{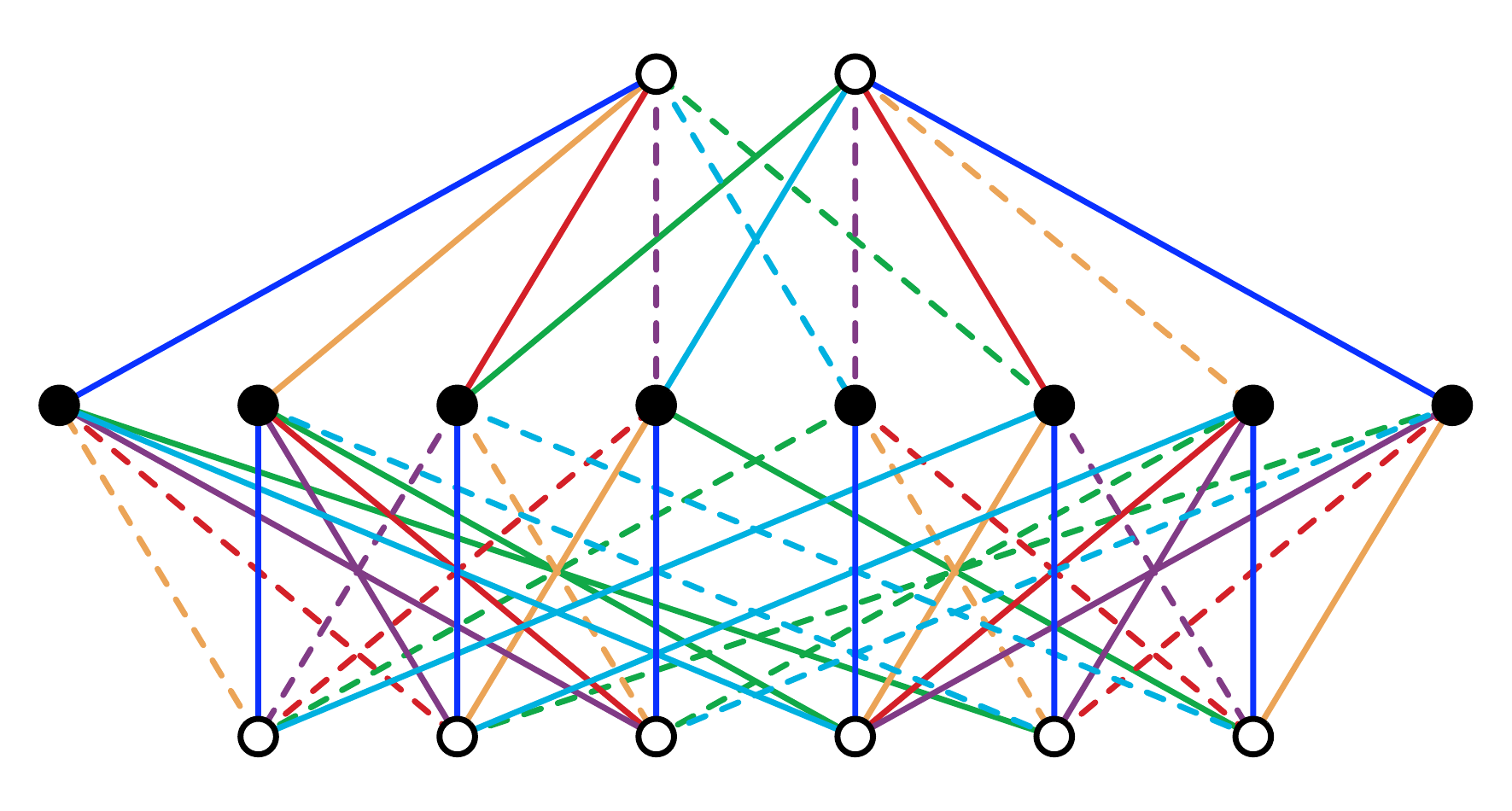} \hspace{1cm} \includegraphics[width=9.5cm]{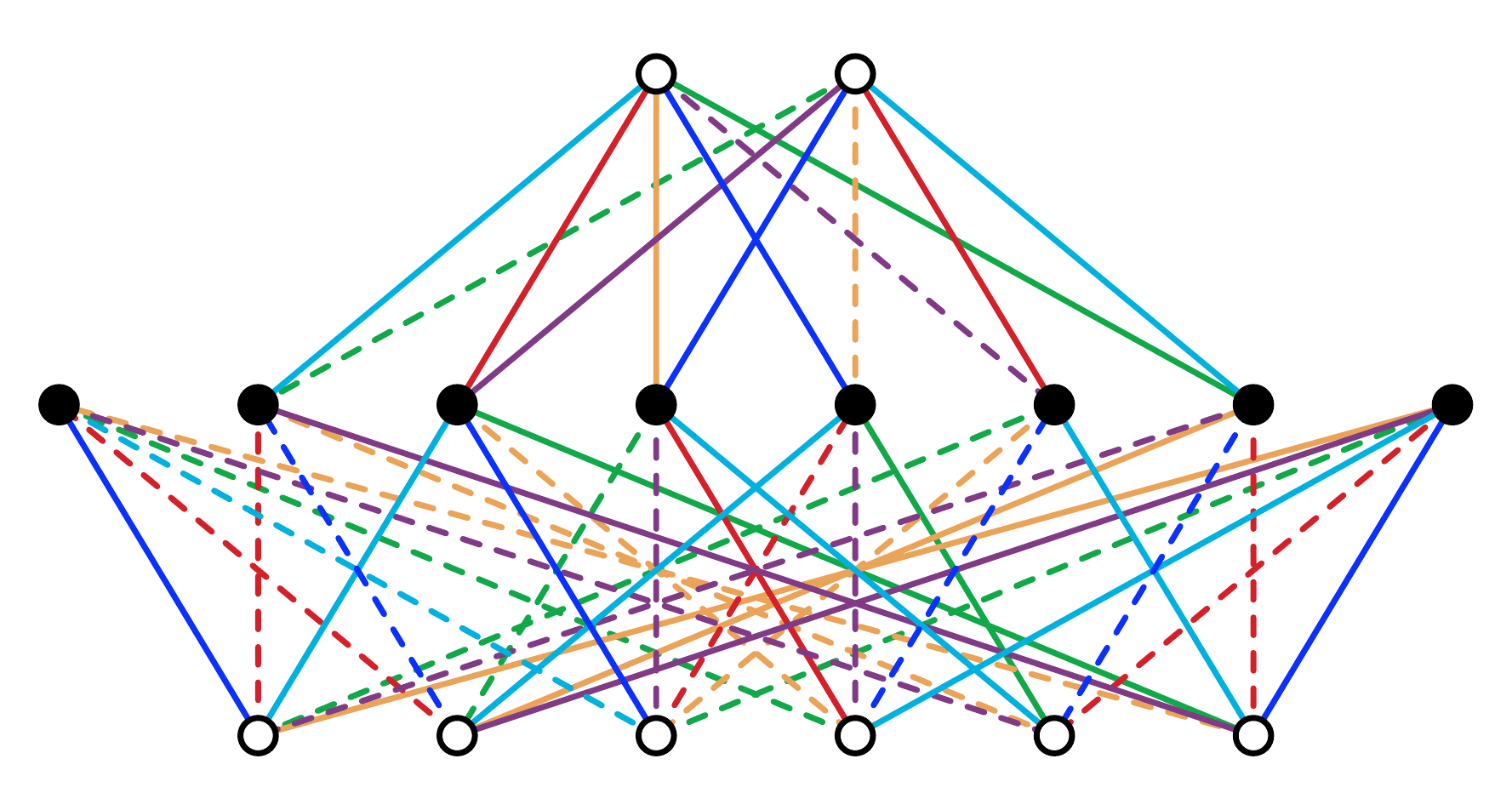}
\end{center}

In this case, ordering the edge colors from dark blue to green, as for the top left node of the first Adinkra), both Adinkras have the associated code generated by
$\begin{pmatrix}
 1\,1\,1\,1\,0\,0\\[-1mm]
 0\,0\,1\,1\,1\,1\\[-1mm]
\end{pmatrix}$, and hence each have four orbits in the automorphism group of the corresponding 1-level Adinkras.
The top two nodes of the first Adinkra are connected via the length-2 paths $(001001)$ and $(000110)$, having odd inner product with each of the codewords in the generating set above. Hence these nodes are in different orbits of the automorphism group. Conversely, the top two nodes of the second Adinkra are connected via the length-2 paths $(110000)$, $(001100)$ and $(000011)$, having even inner product with each of the codewords. Hence they in the same orbit. As a result, the certificates described in Section \ref{sec:iso} are different for each Adinkra, hence they are in different isomorphism classes. Again, following the conventions in the equations of \ref{ex:orbits}, we obtain $\mu$ values of:

\begin{align}
 &\mu_G(3,1) = \mu_G(3,2) = 1 & & \mu_H(3,2) = 2 \\
 &\mu_G(2,i) = 2, \quad \forall \; i \in \left[4\right] & & \mu_H(2,i) = 2, \quad \forall \; i \in \left[4\right] \\
 &\mu_G(1,1) = \mu_G(1,2) = 1 & & \mu_H(1,2) = 0 \\
 &\mu_G(1,3) = \mu_G(1,4) = 2 & & \mu_H(1,1) = \mu_G(1,3) = \mu_G(1,4) = 2
\end{align}

\noindent Hence, as in the previous example, $\textrm{cert}_G \ne \textrm{cert}_H$, and moreover the Adinkras remain in different isomorphism classes if edge-color permutations are allowed.

\newpage

\def\rasp{\leavevmode\raise.45ex\hbox{$\rhook$}}

\end{document}